\documentclass[envcountsame,orivec]{llncs}
\pdfoutput=1
\usepackage[english]{babel}
\newcommand{\famille}[3]{(#1_{#2})_{#2\in#3}}
\newtheorem{fact}[theorem]{Fact}
\newcommand{\N}{\mathbb N}
\newcommand{\concept}[1]{{\bf\emph{#1}}}
\newcommand{\set}[2]{\left\{#1\ \left|\ \vphantom{#1} #2\right.\right\}}
\newcommand{\setp}[1]{\left(#1\right)}              
\newcommand{\norme}[1]{\left\|#1\right\|}

\renewcommand{\le}{\leqslant}
\renewcommand{\ge}{\geqslant}
\renewcommand{\leq}{\leqslant}
\renewcommand{\geq}{\geqslant}
\renewcommand{\phi}{\varphi}
\renewcommand{\epsilon}{\varepsilon}
\renewcommand{\setminus}{\smallsetminus}

\usepackage{comment,multicol,multirow}
\usepackage{amsfonts,amssymb,amsmath}
\usepackage[all]{xy}
\usepackage{graphicx,xcolor}
\graphicspath{{figures/}}
\usepackage{pgf,tikz,ifthen}
\usetikzlibrary{trees}
\usepackage[vlined,boxruled]{algorithm2e}
\dontprintsemicolon
\usepackage{etex,picins}


\newcommand{\probaSymbol}{\mathbb P}
\newcommand{\esperanceSymbol}{\mathbb E}
\newcommand{\Proba}[1]{\probaSymbol\left(#1\right)}
\newcommand{\Esperance}[1]{\esperanceSymbol\left(#1\right)}
\newcommand{\Psachant} [2]{\Proba{#1\left|\,\vphantom{#1} #2\right.}}     
\newcommand{\Esachant} [2]{\Esperance{#1\left|\,\vphantom{#1} #2\right.}}
\newcommand{\EsperanceFrom}[2]{{\esperanceSymbol}_{#1} \left(#2\right)}
\newcommand{\ProbaFrom}[2]{\probaSymbol_{#1} \left(#2\right)}

\newcommand{\oscil}{\triangleright}
\newcommand{\fixed}{\square}
\newcommand{\snd}{\mathop{\mathrm{snd}}}
\newcommand{\attr}{\mathop{\mathrm{attr}}}
\newcommand{\paint}{\mathop{\mathrm{paint}}}

\newcommand{\Energy}{E}
\newcommand{\Edual}{\widehat{E}}

\newcommand{\Cells}{\mathcal{V}}
\newcommand{\Edges}{\mathcal{E}}
\newcommand{\Graph}{\mathcal{G}}
\newcommand{\Trees}{\mathcal{T}}
\newcommand{\Particles}{\mathcal{P}}
\newcommand{\States}{\mathcal{Q}}
\newcommand{\Pos}{Pos}
\newcommand{\neigh}{\mathcal{N}}
\newcommand{\Att}{A}
\newcommand{\SetAtt}{\mathcal{A}}
\newcommand{\cdual}{\hat{c}}
\newcommand{\ddual}{\hat{\delta}}

\newcommand{\particule}[2]{
  \ifnum\x=#1
  \ifnum\y=#2
  edge from parent[]
  node[color=red,left=2mm,pos=0.7]{}
  \fi
  \fi
}

\setlength{\textheight}{19cm}
\setlength{\textwidth}{14cm}
\setlength{\oddsidemargin}{1.5cm}
\setlength{\evensidemargin}{1.5cm}

\begin{document}

\title{Stochastic Minority on Graphs}

\author{Jean-Baptiste Rouquier\inst{1,2} \and Damien Regnault\inst{1,2} \and \'Eric Thierry\inst{1,2,3}}
\institute{
Universit\'e de Lyon, \'Ecole Normale Sup\'erieure de Lyon, LIP, France \and
Institut des Syst\`emes Complexes Rh\^one-Alpes (IXXI), France \and
CNRS, LIAFA, Universit\'e Paris 7, France}

\date{}

\maketitle

\begin{abstract}
 Cellular automata have been mainly studied on very regular graphs carrying the
 vertices (like lines or grids) and under synchronous dynamics (all vertices update
 simultaneously). In this paper, we study how the asynchronism and the graph
 act upon the dynamics of the classical Minority rule. Minority has
 been well-studied for synchronous updates and is thus a reasonable choice to begin with.
 Yet, beyond its apparent
 simplicity, this rule yields complex behaviors when asynchronism is
 introduced.
We investigate the transitory part as well as the asymptotic behavior of the dynamics under full asynchronism (also called sequential:
 only one random vertex updates at each time step) for several types of graphs.
 Such a comparative study is a first step in understanding how the asynchronous dynamics is linked to the topology (the graph).

Previous analyses on the grid~\cite{RST-TCS2009,RST-DMTCS2010} have observed that Minority seems to induce fast stabilization.
We investigate here this property on arbitrary graphs using tools such as energy, particles and random walks.
We show that the worst case convergence time is, in fact, strongly dependent on the topology.
In particular, we observe that the case of trees is non trivial.
\end{abstract}

\section{Introduction}
In this paper, we investigate the random process Minority: each vertex of a graph is characterized by a state 0 or 1. At each time step (the time is discrete), one random subset of vertices is drawn. These vertices are updated and switch to the minority state in theirs neighborhood.

Similar random processes appeared in the literature and concern different fields of applications. For example, several studies focus on the emergence of cooperation in the iterated prisoner's dilemma game~\cite{Kit93,DGGIJ02,MR06}: used as a simple social model, this field helped determine which ingredients are needed to foster the emergence of cooperative behaviors. The rock-paper-scissors game was used to model evolution of colonies of bacteria~\cite{KRFB02}. In physics, the Ising model (a kind of Majority rule) was introduced to study ferromagnetism~(original paper is \cite{Ising25}, see for instance Chapter 10 of~\cite{stat-mechs-for-Ising} for a quick introduction): states represent the orientation of spin. Anti-ferromagnetism is studied by using a kind of Minority rule~\cite{MCT74}. Recent works~\cite{BFR09,BFR10} use stochastic Minority to model the formation of quasi-crystals.

Initially, our interest in studying Minority process comes from the field of cellular automata (CA). CA can be seen both as a model of computation with massive parallelism and as a model for complex systems in nature. They have been studied with various fields of applications like parallel/distributed computing, physics, biology or social sciences. Most of the work regarding CA assumes that their dynamics is deterministic and synchronous (all vertices update simultaneously) and that the graph is very regular (usually a line or a 2D or 3D grid). Such assumptions can be questioned with regard to the applications and the real life constraints. Dynamics where those assumptions are perturbed have been far less studied and their analysis is very challenging. Here is a non-exhaustive list of related works about CA in literature, the models are {\em stochastic CA} since the perturbations are usually introduced as stochastic processes:
\begin{itemize}
\item Perturbation of the updating rule: resilience to random errors~\cite{Gacs86,Gacs01,GR88,Toom74,Toom80,Toom90},
  mean field analysis of general Markovian rules~\cite{BBK06}.
\item Perturbation of the synchronism (i.e. of the updating scheme): empirical studies about resilience to
  asynchronism~\cite{Ingerson84,Huberman93,Ber94,Kanada94,Sch99}, mathematical
  analysis of some 1D CA under full asynchronism (only one random vertex updates
  at each time step) or under $\alpha$-asynchronism (each vertex updates
  independently with probability~$\alpha$)~\cite{FMST06,FRST06,Fuks04}.
\item Perturbation of the graph (the topology of cells): empirical studies~\cite{Fat04ACRI,FM04,RM08}, gene regulatory networks~\cite{BE93,SM03}.
\end{itemize}

Previous analyses focus on the effects of asynchronism on 2D Minority with Von Neumann neighborhood~\cite{RST-TCS2009} and Moore neighborhood~\cite{RST-DMTCS2010}. In this paper, we choose to investigate how the graph acts upon the dynamics under asynchronous updates.
(Note that this is a special case of Interacting Particle Systems~\cite{liggett2004interacting}.)
We focus on {\em Stochastic Minority} where the Minority rule applies to two possible states~$\{0,1\}$ and under full asynchronism (at each time step, only one random vertex is updated with the uniform distribution). This simple rule already exhibits a surprisingly rich behavior as observed in~\cite{RST-TCS2009,RST-DMTCS2010} where it is studied for vertices assembled into a torus. Such behaviors may appear because Minority is a CA with \emph{negative} feedback. The evolution of CA with \emph{positive} feedback can be described by a bounded decreasing function over time \cite{GM90}. Thus, the difficulty of analyzing the Minority rule (negative feedback) must not be confounded with the difficulty of analyzing the Majority rule (positive feedback).
Some related stochastic models like Ising models or Hopfield nets have been studied under asynchronous dynamics (e.g. our model of asynchronism corresponds to the limit when temperature goes to~0 in the Ising model). These models are acknowledged to be harder to analyze when it comes to arbitrary graphs~\cite{Ist00,MCT74} or negative feedbacks~\cite{Roj96}.

Let us stress that we mainly study stochastic Minority on several kinds of graphs: trees, cliques, bipartite graphs and compare them.
General results are not precise enough to describe its behavior in the present study.
One of the aim of this paper is to show that Minority's behavior highly depends on the topology of the graph.
Our paper focus on particular classes of graphs. It is a first step and
the reasonings may prove to
be helpful to study future applications of Minority (as in~\cite{BFR09,BFR10})
and our results complete some previous results about Minority
\cite{RST-TCS2009,RST-DMTCS2010}.

Here is a list of previous claims about Stochastic
Minority in the literature, as well as some insights provided by the present paper.

\noindent \textbf{Short introduction of stochastic Minority behaviors:} Figure \ref{fig:experiments} shows Minority on a $2D$ grid under three different dynamics:
\begin{itemize}
\item the \emph{$\alpha$-synchronous} dynamics where each cell has a probability $\alpha$ to be updated independently from the others at each time step;
\item the \emph{synchronous} dynamics where all cells are updated at each time step ($\alpha$-asynchronous dynamics for $\alpha=1$);
\item the \emph{fully asynchronous} dynamics where only one cell, randomly and uniformly chosen, is updated at each time step (it can be regarded as, and often behaves as the limit of the $\alpha$-asynchronous dynamics when $\alpha$ tend to $0$~\cite{FRST06,RM09}).
\end{itemize}

\begin{figure}[htbp]
\centerline{\includegraphics[width=1.08\textwidth]{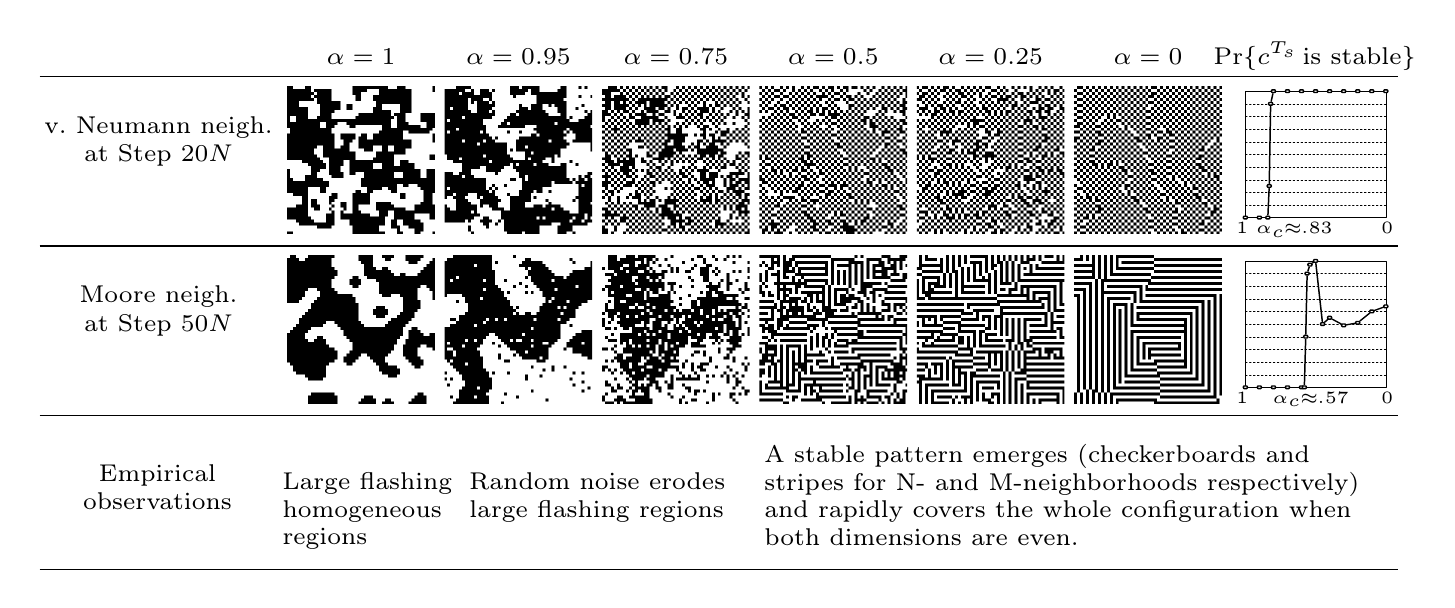}}
\caption{Stochastic Minority under different $\alpha$-asynchronous dynamics with $N_{50}=50\times50$ cells arranged in a $2D$ grid with periodic boundary condition (i.e. torus). The last column gives, for $\alpha\in[0,1]$, the empirical probability that an initial random configuration converges to a stable configuration before time step $t_s\cdot N_{50}$ where $t_s = 1000$ and $t_s = 2000$ for von Neumann and Moore neighborhood respectively. The fully asynchronous dynamics is abusively designed by $\alpha=0$.}
\vspace*{-5mm}
\label{fig:experiments}
\end{figure}

\noindent
Depending on the value $\alpha$, Minority can exhibit two different behaviors. Experimentally, these phenomenon can be observed as a phase transition depending on $\alpha$ on the convergence time (see figure \ref{fig:experiments}):
\begin{itemize}
\item When $\alpha$ is almost $1$, there is a big homogeneous \emph{flashing} background with some random noise. By flashing, we mean that all the cells of the background are not in their minority state and since $\alpha$ is almost $1$, they switch their states at almost each time step. The few cells which are not updated create some random noise. Experimentally, the dynamics never reaches a stable configuration, but a highly improbable series of updated may lead to a stable configuration.
\item When $\alpha$ is almost $0$, different regions made of checkerboard patterns (von Neumann neighborhood) or stripes (Moore neighborhood) quickly appear. The only cells which may switch their state are along the borders between these regions. These borders evolve and they eventually stabilize. Experimentally, the dynamics reaches a stable configuration in polynomial time according to the size of the configuration.
\end{itemize}
No results were previously known for the $\alpha$-asynchronous dynamics. In section \ref{sec/phase-transition}, we study this phase transition in $1D$ and show that it is linked to directed percolation. The proof of this result (theorem \ref{th/percolation}) is short because most of the technical aspects were previously done in~\cite{R08perco} on a different automaton.

\begin{figure}
\begin{tabular}{p{1.7cm} p{1.56cm} p{1.56cm} p{1.56cm} p{1.56cm} p{1.56cm} p{1.56cm} p{1.56cm}}
\small & \centering{~$t=0N$} & \centering{$t=15N$} & \centering{$t=30N$} & \centering{$t=50N$} & \centering{$t=80N$} & \centering{$t=150N$} & $t=300N$\\
\raisebox{0.85cm}{\begin{tabular}{p{1.6cm}} \tiny von Neumann neighborhood \end{tabular}} &
\includegraphics[height=1.5cm]{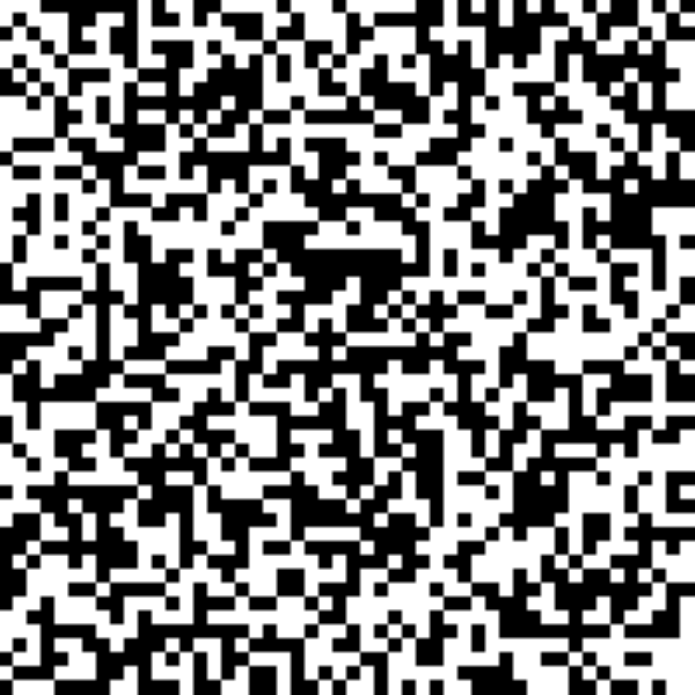} &
\includegraphics[height=1.5cm]{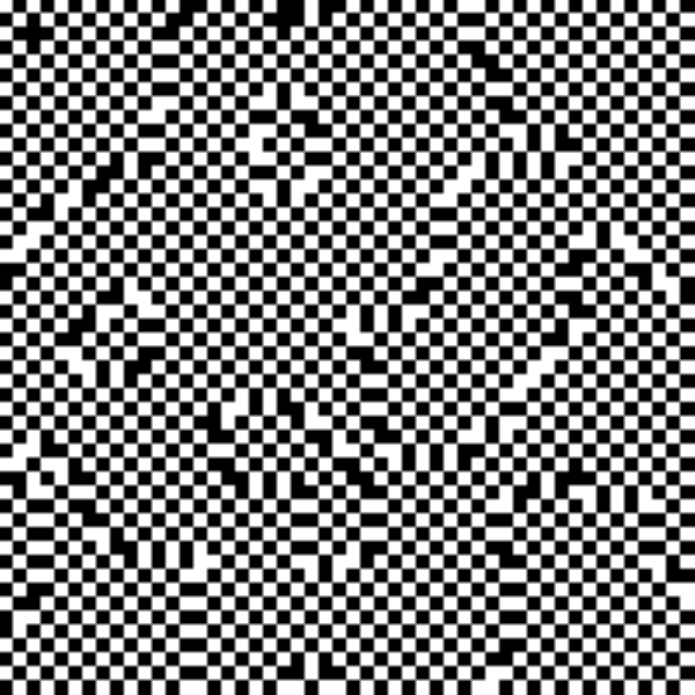} &
 \includegraphics[height=1.5cm]{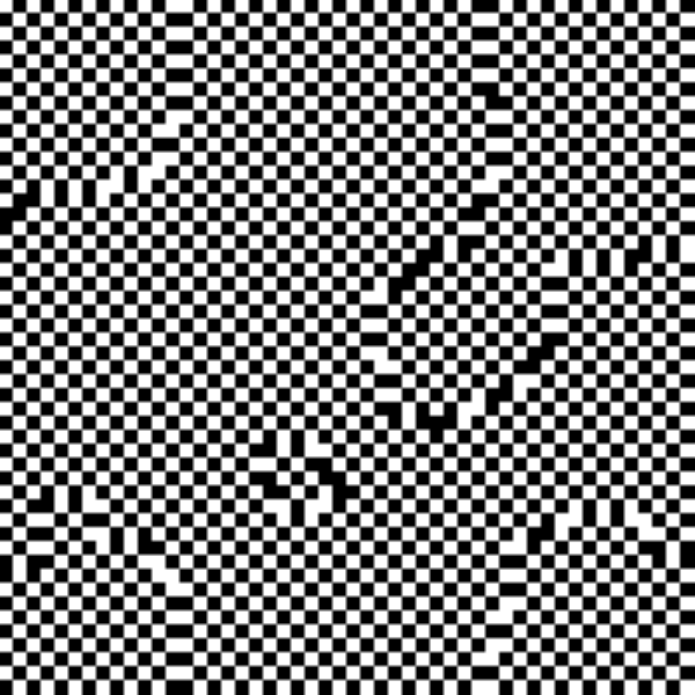} &
 \includegraphics[height=1.5cm]{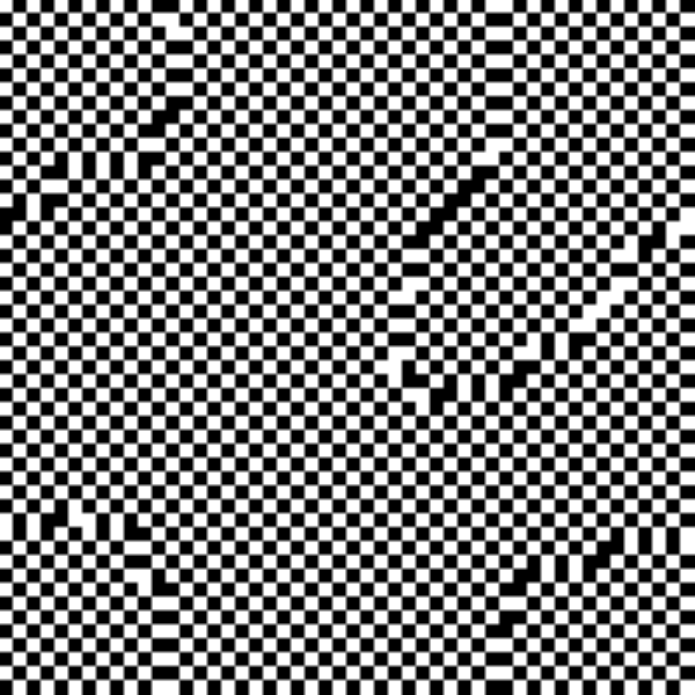} &
 \includegraphics[height=1.5cm]{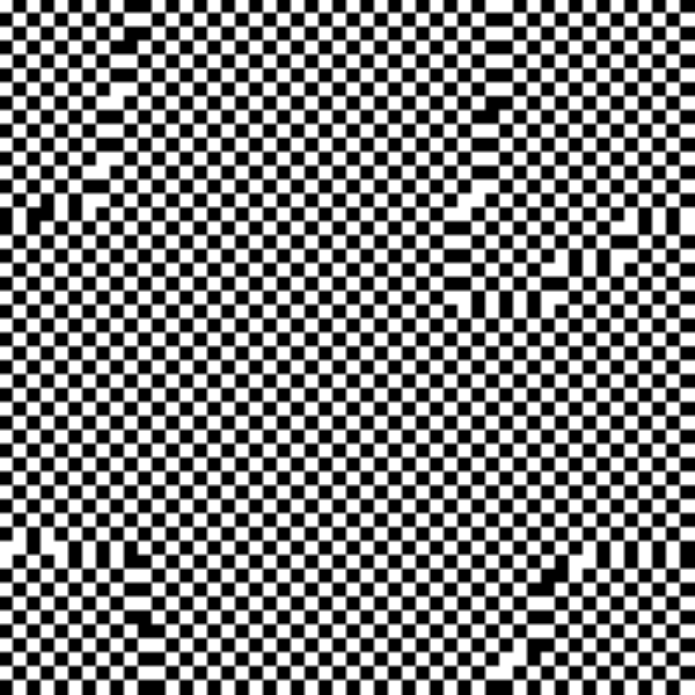} &
 \includegraphics[height=1.5cm]{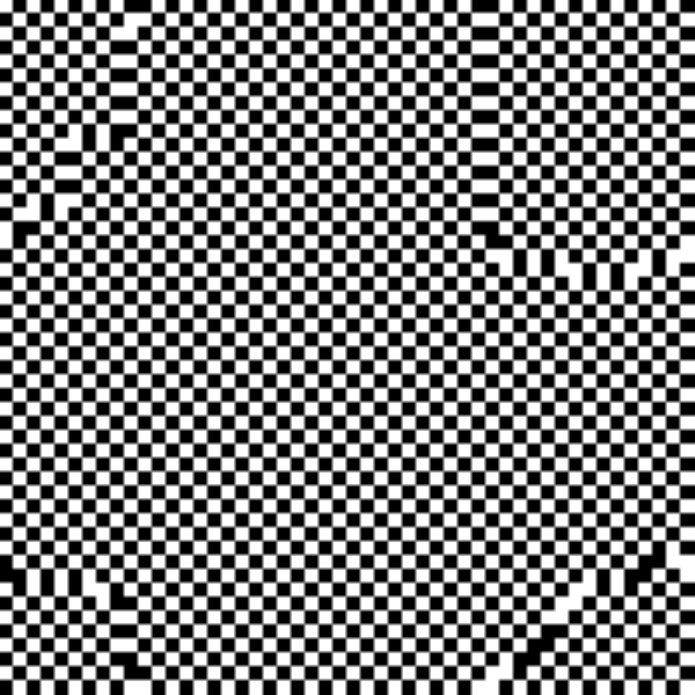} &
 \includegraphics[height=1.5cm]{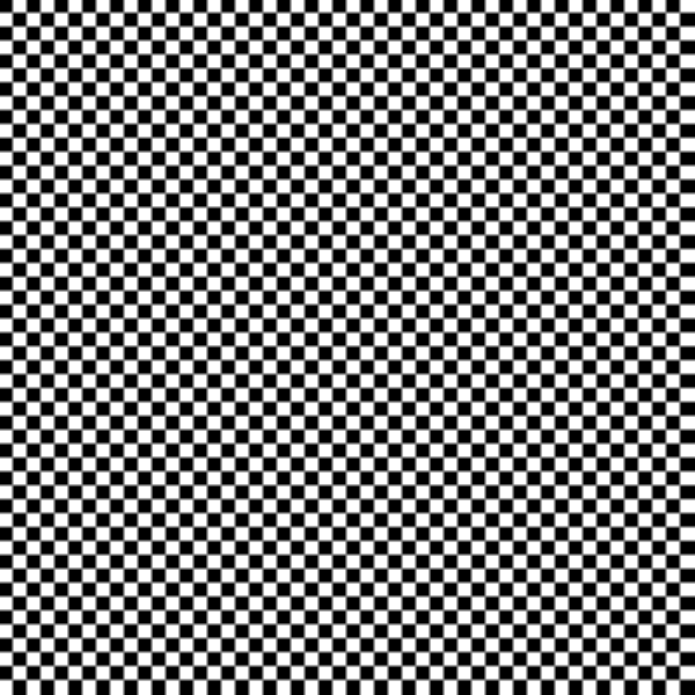} 
\\
\raisebox{0.85cm}{\begin{tabular}{p{1.5cm}} \tiny Moore neighborhood \end{tabular}} &
 \includegraphics[height=1.5cm]{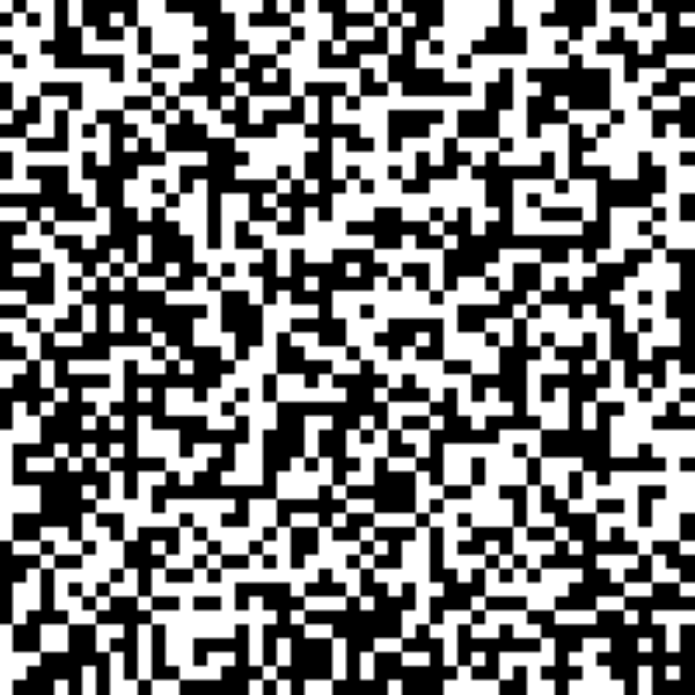} &
 \includegraphics[height=1.5cm]{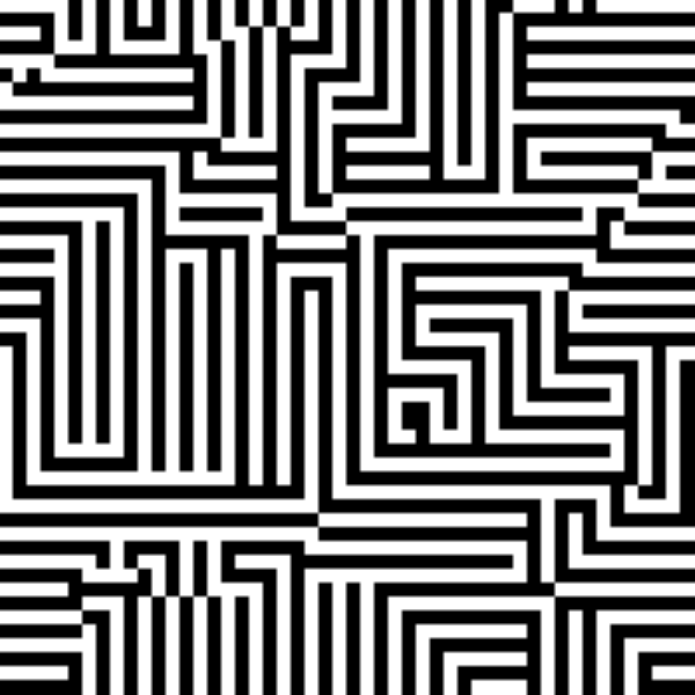} &
 \includegraphics[height=1.5cm]{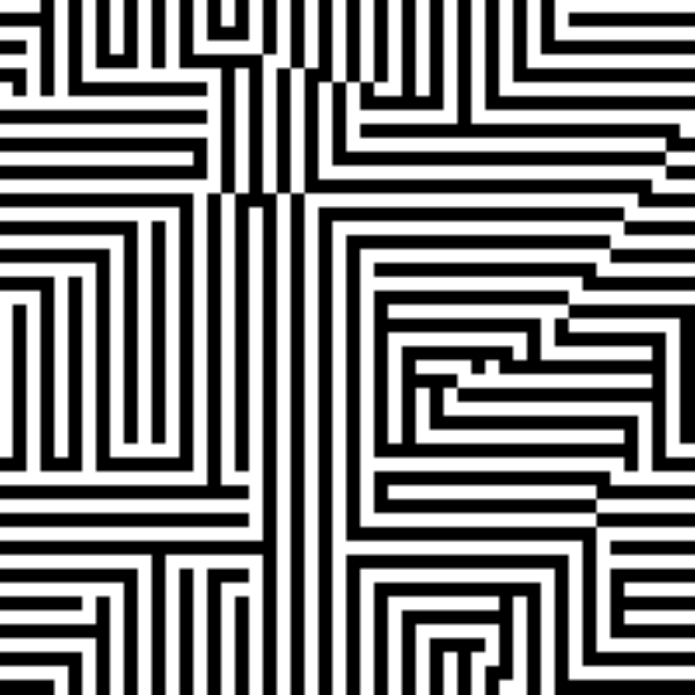} &
 \includegraphics[height=1.5cm]{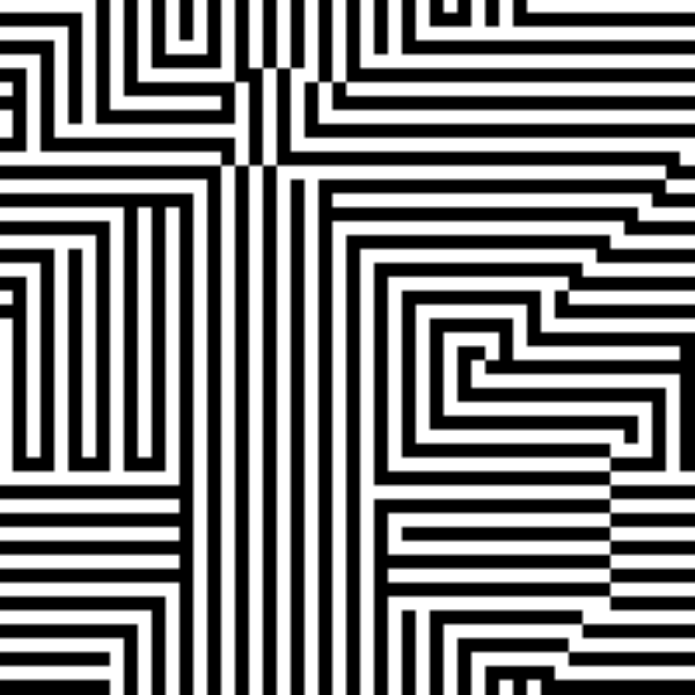} &
 \includegraphics[height=1.5cm]{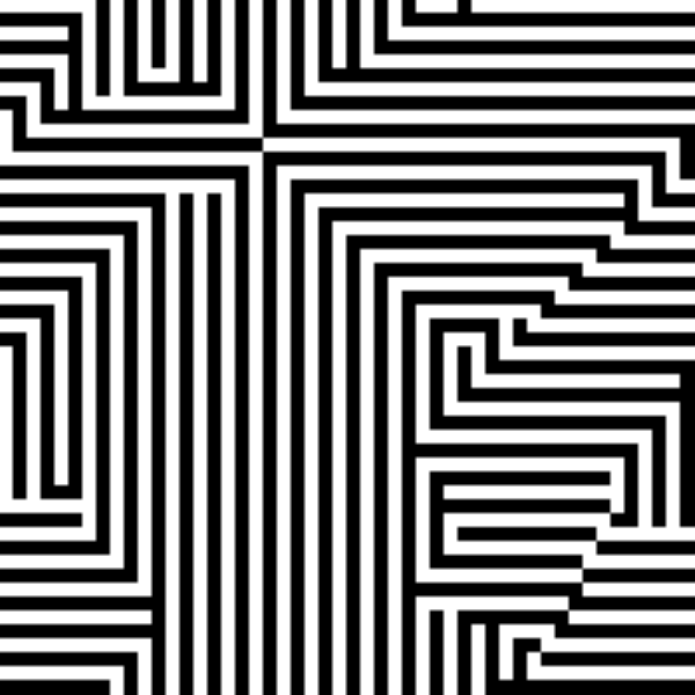} &
 \includegraphics[height=1.5cm]{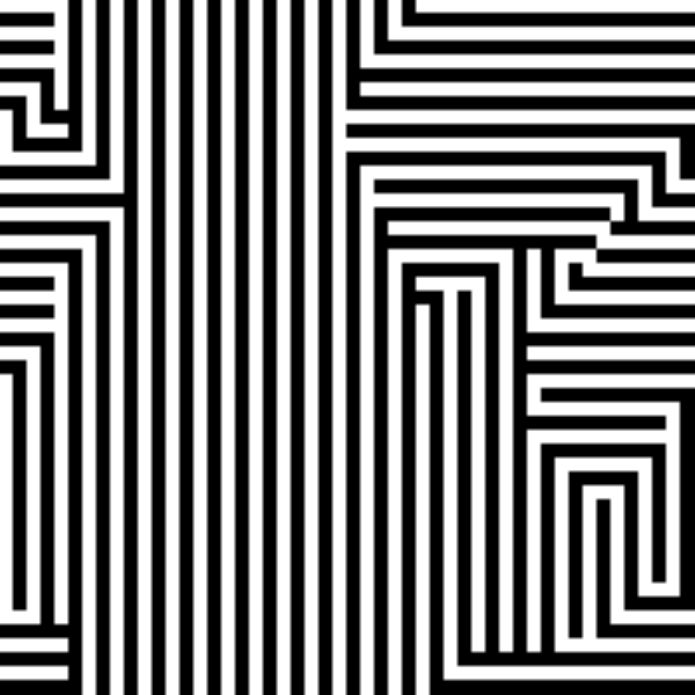} &
 \includegraphics[height=1.5cm]{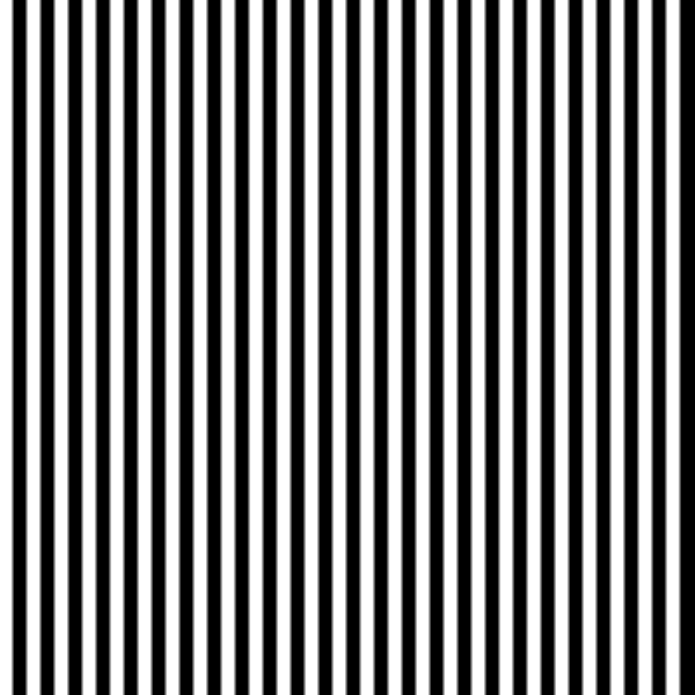} 
\\
\end{tabular}
\caption{Minority under fully asynchronous dynamics on $2D$ grids with periodic boundary condition, von Neumann and Moore neighborhood~\cite{RST-TCS2009,RST-DMTCS2010}.}
\label{vn_moore}
\end{figure}

The previous studies~\cite{RST-TCS2009,RST-DMTCS2010} focuses on the fully asynchronous dynamics which is similar to the $\alpha$-asynchronous dynamics when $\alpha$ is almost $0$ (see Figure \ref{vn_moore}). These papers analyse the start and the end of a classical execution on stochastic Minority on a $2D$ grid with von Neumann and Moore neighborhood. Indeed, they had to study separately the formation of the regions (beginning) and the evolution of the borders (ending). 

\noindent \textbf{Local interactions:} under fully asynchronous dynamics, patterns, which depend on the topology of the graph, quickly appear. In theorems $5$ and $7$ of~\cite{RST-TCS2009} and theorem $9$ of~\cite{RST-DMTCS2010}, the authors show that this phenomenon is due to \emph{local interactions} and occurs in polynomial time according to the size of the configuration. When these papers were written, the authors did not need to be more precise but they conjectured that these patterns appear in linear time. Here by studying cliques (where long range interactions do not exist), we show that the dynamics stabilizes in linear time. This result supports the previous conjecture.

\noindent \textbf{Long range interactions:} the evolution of borders between different regions often implies interactions between cells which could be arbitrarily far away in the graph. This long range interactions are harder to analyze and major differences appear between the von Neumann and Moore neighborhood on the $2D$ grid. This remark leads us to analyze different topologies. Here we solve a conjecture made in~\cite{RST-TCS2009}. We exhibit biased trees were Minority behaves as a biased random walk and need an exponential time to converge toward a stable configuration. In the previously considered topologies, Minority always converges in polynomial time under fully asynchronous dynamics.

\noindent \textbf{Capacity of simulation:} one important aspect of Minority is the diversity of phenomena embedded in this rule. On $\alpha$-asynchronous dynamics, there is a phase transition between two different behaviors. In section \ref{sec/phase-transition}, we establish a link between this phase transition and \emph{directed percolation}. Previous studies~\cite{RST-TCS2009,RST-DMTCS2010} have shown that the fully asynchronous dynamics occurs in two steps. We show here (Section~\ref{sec/decr-energy-cliq}) that the first step acts as a \emph{coupon collector} on cliques. When long range interactions occur, the dynamics may behave as competition between $2D$ regions, different kinds of \emph{random walks}. Other behaviors may be encoded on stochastic Minority. It is surprising that a simple rule may produce such different behaviors from random configurations on regular topologies. 

\noindent \textbf{Bipartite graphs:} one aim of this paper is to generalize tools previously designed to study Minority on grid. Basic tools may be generalized to any graph but we realized that advanced tools can only be generalized to bipartite graphs. This leads to an explanation between the difference of behaviors previously observed on $2D$ grids. Note that even if our tools are helpful to analyze bipartite graphs, various and complicated behaviors may already appears in this class of graphs.

\section{Model}
In this paper we consider Stochastic Minority on arbitrary undirected graphs.

\begin{definition}[Configuration]
  Let $\Graph=(\Cells,\Edges)$ be a finite undirected graph with
  vertices~$\Cells$ and edges~$\Edges$\,. $\States=\{0,1\}$ is the set of states
  ($0$ stands for white and $1$ stands for black). The vertices are also called
  cells and $N:=|\Cells|$ denotes their number. The \emph{neighborhood}
  $\neigh _i$ of a vertex $i$ is the set of its adjacent vertices (including
  itself). A \emph{configuration} is a function $c : \Cells \rightarrow \States$
  ($c_i$ denotes the state of vertex~$i$ in configuration~$c$).
\end{definition}

\begin{definition}[Stochastic Minority]
  We consider the following dynamics $\delta$ that associates with each
  configuration $c$ a random configuration $c'$ obtained as follows: a vertex
  $i\in \Cells$ is selected uniformly at random (we say that vertex~$i$ is
  \emph{fired}) and its state is updated to the minority state among its
  neighborhood (no change in case of equality), while all the other vertices remain
  in their current state. Formally:
  $$(\delta(c))_i = \left\{ \begin{array}{cl}
    1 & \textup{if $ \sum_{j \in \neigh_i} c_j < \frac{|\neigh_i|}{2}$
      \quad or\quad $\sum_{j \in \neigh_i} c_j = \frac{|\neigh_i|}{2}$ and $c_i = 1$}\\[1mm]
    0 & \textup{if $\sum_{j \in \neigh_i} c_j > \frac{|\neigh_i|}{2}$
      \quad or\quad $\sum_{j \in \neigh_i} c_j = \frac{|\neigh_i|}{2}$ and $c_i = 0$}
  \end{array}\right.$$
and $(\delta(c))_k = c_k$ for all $k \neq i$.
In a configuration, a vertex is said to be \emph{active} if its state changes when the vertex is fired.
The random variable $c^t$ denotes the configuration obtained from an initial
configuration $c$, after $t$ steps of the dynamics: $c^0=c$ and $c^t =\delta^t(c)$ for all $t \geq 1$.
(The notation $\delta^t(c)$ means $\delta\circ...\circ\delta(c)$, $t$ times).
\end{definition}

\begin{definition}[Attractors and limit set]
  For the dynamics induced by~$\delta$, a set of configurations $\Att$ is an
  {\em attractor} if for all $c,c' \in \Att$, the time to reach~$c'$ starting
  from~$c$ is finite almost surely.
  In the transition graph where vertices are all the possible configurations and arcs $(c,c')$ satisfy $\Proba{\delta(c)=c'}>0$,
  attractors are the strongly connected components with no arc leaving the component.
  The union of all attractors is denoted $\SetAtt$ and called the {\em limit set}.
\end{definition}

\begin{definition}[Convergence and hitting time]
  We say that the \emph{dynamics $\delta$ converges} from an initial
  configuration $c^0$ to an attractor $\Att$ (resp. the limit set $\SetAtt$) if
  the random variable $T=\min \set{t}{c^t \in \Att}$
  (resp. $T=\min \set{t}{c^t \in \SetAtt}$) is almost surely finite.

The variable~$T$ is a \emph{hitting time}.
\end{definition}
Since we only consider finite graphs, the dynamics $\delta$ always converges from any initial configuration to $\SetAtt$,
and $T$ is well-defined
(an exception is section~\ref{sec/phase-transition}, 
where we consider an infinite graph and discuss the convergence to an attractor).

\bigskip
A hitting time $T$ is defined for a given graph and a given initial configuration.
We are interested in the worst case (i.e. largest hitting time)
among all possible initial configurations and among all graphs of a given class of graphs.

\section{Tools}

\subsection{Energy, Potential}
As in the Ising model~\cite{MCT74} or in Hopfield networks~\cite{Roj96}, we
define a natural global parameter similar to an energy:
it counts the number of interactions between neighboring vertices in
the same state. This parameter will provide key insights into the evolution of the
system.

\begin{definition}[Potential]
The \emph{potential $v_i$} of vertex $i$ is the number of its neighbors (including itself) in the same state as itself. 
If $v_i \leq \frac{|\neigh_i|}{2}$ then the vertex is in the minority state and is
thus inactive; whereas, if $v_i > \frac{|\neigh_i|}{2}$ then the vertex is
active. A configuration $c$ is \emph{stable} if and only if for all vertex~$i\in
\Cells$, $v_i \leq \frac{|\neigh_i|}{2}$.
\end{definition}

\begin{definition}[Energy]
The \emph{energy} of configuration~$c$ is $\Energy (c) = \sum_{i \in \Cells} (v_i-1)$.
\end{definition}

The energy of a configuration is always non-negative.
There are configurations of energy~0 if and only if $\Graph{}$ is bipartite:
those stable configurations are the 2-colorings of~$\Graph$. More generally, we have:

\begin{proposition}[Energy bounds]
\label{prop:bounds:energy}
The energy $\Energy$ satisfies $2|\Edges|-2 C_{\max} \leq \Energy \leq 2|\Edges|$,
where $C_{\max}$ is the maximum number of edges in a cut of~$\Graph$.
\end{proposition}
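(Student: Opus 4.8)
The plan is to recognize the energy as twice the number of \emph{monochromatic} edges and then control that number through the size of the cut that the configuration induces.

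First I would unfold the definition of the potential. Since $i$ always lies in its own neighborhood $\neigh_i$ and trivially has the same state as itself, the term $v_i-1$ counts precisely the neighbors $j\neq i$ of $i$ with $c_j=c_i$. Summing over all vertices, $\Energy(c)=\sum_{i\in\Cells}(v_i-1)$ therefore counts each ordered pair of adjacent equal-state vertices once. Grouping these contributions by (unordered) edges, call an edge $\{i,j\}\in\Edges$ \emph{monochromatic} when $c_i=c_j$; such an edge is counted exactly twice (once from $i$, once from $j$), whereas an edge with $c_i\neq c_j$ contributes $0$. Writing $m(c)$ for the number of monochromatic edges, this gives the identity $\Energy(c)=2\,m(c)$.

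Next I would relate $m(c)$ to cuts. The non-monochromatic edges are exactly those joining a state-$0$ vertex to a state-$1$ vertex, so they form the cut associated with the partition of $\Cells$ into its $0$- and $1$-colored classes; their number is $|\Edges|-m(c)$. By definition of $C_{\max}$ this cut size is at most $C_{\max}$, and it is obviously at least $0$, hence $0\le |\Edges|-m(c)\le C_{\max}$, i.e. $|\Edges|-C_{\max}\le m(c)\le|\Edges|$. Multiplying by $2$ and using $\Energy(c)=2\,m(c)$ yields $2|\Edges|-2C_{\max}\le\Energy(c)\le 2|\Edges|$, as claimed.

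There is no serious obstacle here: the argument is a double counting followed by the definition of a maximum cut. The one point deserving care is the bookkeeping in the first step, namely checking that the inclusion of $i$ in $\neigh_i$ is exactly what the $-1$ cancels, so that the passage from the vertex sum to the edge count is an exact equality rather than an inequality. As a sanity check, the upper bound is attained by any constant configuration (every edge monochromatic), and the lower bound is attained precisely by the configurations realizing a maximum cut; in the bipartite case $C_{\max}=|\Edges|$ and the lower bound becomes $0$, recovering the earlier remark that energy-$0$ configurations exist exactly for bipartite graphs.
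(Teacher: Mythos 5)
Your proof is correct and follows essentially the same route as the paper's: the paper's one-line argument rests on the identity $\Energy = 2|\Edges| - 2|C|$ where $C$ is the cut induced by the configuration, which is exactly your identity $\Energy = 2m(c)$ with $m(c) = |\Edges| - |C|$, followed by bounding $|C|$ between $0$ and $C_{\max}$. You merely spell out the double-counting step that the paper leaves implicit.
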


\begin{proof}
The bounds are direct consequences of the definitions:
for any configuration, $\Energy=2|\Edges|-2|C|$ where $C$ is the cut $\set{\{i,j\} \in \Edges}{c_i=0\text{ and }c_j=1}$.
\end{proof}

As a consequence, computing the minimum energy for arbitrary graphs is NP-hard: it is equivalent to computing \textsc{maxcut}.

\begin{lemma}[Energy is non-increasing]
The energy is a non-increasing function of time and decreases each time a vertex $i$ with potential
stricly larger than $\frac{|\neigh_i|}{2}$ fires.
\end{lemma}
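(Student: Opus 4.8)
The plan is to reduce the statement to the cut reformulation of Proposition~\ref{prop:bounds:energy}: for every configuration $\Energy = 2|\Edges| - 2|C|$, where $C=\set{\{i,j\}\in\Edges}{c_i\neq c_j}$ is the set of bichromatic edges. Since $|\Edges|$ is fixed, monitoring $\Energy$ amounts to monitoring $|C|$, and $\Energy$ decreases exactly when $|C|$ grows. When a vertex $i$ is fired there are two cases: if $i$ is inactive its state does not change, so the whole configuration and $\Energy$ stay put; if $i$ is active it flips its state, and then only the edges incident to $i$ can switch between monochromatic and bichromatic. Thus the entire argument is a local computation at the fired vertex.

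The key step is this local count. Let $a$ be the number of neighbours of $i$ (other than $i$ itself) currently in the same state as $i$, and $b$ the number in the opposite state, so that $a+b=|\neigh_i|-1$ and, by the definition of the potential, $v_i=a+1$. Flipping $i$ turns its $a$ monochromatic incident edges into bichromatic ones and its $b$ bichromatic incident edges into monochromatic ones, so $|C|$ changes by $a-b$ and hence $\Energy$ changes by $\Delta\Energy=2(b-a)$. This single identity yields both claims once the sign of $b-a$ is controlled through the activation condition.

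It then remains to read the sign off the hypothesis. The activation condition $v_i>\frac{|\neigh_i|}{2}$ rewrites, via $v_i=a+1$ and $a+b=|\neigh_i|-1$, as $a\ge b$, whence $\Delta\Energy=2(b-a)\le 0$; combined with $\Delta\Energy=0$ when $i$ is inactive, this proves that $\Energy$ is non-increasing. Strict decrease amounts to $a>b$, i.e. $v_i>\frac{|\neigh_i|+1}{2}$. The point I expect to demand the most care is precisely the gap between these two thresholds: when $|\neigh_i|$ is even one has $a+b$ odd, so $a\neq b$ and activation already forces $a\ge b+1$ and a strict drop; but when $|\neigh_i|$ is odd the borderline value $v_i=\frac{|\neigh_i|+1}{2}$ (that is, $a=b$) is still active yet leaves $\Energy$ unchanged. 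Settling how the strict inequality ``$v_i>\frac{|\neigh_i|}{2}$'' interacts with the integrality of the potentials in this odd borderline case is the main obstacle, and the step where the hypothesis must be invoked with care.
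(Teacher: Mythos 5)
Your proof is correct and is essentially the paper's own argument: the paper's entire proof is the observation that firing an active vertex of potential $v_i$ turns its potential into $|\neigh_i|-v_i+1$ and changes the energy by $2|\neigh_i|-4v_i+2$, which is exactly your $\Delta\Energy=2(b-a)$ after substituting $a=v_i-1$ and $b=|\neigh_i|-v_i$; routing the computation through the bichromatic-edge count $|C|$ rather than through the potentials is only a cosmetic difference.

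Concerning the ``obstacle'' you flag at the end: you have in fact already settled it, and what you found is an imprecision in the lemma's wording, not a gap in your argument. With the stated threshold $v_i>\frac{|\neigh_i|}{2}$ (i.e.\ mere activity), the decrease is indeed not always strict: for $|\neigh_i|$ odd, the borderline value $v_i=\frac{|\neigh_i|+1}{2}$ (your case $a=b$) is active yet gives $\Delta\Energy=0$, and the paper's own formula yields the same conclusion. The surrounding text confirms that this is the intended reading: Remark~\ref{rem:odd-degree} claims guaranteed strict decrease only for vertices of odd degree, i.e.\ even $|\neigh_i|$, which is exactly where your parity argument forces $a\neq b$; and the proof of Corollary~\ref{prop:limit-set-iff-energy-cannot-decrease} relies precisely on the existence of energy-preserving, hence reversible, firings, which are exactly your $a=b$ firings. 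So the correct strict-decrease criterion is $v_i>\frac{|\neigh_i|+1}{2}$, as you computed, and your proof establishes everything the paper's proof does.
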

\begin{proof}
When an active vertex $i$ of potential $v_i$ fires, its potential becomes $|\neigh_i|-v_i+1$,
and the energy of the configuration becomes $\Energy + 2|\neigh_i|-4v_i+2$.
\qed
\end{proof}

\begin{corollary}
\label{prop:limit-set-iff-energy-cannot-decrease}
A configuration $c$ belongs to the limit set if and only if no sequence of updates would lead the energy to decrease,
i.e. if and only if $\forall t$, $\Psachant{\Energy(c^t)>\Energy(\delta(c^t))}{c^0=c}=0$.
\end{corollary}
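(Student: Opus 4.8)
The plan is to translate the statement into the transition graph defined above, where attractors are exactly the strongly connected components (SCC) with no arc leaving the component (call them \emph{terminal} SCCs), and to exploit that $\Energy$ is non-increasing along every arc. First I would recast the probabilistic condition combinatorially: since each configuration reachable from $c$ with positive probability, together with each of its possible successors, occurs with positive probability at the relevant time step, the hypothesis $\forall t,\ \Psachant{\Energy(c^t)>\Energy(\delta(c^t))}{c^0=c}=0$ is equivalent to the statement that no arc $c'\to c''$ of the transition graph with $c'$ reachable from $c$ satisfies $\Energy(c'')<\Energy(c')$. As $\Energy$ is non-increasing, this in turn means that every such arc is \emph{energy-preserving}, so every configuration reachable from $c$ has energy exactly $\Energy(c)$.

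For the direct implication, assume $c\in\SetAtt$, say $c$ belongs to a terminal SCC $\Att$. Every configuration reachable from $c$ stays in $\Att$ (no arc leaves it), and any two configurations of $\Att$ are mutually reachable; combining mutual reachability with the monotonicity of $\Energy$ forces $\Energy$ to be constant on $\Att$. Hence energy never decreases along a trajectory issued from $c$, which yields the announced probability~$0$.

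For the converse, assume (as reduced above) that energy is constant, equal to $\Energy(c)$, on the set $R$ of configurations reachable from $c$. I would first observe that $R$ is closed under the dynamics (successors of reachable configurations are reachable), so it suffices to prove that $R$ is strongly connected to conclude that it is a terminal SCC containing $c$. The key point is that every energy-preserving \emph{state-changing} firing is an involution: by the energy-change formula $2|\neigh_i|-4v_i+2$ of the previous lemma, firing an active vertex $i$ leaves $\Energy$ unchanged only when $|\neigh_i|$ is odd and $v_i=\frac{|\neigh_i|+1}{2}$, and in that case the potential of $i$ after the flip equals $|\neigh_i|-v_i+1=\frac{|\neigh_i|+1}{2}$ again, so $i$ is still active and firing it a second time restores the previous configuration. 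Consequently every non-loop arc inside $R$ can be reversed, reachability is symmetric on $R$, and $R$ is a single (maximal) SCC; being closed it is terminal, so $c\in\SetAtt$.

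I expect the converse to be the delicate part. Monotonicity of the energy by itself only guarantees that $c$ eventually \emph{flows into} a terminal SCC, and one genuinely needs the involutivity of the energy-neutral (oscillating) moves to exclude the scenario where $c$ reaches such a component without being able to return to it; this is precisely where the exact value $v_i=\frac{|\neigh_i|+1}{2}$ forcing the vertex to remain active after flipping does the work.
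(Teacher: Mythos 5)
Your proposal is correct and takes essentially the same route as the paper: the forward direction rests on energy monotonicity (a configuration whose energy has strictly dropped can never be returned to, so attractors have constant energy), and the converse rests on the fact that an energy-preserving firing is reversible by firing the same vertex again, which is exactly the paper's two-line argument. Your only additions are to verify this reversibility explicitly via the computation $v_i=\frac{|\neigh_i|+1}{2}$ (forcing the vertex to remain active after the flip) and to spell out the terminal-SCC bookkeeping that the paper leaves implicit.
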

\begin{proof}
If the energy decreases when updating $c$ to $c'$, then $c$ will never be reached again (because energy is non-increasing).
Reciprocally, any update that keeps the energy constant is reversible:
the fired vertex can be fired again to get back to the previous configuration.
\qed
\end{proof}

\begin{remark}
\label{rem:odd-degree}
Since firing a vertex of odd degree makes the energy decrease, such vertices are inactive in the limit set.
\end{remark}

\begin{definition}[Particle]
  Let $c$ be a configuration on~$\Graph=(\Cells,\Edges)$, an edge $\{i,j\}$
  holds a {\em particle} if $c_i=c_j$. A configuration is fully characterized
  (up to the black/white symmetry) by its set of particles located at $\Particles \subseteq \Edges$.
\end{definition}
(Note that the converse proposition ``any subset $\Particles \subseteq \Edges$ corresponds to a configuration'' is true if and only if the graph is a tree).

The energy of a configuration is clearly equal to twice its number of particles. 
With the particle point of view, when firing a vertex~$i$ of
degree~$deg(i)=|\neigh_i|-1$, if the number of incident edges holding a particle
is a least $\frac{deg(i)}{2}$, these particles disappear but new particles appear
on the incident edges (if any) which had no particle (as illustrated on
Fig.~\ref{fig:particles1D}). Otherwise the particles do not move.

\begin{figure}[htb]
\centering
\includegraphics[height=1.2cm]{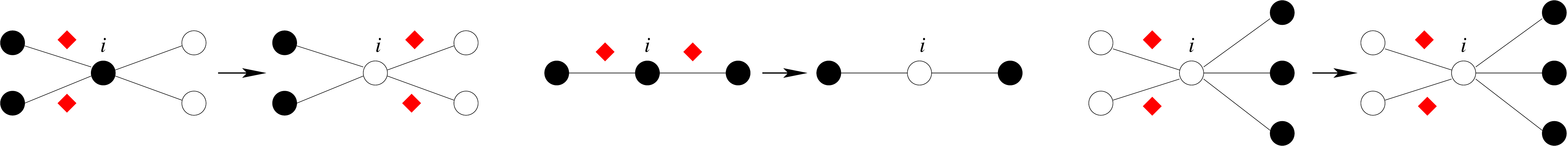}
\caption{Transfers of particles (diamonds) when firing vertex~$i$.}
\label{fig:particles1D}
\end{figure}

Switching between the coloring and the particle points of view may simplify the
description of the configurations and the dynamics, e.g. when the energy is low
and the dynamics comes to random walks of a few particles.

\subsection{Bipartite Graphs}
A graph is bipartite when its vertices can be partitioned into two sets such that every edge goes from one set to the other
(or equivalently, when it is 2-colorable).
Bipartite graphs allows us to use another, easier point of view (the dual configuration)
and to easily determine if a configuration is in the limit set.

\subsubsection{Dual configuration}
We now introduce dual configurations as in~\cite{RST-TCS2009} (section 3.2),
and their dual rule to facilitate the study of
the dynamics on trees. In this dual dynamics equivalent to Stochastic Minority,
the stable configurations of minimum energy are the two configurations all black
and all white, and the regions which compete are all white versus all black
subtrees.

The dual rule is almost a majority rule,
but in case of equality among the neighbors of a vertex, the state of this vertex is flipped each time it is updated.
This ``instability'' prevents many results about majority rules to apply to our case.

\begin{definition}[Dual configurations]
\label{def:dual_conf}
Consider a graph $\Graph$ and fix a vertex $r$ (the ``root'').
For any configuration $c$ on~$\Graph$, its dual configuration $\cdual$ is defined as
$\cdual_i = c_i$ if $h_i$ is even and $\cdual_i = 1-c_i$ if $h_i$ is odd, where $h_i$ is the
distance from~$r$ to~$i$ (see Fig.~\ref{fig:dual}). The mapping $c \mapsto
\hat{c}$ is a bijection on the set of all the configurations; more precisely
$\hat{\hat{c}}=c$.
\end{definition}

An equivalent definition consists in making a XOR with the 2-coloring
of~$\Graph$ such that $r$ is white.
The duals of the configurations of minimum energy~0 are the configuration
where all vertices are black or all vertices are white.

\begin{figure}[htb]
\centering
\includegraphics[height=1.6cm]{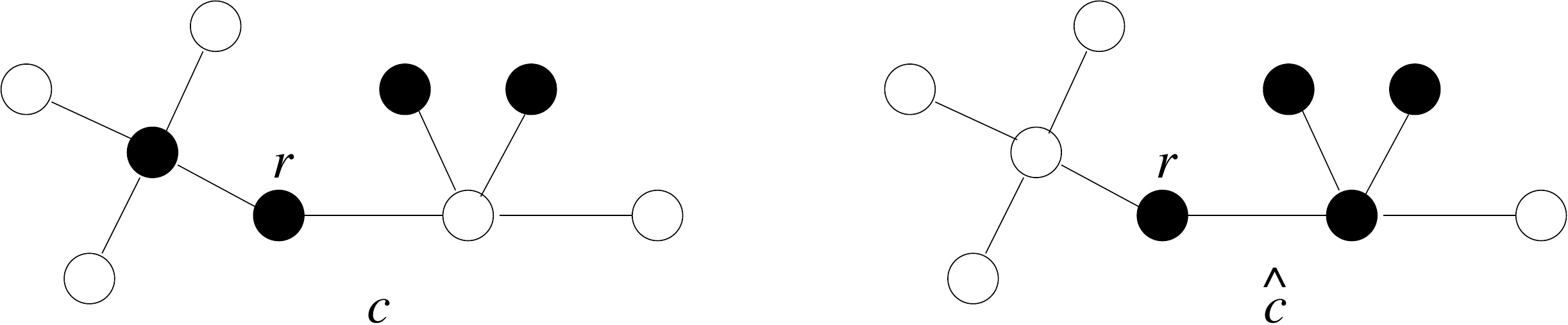}
\caption{A configuration~$c$ and its dual configuration~$\hat{c}$ (with regard to root~$r$).}
\label{fig:dual}
\end{figure}

\begin{proposition}[Dual dynamics]
  Consider a sequence $(c^t)$ for the Stochastic Minority dynamics~$\delta$ and
  the sequence $(\cdual^t)$ of the dual configurations, and define the {\em dual
    dynamics} $\ddual$ as $\ddual(\cdual)=\widehat{\delta(c)}$ so that
  $\cdual^{t+1}=\ddual(\cdual^t)$. Then the dynamics $\ddual$ is also a
  stochastic CA. It associates with each configuration~$\cdual$ a random
  configuration~$\cdual'$ by updating one random vertex~$i$ uniformly with the
  rule which selects the majority state in the neighborhood of~$i$ excluding
  itself (in case of equality its state changes):
$$
\cdual'_i =
\left\{
  \begin{array}{cl}
    1 & \textup{if $\sum_{j \in \neigh_i\setminus \{i\} } \cdual_j > \frac{|\neigh_i|-1}{2}$
               or ($\sum_{j \in \neigh_i\setminus \{i\} } \cdual_j = \frac{|\neigh_i|-1}{2}$
               and $\cdual_i = 0$)}\\[1mm]
    0 & \textup{if $\sum_{j \in \neigh_i\setminus \{i\}} \cdual_j < \frac{|\neigh_i|-1}{2}$
               or ($\sum_{j \in \neigh_i \setminus \{i\}} \cdual_j = \frac{|\neigh_i|-1}{2}$
               and $\cdual_i = 1$)}
\end{array}\right.
$$ 
By construction, the dual sequences $(c^t)$ and $(\cdual^t)$ as well as their
corresponding dynamics $\delta$ and $\ddual$ are {\em stochastically coupled}
(see~\cite{Lind92}) by firing the {\em same} random vertex at each time step.
\end{proposition}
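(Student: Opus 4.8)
The plan is to verify directly that conjugating the Minority rule $\delta$ by the dual bijection $c \mapsto \cdual$ yields exactly the stated majority-with-tie-flip rule, and then observe that the coupling is automatic. I would work throughout in the bipartite setting, which is where $\cdual$ is meaningful: the structural fact I would rely on is that in a bipartite graph every two adjacent vertices lie at distances from $r$ of opposite parity. Hence, writing $\cdual_k = c_k$ exactly when $h_k$ is even, along any edge $\{i,j\}$ precisely one endpoint has its state flipped by the dual map; in particular all neighbours $j\neq i$ of a fired vertex $i$ receive the flip opposite to the one applied to $i$ itself. (In a non-bipartite graph an odd cycle would give adjacent vertices of equal parity, and this uniform-flip property — hence the clean formula — would fail.)

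First I would record that $\ddual$ modifies only the coordinate of the fired vertex. Since $\delta$ changes only coordinate $i$ and the dual map acts coordinate-wise, $\ddual(\cdual)=\widehat{\delta(c)}$ agrees with $\cdual$ off $i$, and the same random vertex $i$ is fired in both processes. This is precisely the asserted coupling (stated ``by construction''), so all the content lies in identifying the local update of $\cdual$ at $i$, and in checking that it is a stochastic CA rule applied to a uniformly chosen vertex.

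Next I would translate the firing threshold into the dual variables. Set $d=|\neigh_i|-1$ and $S=\sum_{j\in\neigh_i\setminus\{i\}}\cdual_j$, the dual majority count excluding $i$. Using the opposite-flip fact and splitting on the parity of $h_i$: when $h_i$ is even one has $\cdual_i=c_i$ and $c_j=1-\cdual_j$ for each neighbour $j\neq i$, so $\sum_{j\in\neigh_i}c_j=\cdual_i+d-S$; when $h_i$ is odd one has $\cdual_i=1-c_i$ and $c_j=\cdual_j$, so $\sum_{j\in\neigh_i}c_j=(1-\cdual_i)+S$. Likewise $\cdual'_i=c'_i$ when $h_i$ is even and $\cdual'_i=1-c'_i$ when $h_i$ is odd.

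A short case analysis on the parity of $h_i$ and the value of $\cdual_i$ then finishes the argument. Substituting the expressions above into the Minority conditions comparing $\sum_{j\in\neigh_i}c_j$ with $\frac{d+1}{2}$, each of the four cases collapses to a comparison of the integer $S$ with $\frac{d}{2}$, via the integer equivalences $S>\frac{d-1}{2}\Leftrightarrow S\ge\frac{d}{2}$ and $S\ge\frac{d+1}{2}\Leftrightarrow S>\frac{d}{2}$. In every case the outcome agrees with the stated rule, tie-breaking included: the Minority tie convention (keep $c_i$ when the neighbourhood count equals $\frac{|\neigh_i|}{2}$) transforms, after the dual flips, into the majority tie convention (flip $\cdual_i$ when $S=\frac{d}{2}$). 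I expect the main obstacle to be exactly this tie-case bookkeeping — both the neighbour sum and the self-state are transformed by the dual map, and since $\neigh_i$ contains $i$ whereas $S$ excludes it, one must track the half-integer thresholds and the parity of $d$ with care to confirm that the equality cases land on the correct side.
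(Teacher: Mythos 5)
Your proof is correct, and it rests on the same structural ingredient as the paper's (the bipartite parity fact that the dual map flips exactly one endpoint of every edge, plus integrality of the neighbour count), but the two verifications are organized differently. You check the displayed formula head-on: translate $\sum_{j\in\neigh_i}c_j$ into $\cdual_i+d-S$ or $(1-\cdual_i)+S$ according to the parity of $h_i$, then run four cases and confirm that the Minority tie convention turns into the majority tie-flip convention; all four cases do land correctly, as you claim. The paper instead introduces the dual potential $\hat v_i$ (the number of neighbours of $i$, excluding $i$, whose state differs from $\cdual_i$), notes the identity $v_i=\hat v_i+1$ --- which is your opposite-flip fact in disguise, and which the paper asserts without the parity argument you supply --- and then only checks that $i$ is active for $\delta$ in $c$ if and only if $i$ is active for $\ddual$ in $\cdual$, i.e. $v_i>\frac{|\neigh_i|}{2}\iff\hat v_i\ge\frac{|\neigh_i|-1}{2}$, the same integrality step you use. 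Since states are binary, an update either flips the fired vertex or leaves it unchanged, so this activity equivalence alone forces $\widehat{\delta(c)}=\ddual(\cdual)$; the tie-flip clause of the dual rule is exactly what makes the two notions of activity coincide. This is why the paper never meets the tie-case bookkeeping you identify as the main obstacle: your route buys a fully explicit derivation of the stated local rule (and, as a bonus, a justification of $v_i=\hat v_i+1$), while the paper's buys brevity by reducing the whole question to the single bit ``does the state change when fired''.
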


\begin{definition}[Dual potential \& Energy]
  The \emph{dual potential $\hat{v}_i$} of vertex $i$ is the number of its
  neighbors (excluding itself) in a different state than itself. If~$\hat{v}_i <
  \frac{|\neigh_i|-1}{2}$ then the vertex is in the majority state and is thus
  inactive; whereas, if $\hat{v}_i \geq \frac{|\neigh_i|-1}{2}$ then the vertex is
  active. The {\em dual energy} $\Edual$ is the sum of the dual potentials over
  all the vertices.
\end{definition}

Given a configuration $c$ and its dual $\cdual$, the potential of any vertex~$i$
in~$c$ is equal to the dual potential of vertex~$i$ in~$\cdual$ plus 1. Thus the
dual energy of~$\cdual$ is exactly the energy of~$c$.

\begin{theorem}
Consider a configuration $c$ and its dual $\cdual$. Then, if the same vertex fires, $\widehat{\delta(c)} = \ddual(\cdual)$.
We thus have the following commutative diagram:
\begin{center}
\xymatrix{
\txt{original} & c  \ar@{<->}[d] \ar[r]^{\displaystyle\delta} & c' \ar@{<->}[d] \\
\txt{dual}     & \cdual          \ar[r]^{\displaystyle\ddual} & \cdual'
}
\end{center}
\end{theorem}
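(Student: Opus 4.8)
The plan is to verify the identity vertex by vertex, reducing everything to a single local computation at the fired vertex. Let $i$ be the vertex that fires. Both $\delta$ and $\ddual$ leave every vertex other than $i$ unchanged, so for $k \neq i$ the claim is immediate: since the duality map is applied coordinatewise, $(\widehat{\delta(c)})_k$ is the dual value of $(\delta(c))_k = c_k$, which equals $\cdual_k = (\ddual(\cdual))_k$. It therefore suffices to prove $(\widehat{\delta(c)})_i = (\ddual(\cdual))_i$, i.e. to check that applying the explicit majority-with-tie-flip rule of the previous proposition to $\cdual$ at $i$ reproduces the dual of the minority update of $c$ at $i$.

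First I would exploit bipartiteness. Because $\Graph$ is bipartite, every proper neighbor $j \in \neigh_i \setminus \{i\}$ lies in the opposite color class from $i$, so $h_i$ and $h_j$ have opposite parities and the duality flips $j$ in the direction opposite to how it treats $i$. Up to the black/white symmetry it is enough to treat $h_i$ even, so that $\cdual_i = c_i$ and $\cdual_j = 1 - c_j$ for each proper neighbor $j$. Writing $S = \sum_{j \in \neigh_i \setminus \{i\}} c_j$ and $d = |\neigh_i| - 1$, this gives $\sum_{j \in \neigh_i \setminus \{i\}} \cdual_j = d - S$. The minority rule at $i$ compares $c_i + S$ to the threshold $|\neigh_i|/2 = (d+1)/2$, while the dual majority rule compares $d - S$ to $(|\neigh_i|-1)/2 = d/2$; since $(\widehat{\delta(c)})_i = (\delta(c))_i$ in the even case, the two sides agree as soon as the ``$=1$ region'' of the minority rule coincides with that of the dual rule under the substitution $\sum_{j} \cdual_j = d - S$.

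The routine but delicate part --- which I expect to be the only real obstacle --- is the bookkeeping of the equality (tie) cases, where the two rules break ties in opposite ways and the thresholds are integers or half-integers depending on the parity of $d$. I would finish by a short case split on $c_i \in \{0,1\}$ and on the parity of $d$, each time rewriting the strict/nonstrict inequalities on $c_i + S$ as integer inequalities on $S$ and comparing them with the dual conditions $d - S > d/2$ resp. $d - S = d/2$. In every case the integer inequality on $S$ obtained from the minority rule matches the one obtained from the dual rule --- in particular the minority tie ``$c_i + S = (d+1)/2$'' (possible only for $d$ odd) lands exactly on a dual strict inequality, and the minority nonstrict inequality absorbs the dual tie through the tie-flip on $\cdual_i$ --- so the updated states coincide. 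The case $h_i$ odd follows symmetrically (or by applying the even case to $1 - c$), completing the verification and hence the commutativity of the diagram.
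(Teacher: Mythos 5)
Your proposal is correct, but it takes a genuinely more computational route than the paper. You prove the identity by brute force at the fired vertex: reduce to the local computation at $i$ (other coordinates are fixed by both dynamics and duality is coordinatewise), use bipartiteness to rewrite the dual neighborhood sum as $d-S$, and match the minority rule's threshold-and-tie conditions against the dual rule's. The case analysis you defer does close exactly as you claim: for $h_i$ even, both rules output $1$ precisely when $S\leq d/2$ if $c_i=0$, and precisely when $S<d/2$ if $c_i=1$; the odd case then follows from the color-flip symmetries $\delta(1-c)=1-\delta(c)$ and $\ddual(1-\cdual)=1-\ddual(\cdual)$, since locally at $i$ the dual of $c$ looks like the even-type dual of $1-c$. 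The paper argues differently: it factors the proof through the notion of \emph{activity}. Having already established $v_i=\hat{v}_i+1$ (potential versus dual potential), integrality of $\hat{v}_i$ gives that $i$ is active for $\delta$ in $c$ iff it is active for $\ddual$ in $\cdual$; since both dynamics simply flip the fired vertex iff it is active, and duality is a coordinatewise XOR (hence commutes with flips), the diagram commutes with no tie-case bookkeeping at all. Your approach buys self-containedness and explicitness --- you re-derive the bipartite flip relation and spell out the reduction to vertex $i$ and the tie cases, which the paper's terse final sentence leaves implicit --- while the paper's buys brevity: all threshold and parity bookkeeping is absorbed into the single identity $v_i=\hat{v}_i+1$, proved once beforehand.
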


\begin{proof}
 Consider a configuration $c$ and its dual $\cdual$. A vertex $c_i$ of $c$ is
 active if and only if $v_i \leq \frac{\neigh_i}{2}$. Since $v_i=\hat{v}+1$
 and $\hat{v}_i$ is an integer, a vertex $c_i$ of $c$ is active if and only if
 $\hat{v}_i < \frac{\neigh_i-1}{2}$: that is to say vertex $\cdual_i$ of
 $\cdual$ is active. According to definition~\ref{def:dual_conf}, if the same
 vertex fires and a vertex of $c$ is active if and only if the corresponding vertex
 of $\cdual$ is active then $\widehat{\delta(c)} = \ddual(\cdual)$.
\qed
\end{proof}

\subsubsection{Distance to a Stable Configuration}
In this section we describe an algorithm (algorithm~\ref{algo:limit-set}) that
gives a sequence of updates that leads to the limit set.
It is then easy to test for the limit set:
the input configuration belongs to the limit set if and only if the energy is the same between the input and output configurations.

\bigskip
\begin{fact}
\label{prop:attractor-decomposition}
  An attractor $A$ decomposes the graph into three sets of vertices:
\begin{enumerate}
\item\label{item:always0} the vertices that are in the state $0$ for every configuration of $A$;
\item\label{item:always1} the vertices that are in the state $1$ for every configuration of $A$;
\item\label{item:oscillating} the vertices that can be either in the state $0$ or $1$, depending on the configuration in $A$.
\end{enumerate}
\end{fact}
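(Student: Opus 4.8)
The plan is to observe that the claimed decomposition is nothing more than a trichotomy on the set of states each vertex takes across the attractor, and that this trichotomy is exhaustive and mutually exclusive. Concretely, for each vertex $i \in \Cells$ I would introduce $S_i := \set{c_i}{c \in A}$, the set of states that vertex $i$ attains over the configurations belonging to the attractor $A$.

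The first step is to note that $A$ is nonempty: by the definition of an attractor, it is a strongly connected component of the transition graph, so it contains at least one configuration. Hence each $S_i$ is a nonempty subset of $\States = \{0,1\}$, and therefore $S_i$ must equal exactly one of $\{0\}$, $\{1\}$, or $\{0,1\}$. These three alternatives correspond precisely to items~(\ref{item:always0}), (\ref{item:always1}), and~(\ref{item:oscillating}) of the statement: $S_i = \{0\}$ means $i$ is in state~$0$ for every configuration of $A$; $S_i = \{1\}$ means $i$ is in state~$1$ for every configuration of $A$; and $S_i = \{0,1\}$ means there exist $c, c' \in A$ with $c_i \neq c'_i$, i.e. $i$ oscillates.

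Since a nonempty subset of a two-element set falls into exactly one of these three cases, the three families are pairwise disjoint and their union is all of $\Cells$. This yields the desired decomposition of the vertex set into the always-$0$, always-$1$, and oscillating vertices.

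There is no real obstacle here: the content of the statement is definitional rather than combinatorial, and the only point requiring any care is that $A$ is a genuine, nonempty set of configurations (guaranteed by its definition as a strongly connected component), so that each $S_i$ is well-defined and nonempty. The substance of this fact lies not in its proof but in the terminology it introduces --- the distinction between \emph{fixed} vertices (items~\ref{item:always0} and~\ref{item:always1}) and \emph{oscillating} vertices (item~\ref{item:oscillating}) --- which will be used to analyze the structure of attractors in the sequel.
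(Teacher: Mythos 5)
Your proposal is correct: the fact is indeed purely definitional, and your trichotomy on $S_i = \set{c_i}{c \in A}$ (nonempty since an attractor, being a strongly connected component of the transition graph, contains at least one configuration) is exactly the right observation. The paper itself states this as a \emph{Fact} with no proof at all, treating it as self-evident in precisely the way your writeup makes explicit, so there is nothing to reconcile between the two.
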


\begin{algorithm}[H]
\label{algo:limit-set}
\KwIn{A configuration $c$.}
\lnl{noircir}\lWhile{There is an active white vertex $i$}{Fire $i$ ($i$ becomes black)}\;
\lnl{blanchir}\lWhile{There is an active black vertex $i$}{Fire $i$}\;
\lnl{renoircir}\lWhile{There is an active white vertex $i$}{Fire $i$}\;
\KwOut{The configuration $c'$ at the end of phase~\ref{blanchir}.}
\caption{Membership to the limit set: check that $\Edual(c')=\Edual(c)$.}
\end{algorithm}

\begin{proposition}
\label{prop:algorithm-limit-set}
  The configuration $c'$ returned by algorithm~\ref{algo:limit-set} is in the limit set.
\end{proposition}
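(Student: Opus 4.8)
The plan is to work throughout in the dual (majority) dynamics $\ddual$, for which the algorithm is written: firing an active vertex $i$ moves it to the majority of its neighbours (flipping on a tie), and from the computation $\Edual(\ddual(\cdual))-\Edual(\cdual)=2(\deg(i)-2\hat v_i)$ this step strictly lowers the dual energy when $\hat v_i>\deg(i)/2$ and preserves it exactly when $\hat v_i=\deg(i)/2$ (call such a vertex \emph{balanced}). Since $\Edual=\Energy$, Corollary~\ref{prop:limit-set-iff-energy-cannot-decrease} reduces the statement to: no sequence of firings starting from $c'$ ever exposes a vertex with $\hat v_i>\deg(i)/2$, i.e.\ every configuration reachable from $c'$ has only balanced active vertices.

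First I would check termination and describe $c'$. Each phase changes one vertex towards a fixed colour, so the count of that colour strictly increases and is bounded by $N$; hence phase~\ref{noircir} ends with no active white vertex and phase~\ref{blanchir} (whose last configuration is $c'$) ends with no active black vertex. Combining this with the fact that phase~\ref{blanchir} only turns black vertices white (so the dual potential of any vertex that stays white can only decrease), I would show that in $c'$ every black vertex is strictly stable ($\hat v_i<\deg(i)/2$), no black vertex is balanced, and the only active vertices are balanced white ones. Moreover these balanced whites form an independent set: each was turned white during phase~\ref{blanchir}, and a later such flip lowers the dual potential of an earlier adjacent one below $\deg(i)/2$, so two adjacent balanced whites cannot both survive to $c'$. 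Remark~\ref{rem:odd-degree} disposes of odd-degree (in particular leaf) vertices, which are never balanced and hence strictly stable once settled.

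Then I would prove membership by a ``first offending firing'' argument. Assuming some firing sequence from $c'$ reaches a vertex of dual potential $>\deg(i)/2$, look at the first firing that creates one. Since a single update changes a neighbour's dual potential by exactly $\pm1$ and balanced flips preserve energy, this first offending vertex must have been balanced and had its potential raised by a balanced flip of an adjacent same-state vertex; thus just before that firing two adjacent same-state balanced vertices are present. I would therefore carry along the invariant ``every active vertex is balanced and no two adjacent same-state balanced vertices exist,'' which holds at $c'$, and show it is preserved by balanced flips. Here bipartiteness is essential: the neighbours of the fired vertex lie on the opposite side and are pairwise non-adjacent, so a single flip cannot place two freshly-balanced vertices next to each other.

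The hard part is the remaining case of the preservation step, where firing a balanced vertex promotes a common neighbour to balanced while that neighbour is adjacent to a pre-existing balanced vertex of the same state. This cannot be excluded from the local invariant alone and requires the provenance of the balanced vertices from phase~\ref{blanchir}. The key fact I would establish is that the set $F$ of vertices turned white in phase~\ref{blanchir} is ``seeded'' by the strictly stable whites surviving phase~\ref{noircir} (the first vertex of $F$ to flip needs an already-white neighbour) and that these seeds keep a strict white majority throughout; propagating this constraint along the phase-\ref{blanchir} flip order shows that the one-step-from-balanced configuration needed to create a same-state balanced pair is incompatible with how $F$ is built. This bookkeeping, together with bipartiteness, is the technical core and the step I expect to be the main obstacle; the rest is the routine energy identity and the monotonicity of the phases.
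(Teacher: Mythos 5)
Your preliminary analysis of $c'$ is correct and provable exactly as you sketch it: at the end of phase~\ref{blanchir} every black vertex is strictly stable, every active vertex is a balanced white one, and those balanced whites form an independent set (any balanced white must have been flipped during phase~\ref{blanchir}, and of two adjacent such flips the earlier one would be unbalanced by the later one). The genuine gap is exactly where you flag it, and it is fatal to this route as written: the invariant ``every active vertex is balanced and no two adjacent same-state balanced vertices exist'' is \emph{not} inductive. Concretely, take two balanced white vertices $u$ and $x$ on the same side of the bipartition (hence non-adjacent, so permitted by the invariant) with a common white neighbour $w$ whose dual potential is $\deg(w)/2-1$. Firing $u$ raises $\hat v_w$ to $\deg(w)/2$, creating the adjacent same-state balanced pair $(w,x)$; firing $x$ then raises $\hat v_w$ strictly above $\deg(w)/2$, so the energy can now decrease and, by Corollary~\ref{prop:limit-set-iff-energy-cannot-decrease}, the starting configuration was not in the limit set. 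Bipartiteness only prevents the vertices freshly balanced by a \emph{single} flip from being adjacent to one another (they are all neighbours of the fired vertex, or the fired vertex itself in the opposite state); it does not exclude adjacency with pre-existing balanced vertices, which is the case above. So the entire content of the proposition is to show that such ``one flip away from a same-state balanced pair'' patterns are never reachable from $c'$, and your proposed repair --- bookkeeping on how the set $F$ of phase-\ref{blanchir} flips is seeded and propagated --- is a statement of intent rather than an argument: no invariant you state is actually preserved, and you acknowledge this yourself.

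For comparison, the paper never reasons about balanced vertices or energy at all; it proves membership directly from the definition of an attractor via two monotonicity arguments. First, a ``first offending firing'' argument shows that every vertex that is white at the end of phase~\ref{noircir} stays white in every configuration reachable from there (before the first firing of an active white vertex, only black vertices have turned white, so that vertex was already active at the end of phase~\ref{noircir}, contradicting termination of the loop); symmetrically, the black vertices of $c'$ stay black forever. Second, from \emph{any} configuration reachable from $c'$, replaying the firing sequence of phase~\ref{blanchir} (skipping vertices that are already white) is legal by the same monotonicity and lands exactly on $c'$, so $c'$ is recurrent and hence lies in an attractor. If you want to salvage your write-up, the shortest fix is to replace your inductive-invariant step by this replay argument; your reduction through Corollary~\ref{prop:limit-set-iff-energy-cannot-decrease} then becomes unnecessary.
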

\begin{corollary}
  The input configuration $c$ is in the limit set
  if and only if the energy has not decreased during execution of the algorithm.
\end{corollary}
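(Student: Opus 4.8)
The plan is to prove the two implications of the biconditional separately, leaning on the characterization of the limit set (Corollary~\ref{prop:limit-set-iff-energy-cannot-decrease}) and on the already-stated fact that the output configuration is in the limit set (Proposition~\ref{prop:algorithm-limit-set}). The unifying observation I would start from is that the algorithm only ever fires \emph{active} vertices, so its entire run from the input~$c$ to the output~$c'$ is a positive-probability path in the transition graph (each fired vertex is selected with probability $1/N>0$); consequently, by the lemma that energy is non-increasing, every single step either strictly decreases the energy or leaves it unchanged.

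For the forward direction, I would suppose $c\in\SetAtt$. By Corollary~\ref{prop:limit-set-iff-energy-cannot-decrease}, no configuration reachable from~$c$ admits a firing that strictly decreases the energy; equivalently, every configuration lying on a trajectory issued from~$c$ has energy exactly $\Energy(c)$. Since the algorithm visits only configurations reachable from~$c$, none of its steps can decrease the energy, and hence $\Energy(c')=\Energy(c)$.

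For the reverse direction, I would suppose the energy does not decrease between the input~$c$ and the output~$c'$. As each individual firing changes the energy by a non-positive amount, a non-decreasing total forces every single firing to be energy-preserving. As recalled in the proof of Corollary~\ref{prop:limit-set-iff-energy-cannot-decrease}, an energy-preserving firing is reversible: re-firing the same vertex undoes it. Therefore the algorithm's run $c\rightsquigarrow c'$ can be traversed backwards to produce a path $c'\rightsquigarrow c$, so that $c$ and $c'$ are mutually reachable and hence lie in the same strongly connected component of the transition graph. By Proposition~\ref{prop:algorithm-limit-set}, $c'$ belongs to an attractor, i.e. to a terminal strongly connected component; since $c$ sits in that same component, I conclude $c\in\SetAtt$.

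The delicate step is the reverse direction, and precisely the passage from ``$c$ and $c'$ are mutually reachable'' to ``$c\in\SetAtt$'': here one must invoke that an attractor is a \emph{terminal} strongly connected component (no arc leaves it), so that membership of~$c'$ forces membership of~$c$. I would also be careful to justify, as noted above, that the algorithm's chosen firings are genuine transitions of the dynamics, since this is exactly what licenses both the reachability claims and the applicability of Corollary~\ref{prop:limit-set-iff-energy-cannot-decrease} along the algorithm's run.
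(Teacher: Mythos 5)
Your proof is correct, and it is essentially the argument the paper intends: the paper states this corollary without a separate proof, expecting it to follow from Proposition~\ref{prop:algorithm-limit-set} combined with Corollary~\ref{prop:limit-set-iff-energy-cannot-decrease}, which is exactly how you derive both directions (energy cannot decrease along the algorithm's run when $c\in\SetAtt$; conversely, constant energy makes every firing reversible, so $c$ lies in the same terminal strongly connected component as $c'\in\SetAtt$). Your explicit attention to the facts that the algorithm's firings are genuine positive-probability transitions and that attractors are \emph{terminal} components fills in precisely the details the paper leaves to the reader.
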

\begin{proof}
  We first prove that the vertices in state $0$ (white) at the end of phase~\ref{noircir} of the algorithm
  cannot switch to state $1$, whatever the sequence of updates, i.e.
  they are in Case~\ref{item:always0} of Fact~\ref{prop:attractor-decomposition}.
  Indeed, assume instead that there exists a vertex $i$ and sequence of configurations $c^1, c^2, ..., c^k$ such that
  \begin{itemize}
  \item $c^1$ is the configuration at the end of phase~\ref{noircir};
  \item each configuration is the result of firing one vertex in the previous configuration;
  \item $c^1_i=0$ and $c^k_i=1$.
  \end{itemize}
  Let $c^\ell$ be the configuration just before the first update of this sequence that fires an active vertex $j$ in state $0$
  (there exists one since at least $i$ will be fired in this sequence).
  But $j$ must have been already active in $c^1$, since it had at least as many neighbors in state $1$ as in $c^\ell$
  (this is a monotonicity argument).
  The algorithm thus would not have exited the first ``while'' loop, which is a contradiction.

  Same holds for the vertices in state $0$ at the end of phase~\ref{renoircir}.
  Also, by symmetry, the vertices in state $1$ at the end of phase~\ref{blanchir}
  are in Case~\ref{item:always1} of Fact~\ref{prop:attractor-decomposition}.

  Finally, since the remaining vertices
  were white at the end of phase~\ref{blanchir},
  they can be made white by a sequence of updates starting from the configuration $c^1$
  (the configuration at the end of phase~\ref{noircir}).
  By monotonicity, they can be made white by the same sequence of updates, starting from any configuration reachable from $c^1$.
  This means that the configuration where all those remaining vertices are white is always reachable,
  and is thus in the attractor.
  This configuration is the one at the end of phase~\ref{blanchir}. Which concludes the proof.

  By symmetry, those remaining vertices can be made black from the configuration at the end of phase~\ref{blanchir},
  which is in the attractor, so they are in Case~\ref{item:oscillating} of Fact~\ref{prop:attractor-decomposition}.
\qed
\end{proof}

Termination of Algorithm~\ref{algo:limit-set} is clear: phase~\ref{noircir}
increases the number of vertices in state $1$ at each iteration and is thus
completed in $O(N)$ iterations. The same remark applies to the other phases.

Three phases are necessary to correctly classify the vertices into the three cases of fact~\ref{prop:attractor-decomposition},
as shown on Fig.~\ref{fig:3phases}.
Two phases are enough to reach the limit set.
\begin{figure}[htb]
  \centering
  \begin{minipage}[c]{0.34\linewidth}
    \scalebox{0.8}{  \begin{tikzpicture}[grow cyclic,shape=circle]
    \tikzstyle{level 1}=[level distance=10mm,sibling angle=90]
    \tikzstyle{level 3}=[level distance=10mm,sibling angle=60]
    \tikzstyle{every node}=[,draw=black,shape=circle]
    \coordinate[]
    node[fill=white]{$a$}
    child{node{}}
    child foreach \x in {1,2} {
      node[fill=white]{\phantom{0}}
      child foreach \x in {1,2} {node[fill=white]{\phantom{0}}}
    }
    child {
      node[fill=black,text=white]{\phantom{1}}
      child foreach \x in {1,2} {node[fill=black,text=white]{\phantom{1}}}
    }
    child {
      node[fill=black,text=white]{$b$}
      child foreach \x in {1,2} {
        node[fill=white]{\phantom{0}}
        child foreach \x in {1,2} {node[fill=white]{\phantom{0}}}
      }
    };
  \end{tikzpicture}}
  \end{minipage}
  \begin{minipage}[c]{0.64\linewidth}
    Starting from this configuration, 
    phase~\ref{noircir} of the algorithm makes the vertex $a$ become black.
    Then, phase~\ref{blanchir} makes its left neighbor $b$ become white, as well as $a$.

    $b$ is now definitely white (Case~\ref{item:always0} of Fact~\ref{prop:attractor-decomposition}),
    but was not identified as such at the end of phase~\ref{noircir}.
    Thus, a third phase is necessary with this algorithm to identify $b$.

    \medskip
    No other vertex will be active during the execution of the algorithm.
  \end{minipage}
  \caption{Three phases are necessary.}
  \label{fig:3phases}
\end{figure}

\subsubsection{Stable Configurations}
Stable configurations on trees are characterized in the appendix, Section~\ref{sec/char-stable-conf}.

\subsection{Non Bipartite Graphs}
It seems that the only in-depth study of a non-bipartite graph is the analysis of
the $2D$ grid with Moore neighborhood and periodic boundary condition~\cite{RST-DMTCS2010}. This study is harder
and more complicated than any bipartite graphs considered so far.
This complexity appears in the structure of the stable configurations. For
example, Figure \ref{fig:stab} shows a stable configuration, illustrating many ways in which the
four different stable patterns can be intricated.
Perturbing such stable configurations leads to original random walks
(see Figure \ref{fig:particles2D}). Particles appear and move along the
borders. They deform the border when they move. When two particles collide, they
disappear. When borders are too perturbed, the whole structure of the stable
configuration collapses.
As opposed to the bipartite case, we are currently not able to analyze such dynamics and 
compute for any configuration a sequence of updates leading to a stable
configuration.

\begin{figure}[htb]
\centerline{\includegraphics[scale=0.6]{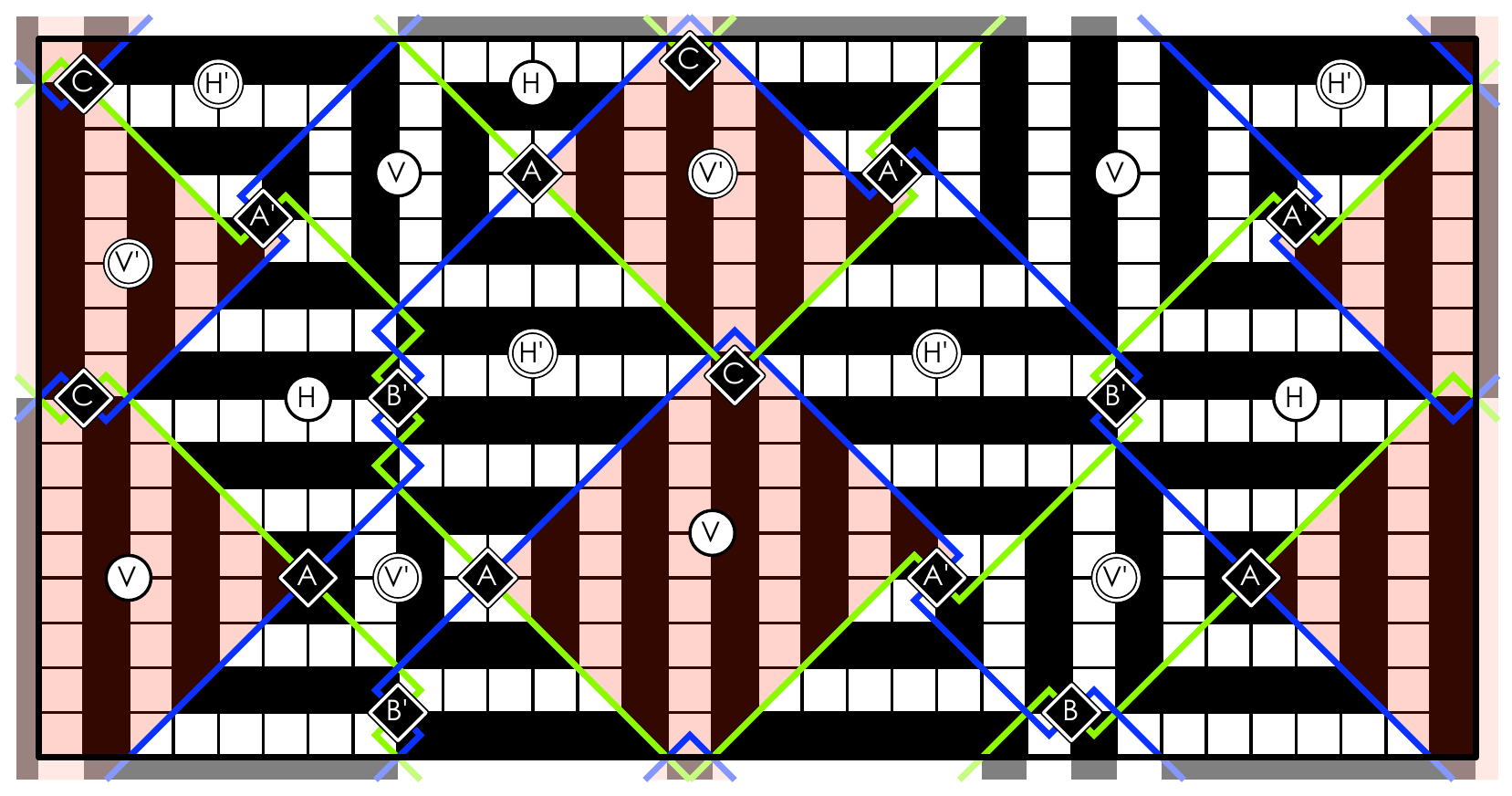}}
\caption{A complicated $16\times32$ stable configuration. There are four stable patterns (V,V',H,H') and different types of junction between the different regions(A, A', B, B' and C).}
\label{fig:stab}
\end{figure}

\begin{figure}[htb]
\centerline{\scriptsize
\begin{tabular}{p{7cm}p{6.5cm}}
    \,\! \hfill \includegraphics[height=2.2cm]{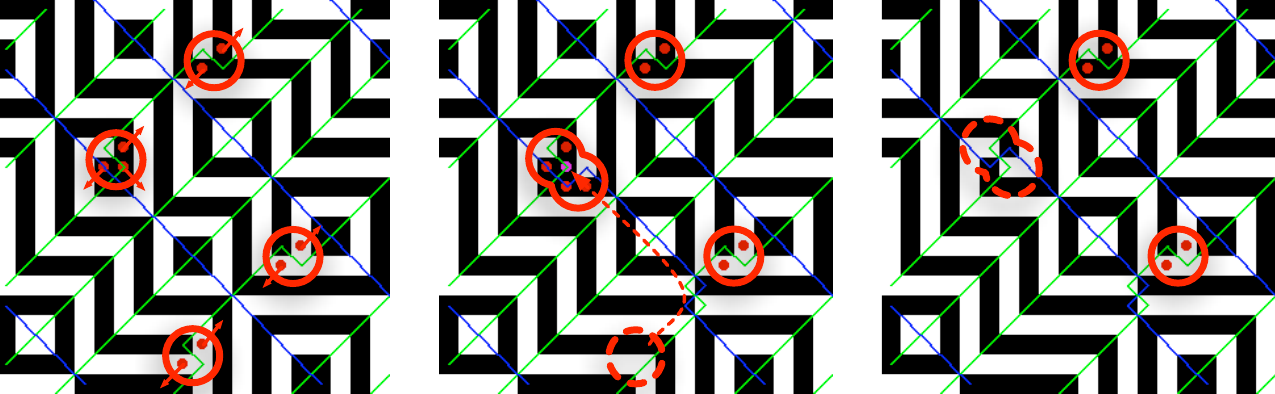} \hfill ~
&   \,\! \hfill  {\includegraphics[height=2.2cm]{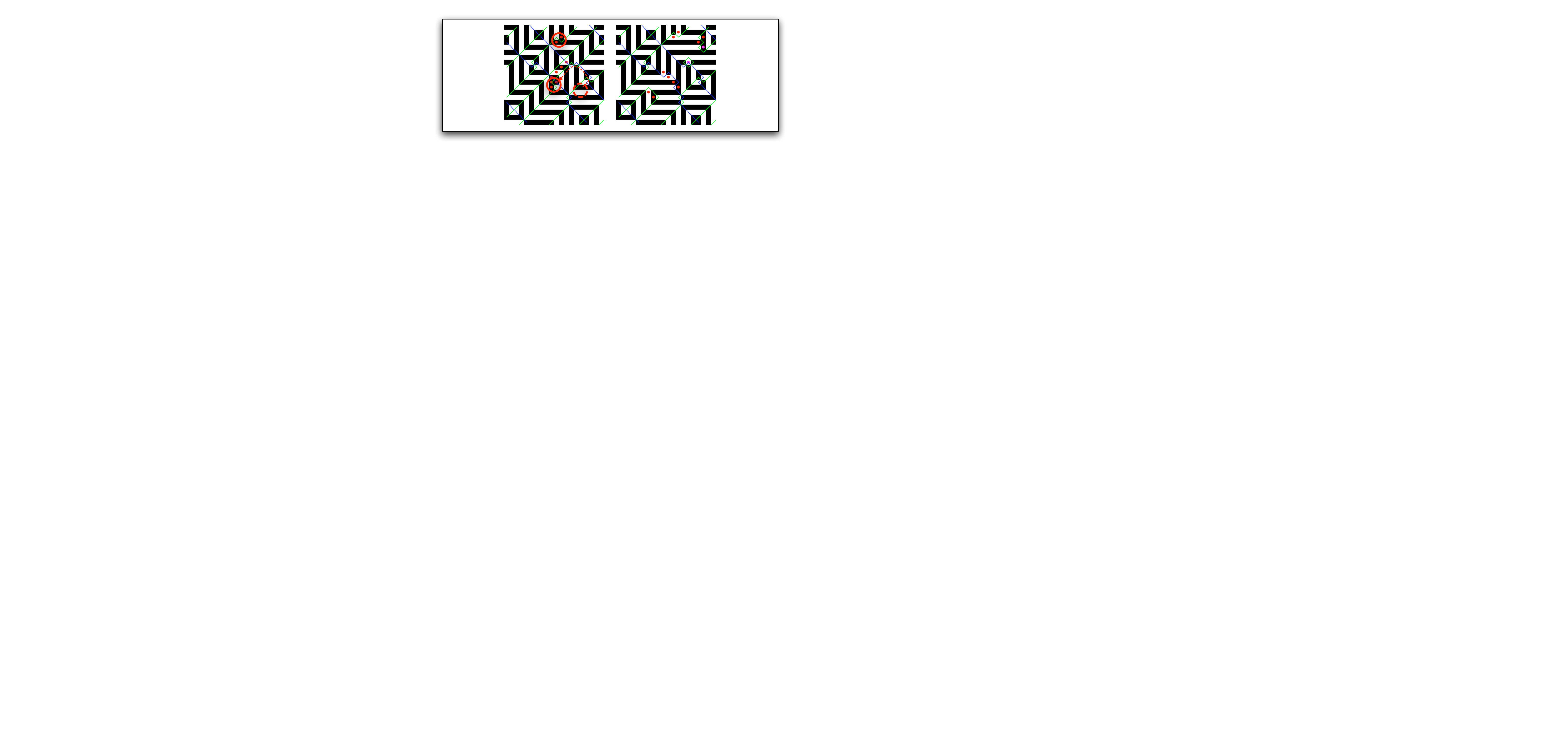}} \hfill~ \\[2mm]
\small \ref{fig:particles2D}.a -- A sequence of updates in a configuration starting with 4 particles where two of them move along rails and ultimately vanish after colliding with each other.
&
\small \ref{fig:particles2D}.b -- A sequence of updates where the rails cannot sustain the perturbations due to the movements of the particles: at some point, rails get to close with each other, new active cells appear, and part of the rail network collapses.
\end{tabular}}
\caption{Some examples of the complex behavior of particles in a 20 $\times$ 20 configuration.}
\label{fig:particles2D}
\end{figure}

\section{Behaviors}

\subsection{Decreasing Energy (Clique, cycle and paths)}
\label{sec/decr-energy-cliq}
\noindent
\begin{minipage}{0.84\linewidth}
We prove in this part that Stochastic Minority on cliques behaves as a {\em coupon collector} (see~\cite{GrimmettStirzaker2001}, page 210).
This easy result implies a fast convergence to the limit~set.
\end{minipage}%
\begin{minipage}{0.15\linewidth}
\begin{center}
\includegraphics[height=1.6cm]{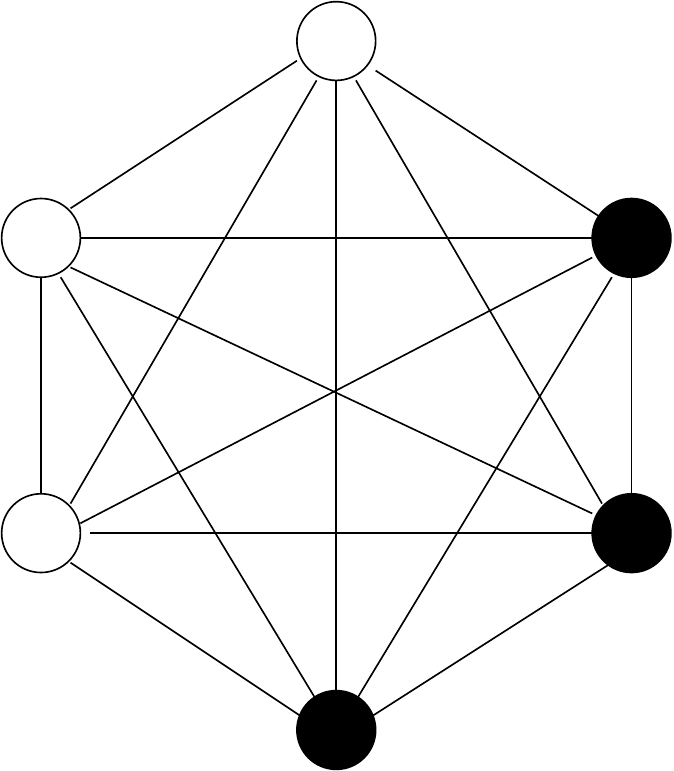}
\end{center}
\end{minipage}

\begin{theorem}
Stochastic Minority on cliques hits the limit set after $O(N\log N)$ steps on expectation.
If $N$ is even, the $\binom{N\rule[-0.5ex]{0pt}{0pt}}{\frac{N}{2}}$ attractors
are the  half black and half white configurations, each one stable.
If $N$ is odd, the only attractor is the set of $2\binom{N\rule[-0.5ex]{0pt}{0pt}}{\frac{N-1}{2}}$ configurations
having one more (resp less) black than white vertices.
\end{theorem}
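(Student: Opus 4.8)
The plan is to exploit the extreme symmetry of the clique. Since $\neigh_i=\Cells$ for every vertex, the update of any fired vertex depends only on the total number $b$ of black vertices: firing any vertex sets it to black when $b<\frac N2$, to white when $b>\frac N2$, and changes nothing when $b=\frac N2$. I would first record the two consequences that drive the whole argument: (i) $b$ moves monotonically toward $\frac N2$, being non-decreasing while $b<\frac N2$ and non-increasing while $b>\frac N2$; and (ii) as long as $b$ stays on one side of $\frac N2$, a vertex that has been set to the deficient colour is never altered by a later firing.

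Next I would identify the limit set through the energy corollary. In a clique the potential of a black vertex is exactly $b$ and that of a white vertex is $N-b$, so whenever $b$ lies strictly away from the balance point some vertex has potential $>\frac N2=\frac{|\neigh_i|}{2}$; firing it strictly decreases the energy, hence such configurations are not in the limit set. Conversely the balanced configurations have every potential equal to $\frac N2$. For even $N$ no vertex is then active, so each of the $\binom N{N/2}$ half-black configurations is stable and forms its own attractor. For odd $N$ I would show the $2\binom N{(N-1)/2}$ configurations with $b\in\{\frac{N-1}2,\frac{N+1}2\}$ form a single attractor: the set is closed, since from either value any firing keeps $b\in\{\frac{N-1}2,\frac{N+1}2\}$, and it is strongly connected, since firing a white vertex then a black vertex swaps two colours (so every black set of a given size is reachable) while a single firing moves between the two sizes.

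For the convergence time I would prove the coupon–collector domination lemma: \emph{once every vertex has been fired at least once, the configuration lies in the limit set.} By the black/white symmetry assume $b_0\le\frac N2$, so $b$ is non-decreasing. If at some time $T$ the configuration were still outside the limit set, then by monotonicity $b$ would have stayed strictly below $\frac N2$ (below $\frac{N-1}2$ for odd $N$) throughout $[0,T]$, so by consequence (ii) every firing on $[0,T]$ set its vertex to black and kept it black; if in addition every vertex has been fired, all vertices would be black, giving $b=N>\frac N2$, a contradiction. Hence the hitting time is dominated by the time until all $N$ vertices have been fired, which, each step firing a uniform vertex, is exactly the coupon collector time with expectation $N H_N=O(N\log N)$.

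The main obstacle is the domination lemma of the last paragraph: one must combine the monotonicity of $b$ with the once-correct-stays-correct property and handle both parities uniformly, so that ``every vertex fired'' genuinely forces balance. The potential computation and the strong connectivity of the odd-$N$ attractor are then routine once the count-only description of the dynamics is in hand.
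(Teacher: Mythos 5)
Your proposal is correct, and its backbone matches the paper's: the observation that in a clique the potential of a black vertex is the black count $n_b$, the energy argument ruling out unbalanced configurations, the even/odd split into $\binom{N}{N/2}$ stable attractors versus one oscillating attractor, and a coupon-collector time bound. Two steps are carried out differently. For the odd-$N$ attractor you prove strong connectivity by composing two firings into a colour swap (generating all transpositions between same-size black sets), whereas the paper decreases the Hamming distance to the target configuration one firing at a time; both are routine. More substantively, your time bound rests on a different key lemma: the pathwise domination stating that once every vertex has been fired at least once the configuration is in the limit set, proved by combining monotonicity of $b$ with the fact that, while $b$ stays on one side of $N/2$, fired vertices keep the deficient colour; this gives $\Esperance{T}\leq NH_N=O(N\log N)$ uniformly in the parity and the starting side. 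The paper instead fixes $n_b>\frac{N+1}{2}$, notes that white vertices are inactive and that each firing of a black vertex (probability $n_b/N$ per step) decreases $n_b$ by one; pushed through, this accounting actually yields the sharper bound $\sum_{k\approx N/2}^{N}N/k=O(N)$, even though the paper only states $O(N\log N)$. So your lemma is cleaner and more symmetric but inherently coarser ($\Theta(N\log N)$); both deliver the claimed bound. One wording caveat: your claim that a vertex of potential $>\frac{N}{2}$ decreases the energy when fired needs the threshold $v_i>\frac{N+1}{2}$ when $N$ is odd, since firing a vertex of potential exactly $\frac{N+1}{2}$ keeps the energy constant --- this is precisely why the odd-$N$ attractor oscillates at constant energy. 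You only invoke the claim for $b$ strictly outside the band $\left\{\frac{N-1}{2},\frac{N+1}{2}\right\}$, where it is valid, so no harm is done, but the threshold should be stated correctly.
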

\begin{proof}
  Let $n_b$ be the number of black vertices of a configuration. Since the
  neighborhood of a vertex is~$\Cells$, the potential of a black vertex
  is~$n_b$.
  If $n_b > \frac{N+1}{2}$
  (resp.~$n_b < \frac{N-1}{2}$) then firing a black (resp. white) vertex decreases
  the energy and the configuration is not in $\SetAtt$. 
  Thus a configuration in
  $\SetAtt$ must have $\frac{N-1}{2} \leq n_b \leq \frac{N+1}{2}$. Consider such a
  configuration:
\begin{itemize}
\item If $N$ is even then all vertices have potential $\frac{N}{2}$ and these
  $\binom{N}{\frac{N}{2}}$
  configurations are stable.
\item If $N$ is odd we call $C_b$ (resp. $C_w$) the set of configurations where
  $n_b = \frac{N+1}{2}$ (resp.~$n_b = \frac{N-1}{2}$). White (resp. black) vertices
  of a configuration in $C_b$ (resp.~$C_w$) are inactive and black (resp. white)
  vertices are active, firing one of them leads with constant energy to a configuration of~$C_w$
  (resp.~$C_b$). Thus from any configuration of~$C_w \cup C_b$, there is no
  sequence of updates that causes a drop of energy and~$C_w \cup C_b
  =\SetAtt$.

  Now, we prove that $\SetAtt$ is made of only one attractor.
  Let $d(c,c'):= |\set{i}{c_i\neq c'_i}|$ be the distance between two configurations.
  Consider in $\SetAtt$ two configurations $c\neq c'$.
  By symmetry we can assume $c\in C_b$.
  Since $c$ has at least as many black vertices as $c'$,
  there is a vertex $i$ black in $c$ and white in $c'$.
  Firing $i$ decreases the distance.
  Iterating this argument, one finds a path from $c$ to $c'$ in $\SetAtt$.
\end{itemize}

Now consider a configuration where $n_b > \frac{N+1}{2}$. As long as the
configuration does not belong to $\SetAtt$, the white vertices are
inactive. When $\left\lceil n_ b - \frac{N+1}{2} \right\rceil$ black vertices have fired, the
configuration is in $\SetAtt$. At each time step there is a probability
$\frac{n_b}{N}$ to fire a black vertex. This kind of dynamics is known as coupon
collector and $T = O(N\log N)$.
\qed
\end{proof}

Moreover, it is easy to see that in the odd $N$ case, the attractor has a simple structure:
this is a bipartite graph, composed of configurations with $\frac{N-1}2$ white vertices on one side,
and configurations with $\frac{N+1}2$ white vertices on the other.

\subsubsection{Cycles and paths}
Cycles and paths forms the class of connected graphs of maximum degree 2.

\label{subsec:cycle}
\mbox{}\par\noindent
\begin{minipage}{0.84\linewidth}
  On cycles and paths, the particle point of view is convenient and one can prove that
  Stochastic Minority behaves as {\em random walks of annihilating particles} on
  a discrete ring (see~\cite{FMST06,GP06}). On the right, the particles are the diamonds.
\end{minipage}%
\begin{minipage}{0.15\linewidth}
\begin{center}
\includegraphics[height=1.6cm]{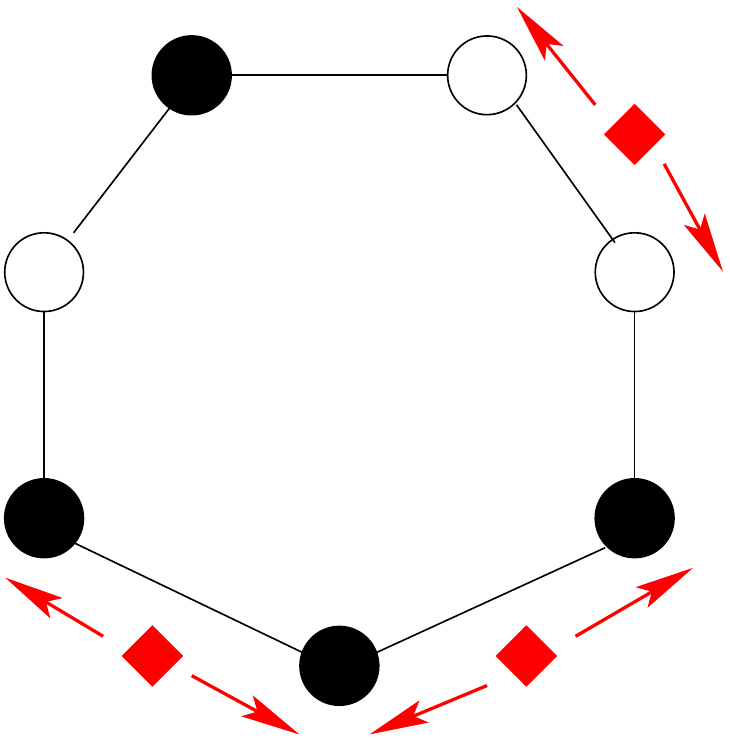}
\end{center}
\end{minipage}

\begin{theorem}
\label{prop:cycle-is-random-walk}
Stochastic Minority on cycles and paths hits the limit set after $O(N^3)$ steps on expectation.
If $N$ is even, the two attractors are the 2-colorings of the cycle. 
If $N$ is odd, the single attractor is a cycle in the transition graph composed of all the configurations with only one particle.
\end{theorem}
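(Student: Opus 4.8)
The plan is to pass to the \emph{particle} description, recognise the dynamics as a system of annihilating random walks, and then treat the structure of the limit set and the hitting time separately.

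First I would specialise the particle rule to graphs of maximum degree~$2$. For a degree-$2$ vertex $i$ we have $|\neigh_i|=3$, so $i$ is active precisely when its potential is $\geq 2$, i.e. when at least one of its two incident edges holds a particle. Three cases then occur: if both incident edges hold particles, the two particles annihilate and the energy drops (by the energy lemma); if exactly one does, that particle hops to the other incident edge at constant energy; if neither does, $i$ is inactive. Thus on a cycle the configuration is described by particles sitting on the $N$ edges of $\mathbb{Z}_N$ and performing symmetric annihilating random walks, two particles cancelling when they meet at a common vertex. A parity remark is in order: going once around the cycle the colour changes an even number of times, so the number of particles has the parity of~$N$; the minimum is therefore $0$ when $N$ is even and $1$ when $N$ is odd. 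On a path the two endpoints have odd degree, hence are inactive in the limit set by Remark~\ref{rem:odd-degree} and, when fired, absorb an incident particle while decreasing the energy, so the walks acquire absorbing boundaries and the minimum energy~$0$ is always reachable.

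Next I would identify the attractors using Corollary~\ref{prop:limit-set-iff-energy-cannot-decrease}. For $N$ even, the energy-$0$ configurations are exactly the two $2$-colourings of the cycle; each has no active vertex, hence is a stable singleton attractor, giving two attractors. For $N$ odd, the minimum is one particle, and in such a configuration the only active vertices are the two endpoints of the unique particle edge; firing either of them hops the particle to an adjacent edge at constant energy, annihilation being impossible, so no energy-decreasing update exists and every one-particle configuration lies in the limit set. To see that they form a single attractor which is a cycle in the transition graph, I would walk the lone particle once around: moving it one step fires exactly the vertex it crosses, so a full tour fires each of the $N$ vertices exactly once and returns the particle to its starting edge in the globally colour-swapped configuration, while a second tour restores the original one. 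Hence the $2N$ one-particle configurations are linked, each to its two neighbours, into a single cycle of length $2N$ in the transition graph, which is the unique attractor.

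Finally, for the $O(N^3)$ hitting time I would view the discrete full-asynchronous dynamics as the jump chain of the continuous-time process in which each vertex fires independently at rate~$1$. Since the total firing rate is exactly $N$ at all times, the expected number of discrete steps equals $N$ times the expected continuous time to reach the minimum energy, so it suffices to show that the annihilating random walks reach at most one particle in expected continuous time $O(N^2)$. The worst case is two particles far apart, whose separation is a random walk on $\mathbb{Z}_N$ absorbed at a meeting, with expected meeting time $\Theta(N^2)$; for more particles one dominates by coalescing random walks (see~\cite{FMST06,GP06}), whose coalescence time on the cycle is $O(N^2)$ uniformly in the initial number of particles. Multiplying by the rate factor~$N$ yields $O(N^3)$ discrete steps, and the same analysis with absorbing boundaries handles paths. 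I expect this uniform $O(N^2)$ continuous-time estimate to be the crux: one must rule out that many well-spread particles take longer than a single far pair, and the naive bound ``each of the up to $N/2$ annihilations costs $O(N^3)$'' would only give $O(N^4)$. The resolution is precisely the annihilating--coalescing duality (or a monotone coupling that only adds particles), which controls the \emph{total} process by $O(N^2)$ continuous time before the single $\Theta(1/N)$-per-step slowdown of full asynchronism is inserted.
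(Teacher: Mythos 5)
Your proposal is correct in substance, and for everything except the time bound it follows the paper's own route: the reduction of degree-$2$ vertices to annihilating particle walks, the parity argument ($\#$particles $\equiv N \bmod 2$), the use of Remark~\ref{rem:odd-degree} for path endpoints, and the identification of the attractors via Corollary~\ref{prop:limit-set-iff-energy-cannot-decrease} (your explicit description of the odd-$N$ attractor as a $2N$-cycle of one-particle configurations is in fact more detailed than the paper's). Where you genuinely diverge is the $O(N^3)$ estimate. The paper stays in discrete time and gives a self-contained martingale argument: it tracks the weight $X_t$ defined as the maximum distance between consecutive particles (set to $N$ or $N+1$ when at most one particle remains), checks $\Esachant{\Delta X_{t+1}}{c_t=c}\geq 0$ and $\Esachant{(\Delta X_{t+1})^2}{c_t=c}\geq 3/N$ off the limit set, so that $X_t^2-3t/N$ is a submartingale, and applies the Stopping Time Theorem to get $\Esperance{T}=O(N^3)$ in one stroke, uniformly in the number of particles. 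You instead embed the dynamics in continuous time (a clean factor-$N$ conversion via Wald's identity) and dominate the annihilating system by a coalescing one, invoking the classical uniform $O(N^2)$ full-coalescence time on the cycle. Your route buys modularity and a nice structural insight --- the $N^2$ versus $N^3$ split between genuine relaxation time and the slowdown due to full asynchronism --- and you correctly flag the danger point (many well-spread particles must not be slower than one far pair), which the paper's max-gap functional dispatches for free. The cost is that your crux is outsourced: the parity/thinning coupling between annihilating and coalescing walks has to be adapted to this vertex-driven dynamics (adjacent particles share the Poisson clock of their common vertex, annihilation occurs exactly when that vertex fires, and the placement of a merged cluster must be chosen so that the subset property is preserved), and the uniform $O(N^2)$ coalescence bound is itself a nontrivial theorem which is not actually contained in \cite{FMST06,GP06} --- those papers treat annihilating walks by martingale arguments close to the paper's own. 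None of this is wrong, but as written these two steps are asserted rather than proved, whereas the paper's single submartingale reaches the same bound with no external input.
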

\begin{proof}
  The movement of particles provide a nice framework for the analysis. Firing a
  vertex incident to a particle either attract the particle on the next edge if it
  is free, or annihilate both particles on each side of the vertex. Consequently,
  the dynamics boils down to the analysis of random walks of annihilating particles
  on a discrete ring. If the number $N$ of vertices is even (resp. odd), any
  configuration has necessarily an even (resp. odd) number of particles.
  This number decreases by annihilations until there is no (resp. only one) particle.
  The attractor is reached at this point since the energy cannot decrease further.
  This proof can be easily adapted to graphs which are paths.

\medskip
  To bound the expected hitting time of the limit set, associate with each
  configuration~$c^t$ a weight~$X_t$ which is the maximum distance between two
  consecutive particles
  if there are at least two particles, or $N$ if there is only one
  particle, or $N+1$ if there is no particle. For all $t$, $X_t \in
  \{1,\ldots,N+1\}$ and $c_t$ belongs to the limit set if and only if $X_t\in\{N,N+1\}$. Let $\Delta X_{t+1}=X_{t+1}-X_t$. One can check that
  $\Esachant{\Delta X_{t+1}}{c_t=c} \geq 0$ for any configuration~$c$. Moreover
  $\Esachant{(\Delta X_{t+1})^2}{c_t=c} \geq 3/N$ for any
  configuration~$c$ not in the limit set. Consequently $X_t^2-3t/N$ is a sub-martingale and we can apply the Stopping Time Theorem 
  to the stopping time $T=\min\set{t \geq 0}{X_t \in \{N,N+1\}}$. It gives $\Esperance{T} = O(N^3)$ which is thus an upper bound on
  the expected hitting time of the limit set.

  This proof applies with no modification to the hitting time on graphs which are paths.
\qed
\end{proof}

\subsection{Long Range Interaction and Exponential Convergence (Trees)}
In this part, we introduce biased trees (Definition \ref{def:biased:tree} and
Figure \ref{fig:biased}) such that the dynamics $\cdual$ converges in
exponential time on this topology (Theorem \ref{the:biased:tree}). Vertices of
biased trees have degree at most~4. 
In fact, biased trees simulate biased random walks (Definition
\ref{def:biased:walk}) which converge in exponential time. Biased trees are
created from small trees called widgets (Definition~\ref{def:widget} and Figure
\ref{fig:exponential}) arranged on a line. Except at the ends, this line of
widgets is made of ``gates''. According to the configuration, these gates are
either locked, unlocked or stable (Definition~\ref{def:conf:gate}). On a correct
configuration (Definition \ref{def:correct}), the line of gates is split into
two regions: all gates on the left side are stable and all gates on the right
side are unstable (locked or unlocked). In a correct configuration, three
different events may be triggered with the same probability~$1/N$
(Fact~\ref{fact:biased:actice} and Corollary~\ref{cor:update}):

\begin{itemize}
\item the rightmost stable gate becomes an unlocked gate;
\item the leftmost unstable gate becomes stable if it is unlocked;
\item the leftmost unstable gate is switched from locked to unlocked or the contrary.
\end{itemize}

Thus stable gates tend to disappear. This dynamics will ultimately converge to
the stable configuration $\cdual_f$ (Definition
\ref{def:biased:final:configuration}). To reach this configuration all gates
must be stable. Thus it takes an exponential time for the dynamics $\cdual$ to converge
on a biased tree with an initial correct configuration.

\begin{definition}[Biased random walks]
\label{def:biased:walk}
A Biased Random Walk is a sequence of random variables $(X_i)_{i\geq 0}$ defined on $\{0,\ldots ,n\}$ such that for all~$i \geq 0$:
\begin{itemize}
\item $\Psachant{X_{i+1}=1 }{X_i=0}=1$ (reflecting barrier at~0).
\item $\Psachant{X_{i+1}=n}{X_i=n}=1$ (absorbing barrier at~$n$).
\item $\exists a,b \in \mathbb{R}_+ \quad\forall x\in\{1, ..., n-1\}\quad 
  \begin{cases}
    \Psachant{X_{i+1}=x-1}{X_i=x}\\
      \quad+\Psachant{X_{i+1}=x+1}{X_i=x}=1\\
    0 < a < \Psachant{X_{i+1}=x+1}{X_i=x} < b < 1/2
  \end{cases}$
\end{itemize}
\end{definition}

\begin{theorem}
Let $T:= \min \set{i\ge0}{X_i = n}$ be the absorption time at~$n$ and
for all $0 \leq k \leq n$, let $\EsperanceFrom{k}{T}:=\Esachant{T}{X_0=k}$ be its expectation starting from~$k$.
Then $$\theta_k(b) \leq \EsperanceFrom{k}{T} \leq \theta_k(a)$$
where $\theta_k(p)=\frac{2p(1-p)}{(1-2p)^2}\left(\left(\frac{1-p}{p}\right)^n-\left(\frac{1-p}{p}\right)^k\right)-\frac{n-k}{1-2p}$.
\end{theorem}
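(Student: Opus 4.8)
The plan is to reduce the inhomogeneous walk to a homogeneous one, for which the absorption time can be computed exactly, and then to sandwich the general walk between two homogeneous walks by a monotone coupling. First I would treat the \emph{homogeneous} walk $\bar X$ in which every interior up-probability equals a common value $p\in(0,1/2)$, i.e. $\Psachant{\bar X_{i+1}=x+1}{\bar X_i=x}=p$ and $\Psachant{\bar X_{i+1}=x-1}{\bar X_i=x}=1-p$ for $1\le x\le n-1$, keeping the same reflecting barrier at $0$ and absorbing barrier at $n$. The claim to establish for this case is the exact identity $\EsperanceFrom{k}{\bar T}=\theta_k(p)$, after which the general bounds follow from a comparison.

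For the homogeneous computation I would set $h_k:=\EsperanceFrom{k}{\bar T}$ and use first-step analysis: $h_n=0$, the reflecting barrier gives $h_0=1+h_1$, and for interior $k$ one has the recurrence $h_k=1+p\,h_{k+1}+(1-p)\,h_{k-1}$. The associated characteristic equation $p\lambda^2-\lambda+(1-p)=0$ has roots $1$ and $r:=\frac{1-p}{p}$ (note $r>1$ since $p<1/2$, which is what ultimately produces the exponential growth in $\theta_k$). Since $p\neq 1-p$, a particular solution is linear in $k$, so the general solution is $h_k=A+B\,r^k+\frac{k}{1-2p}$, and imposing the two boundary conditions fixes $B=-\frac{2p(1-p)}{(1-2p)^2}$ and eliminates $A$ so that $h_k=B\bigl(r^k-r^n\bigr)-\frac{n-k}{1-2p}$, which is exactly $\theta_k(p)$. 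This is a routine linear-recurrence calculation that I would simply carry out, the only thing to verify being that the two constants come out as claimed.

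For the general case I would argue that $\EsperanceFrom{k}{T}$ is monotone non-increasing in the interior up-probabilities, by a single coupling driven by one i.i.d. sequence $(U_i)$ of uniforms on $[0,1]$: from an interior state $x$ a walk moves up precisely when $U_i$ is below its up-probability at $x$. If two walks $X$ and $X'$ start at the same point $k$, with the up-probabilities of $X'$ dominating those of $X$ pointwise, then $X_i\le X'_i$ for all $i$. Indeed, whenever the two walks occupy the same interior position the larger up-probability can only send $X'$ at least as high as $X$, while whenever they differ they differ by an even amount (both perform honest unit steps — including the reflecting step at $0$ — until absorption at $n$), so they cannot cross. Consequently the dominating walk reaches $n$ no later, giving the ordering $\EsperanceFrom{k}{T'}\le\EsperanceFrom{k}{T}$. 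Applying this with $X'$ the homogeneous walk of up-probability $b$ (which dominates every $p_x<b$) yields the lower bound $\theta_k(b)\le\EsperanceFrom{k}{T}$, and with $X$ the homogeneous walk of up-probability $a$ (dominated by every $p_x>a$) yields the upper bound $\EsperanceFrom{k}{T}\le\theta_k(a)$.

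The main obstacle is the order preservation of the coupling across the two boundaries: I must be sure the coupled trajectories never cross, which rests on the parity observation (before either walk is absorbed the difference stays even, hence at least $2$ when positive) and on the reflecting move at $0$ being a genuine $+1$ step. Pinning down the correct direction of monotonicity is the other point requiring care — a larger up-probability weakens the drift toward $0$ and therefore \emph{speeds up} absorption at $n$, which is what forces $\theta_k(b)$ (the larger bias toward $n$) to be the lower bound and $\theta_k(a)$ the upper bound.
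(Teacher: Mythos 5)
Your proof is correct, and its computational core coincides exactly with the paper's own proof: the paper does nothing more than set up the first-step system $\EsperanceFrom{k}{T}=p(1+\EsperanceFrom{k+1}{T})+(1-p)(1+\EsperanceFrom{k-1}{T})$ with $\EsperanceFrom{n}{T}=0$ and $\EsperanceFrom{0}{T}=1+\EsperanceFrom{1}{T}$ and solve it (citing Grimmett--Stirzaker) to obtain $\EsperanceFrom{k}{T}=\theta_k(p)$ for the homogeneous walk; your recurrence solution, with roots $1$ and $(1-p)/p$, particular solution $k/(1-2p)$, and $B=-\frac{2p(1-p)}{(1-2p)^2}$, reproduces this verbatim. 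Where you genuinely go beyond the paper is the sandwich step: the paper asserts the bounds $\theta_k(b)\leq\EsperanceFrom{k}{T}\leq\theta_k(a)$ for the inhomogeneous walk as ``a direct consequence of classical analysis'' and never actually compares the given walk to the two homogeneous ones. Your monotone coupling is precisely that missing bridge, and your treatment of its only delicate point is sound: before the dominating walk is absorbed both walks make honest $\pm1$ steps (reflection at $0$ being a $+1$ move), so the difference keeps even parity and cannot cross from $\geq 2$ to negative in one step, while after absorption at $n$ domination is trivial; and you identify the correct direction of monotonicity (a larger up-probability speeds absorption at $n$, so the $b$-walk yields the lower bound $\theta_k(b)$ and the $a$-walk the upper bound $\theta_k(a)$). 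A side benefit of your argument is that it only requires the up-probability, conditional on the history, to lie in $(a,b)$; this covers time-inhomogeneous or history-dependent walks, which is all that the paper's definition of a biased random walk literally guarantees, so your proof is not only complete where the paper is elliptic but also matches the stated generality.
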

This theorem is a direct consequence of classical analysis of random
walks on~$\{0,\ldots,n\}$~where
\begin{itemize}
\item $\Psachant{X_{i+1}=1}{X_i=0}=1$
\item $\Psachant{X_{i+1}=n}{X_i=n}=1$
\item $\forall x \in \{1,...,n-1\}\quad
\begin{cases}
  \Psachant{X_{i+1}=x+1}{X_i=x}=p\\
  \Psachant{X_{i+1}=x-1}{X_i=x}=q
\end{cases}
$, with $p+q=1$.
\end{itemize}
Solving the following system of equations~\cite{GrimmettStirzaker2001}:
$$\begin{cases}
\EsperanceFrom{k}{T}=p(1+\EsperanceFrom{k+1}{T})+(1-p)(1+\EsperanceFrom{k-1}{T}) & \forall k \in\{0,...,n\}\\
\EsperanceFrom{n}{T}=0 & \\
\EsperanceFrom{0}{T}=1+ \EsperanceFrom{1}{T} 
\end{cases}
$$
one gets 
$\EsperanceFrom{k}{T}=\frac{2p(1-p)}{(1-2p)^2}\left(\left(\frac{1-p}{p}\right)^n-\left(\frac{1-p}{p}\right)^k\right)-\frac{n-k}{1-2p}$.

\begin{definition}[Widgets]
\label{def:widget}
A \emph{Widget} $W$ is a tree $\Trees = (\Cells, \Edges, b)$ where $b \in
\Cells$ is called the \emph{bridge}. We consider the three widgets described in
Figure $\ref{fig:exponential}$.a: \emph{head}, \emph{gate} and \emph{tail} and
the three configurations $\cdual_l, \cdual_u, \cdual_s$ for gates.
\end{definition}

\begin{figure}[htb]
\centerline{\scriptsize
\begin{tabular}{p{6cm}@{\qquad}p{6cm}}
   \,\! \hfill  \raisebox{.75cm}{\includegraphics[width=6cm]{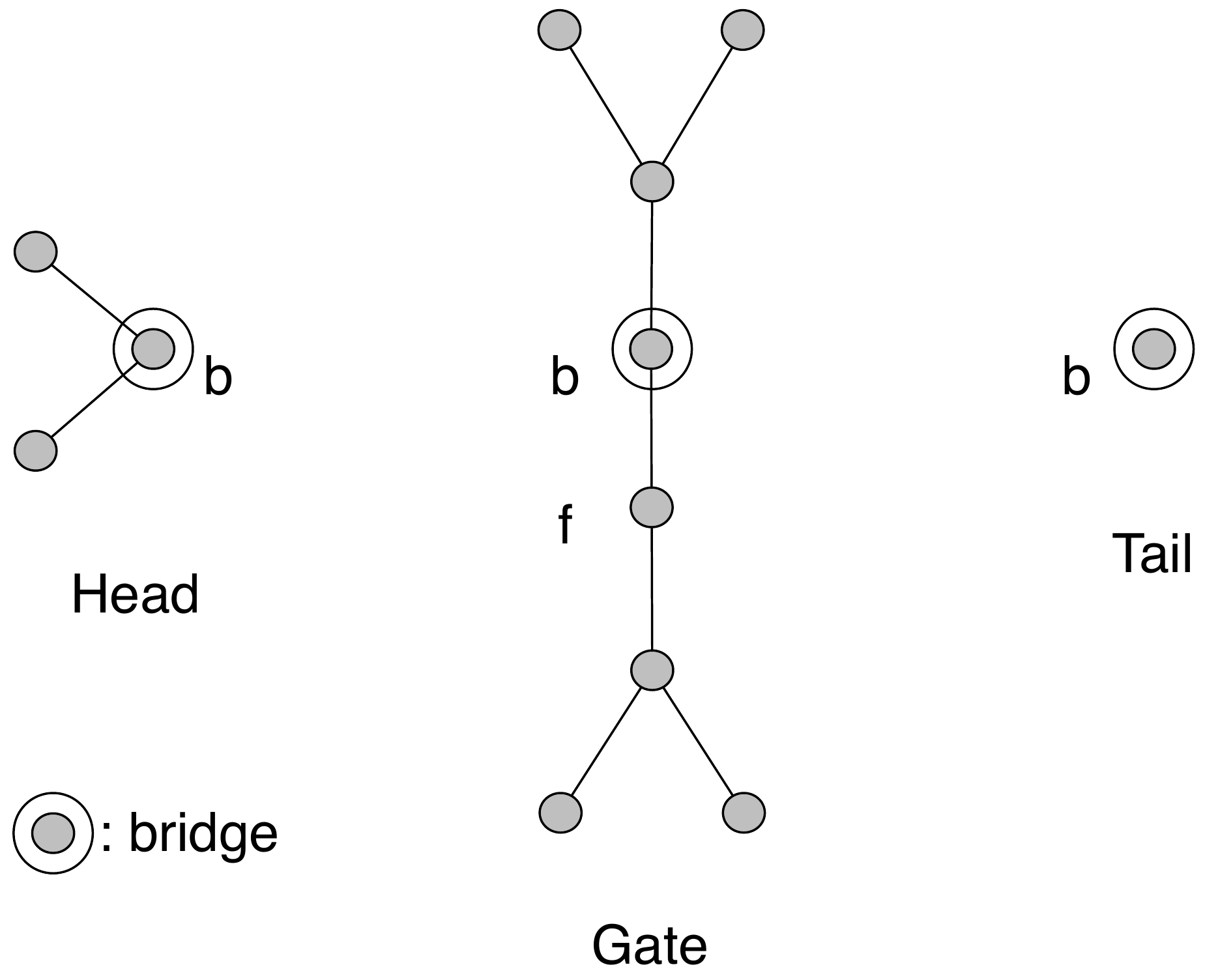}} \hfill ~
&   \,\! ~\hfill  \raisebox{0.75cm}{\includegraphics[width=6cm]{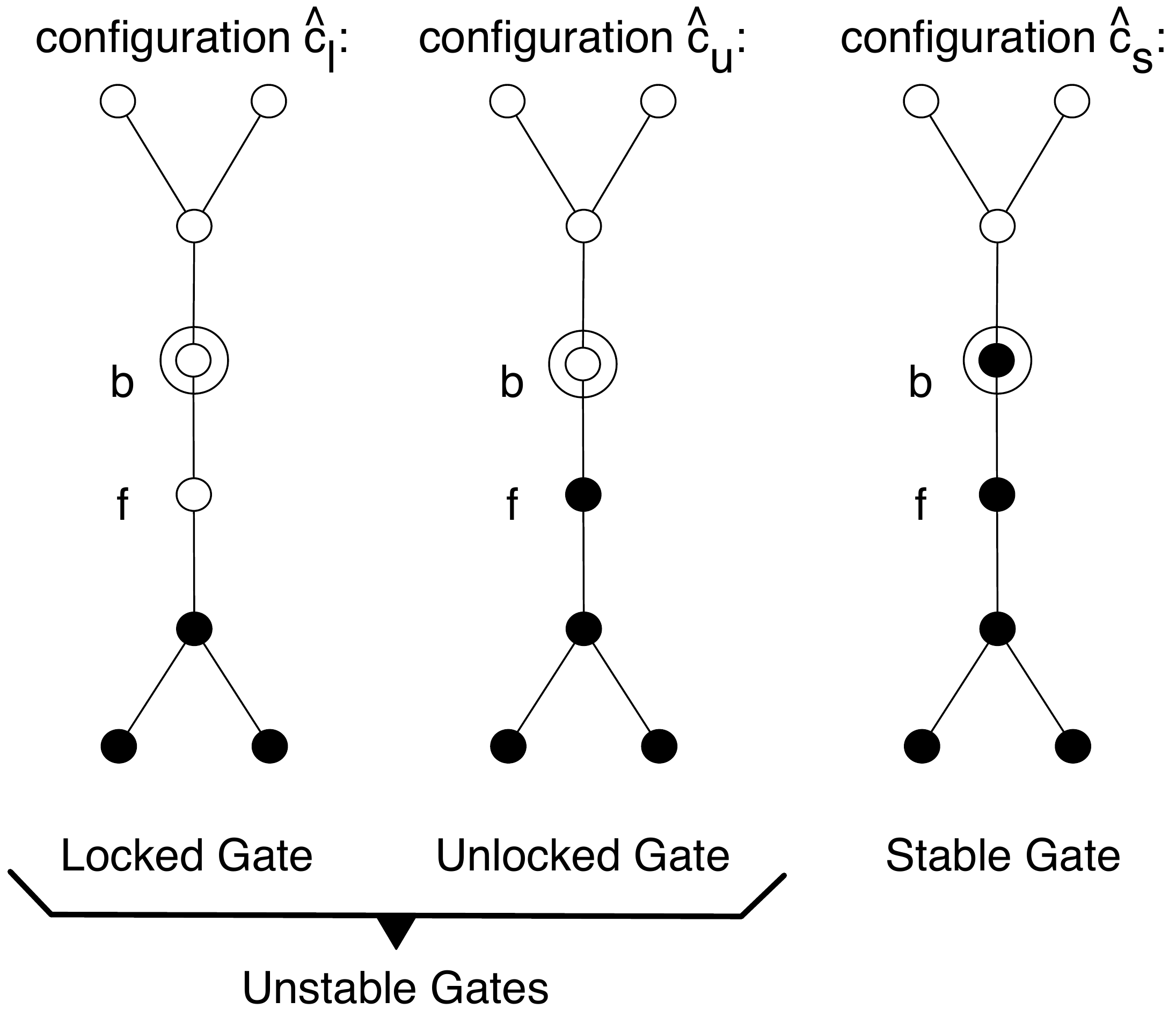}} \hfill  ~
\\[-6mm]
\multicolumn{1}{p{5.5cm}}{\small \ref{fig:exponential}.a -- The 3 widgets used in the construction of a biased tree.
  Gray denotes the fact that the vertex state is not represented.}
&
\multicolumn{1}{p{6cm}}{\small \ref{fig:exponential}.b -- The three configurations $\cdual_l$,$\cdual_u$ and $\cdual_s$.}
\end{tabular}}
\caption{Widgets used in the construction of a biased tree.}
\label{fig:exponential}
\end{figure}

\begin{definition}[Biased trees]
\label{def:biased:tree}
Let $(W_{i})_{0\leq i \leq n+1}$ be a finite sequence of widgets where $W_i =
(\Cells_i, \Edges_i,b_i)$. From this sequence, we define the
tree~$\Trees=(\Cells,\Edges)$ where~$\Cells = \cup_{i=0}^{n+1} \Cells_i$ and
$\Edges= (\cup_{i=0}^{n+1} \Edges_i) \bigcup (\cup_{i=0}^{n}
b_ib_{i+1})$. Abusively we also denote by $(W_{i})_{0\leq i \leq n+1}$ the tree
generated by this sequence. A \emph{biased tree} of size $n$ is a finite
sequence of widgets~$(W_{i})_{0\leq i \leq n+1}$ where $W_0$ is a head, for
$1\leq i \leq n$,~$W_i$ is a gate and $W_{n+1}$ is a tail.
\end{definition}

\begin{figure}[htb]
\includegraphics[width=12cm]{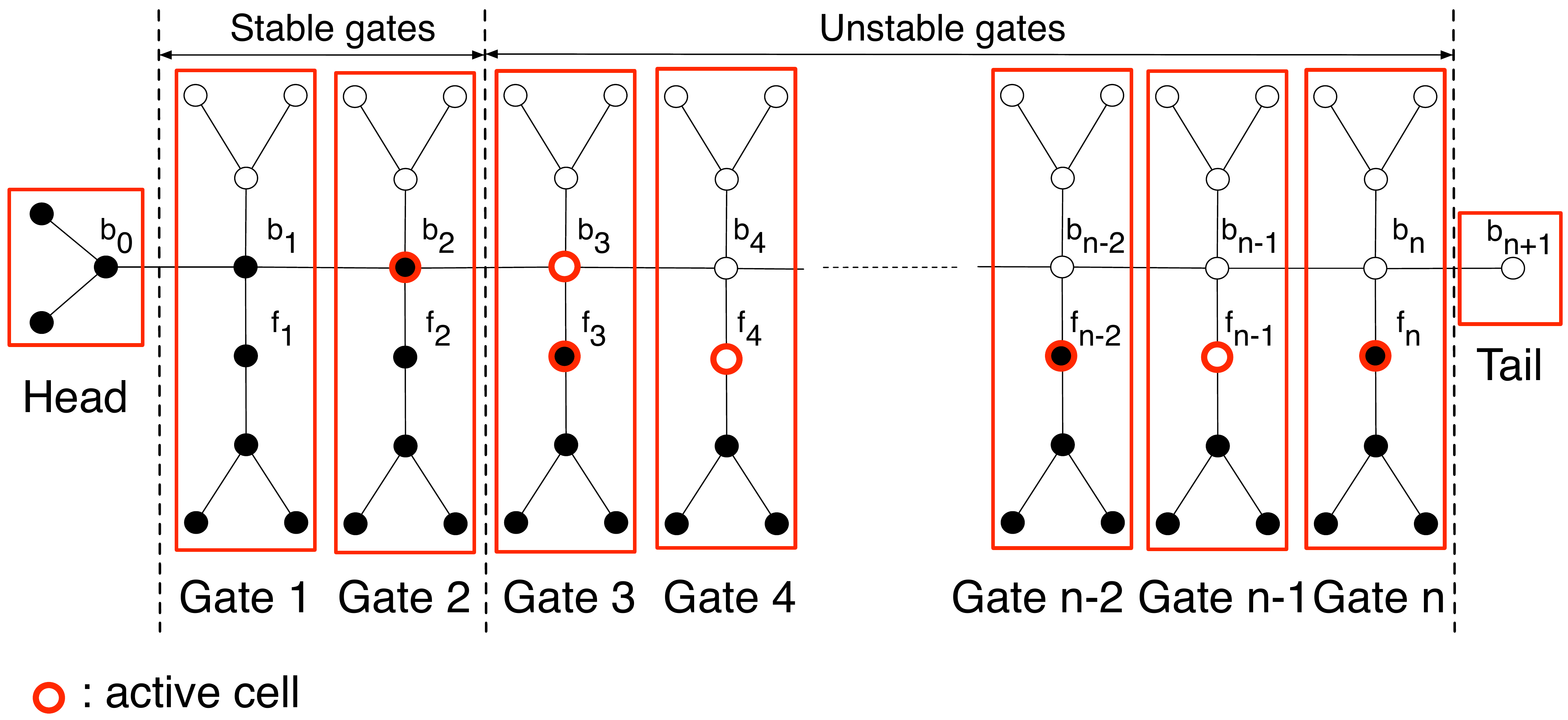}
\caption{A biased tree and a correct configuration on unlocked position $2$.}
\label{fig:biased}
\end{figure}

\begin{definition}[Stable and unstable gates]
\label{def:conf:gate}
Consider a biased tree~$(W_{i})_{0\leq i \leq n+1}$ and a configuration
$\cdual$. We denote by~$\cdual_{W_i}$, the restriction of~$\cdual$ to
widget~$W_i$. We say that gate $i$ is locked if $\cdual_{W_i} = \cdual_l$,
unlocked if $\cdual_{W_i} = \cdual_u$ and stable if $\cdual_{W_i} =
\cdual_s$. An unstable gate is a gate which is locked or unlocked.
\end{definition}

\begin{definition}[Correct configuration]
\label{def:correct}
Configuration $\cdual$ is \emph{correct} if vertices of the head are black, the
tail is white, and there exists a $j$ such that for all $1 \leq i \leq j$ gate $i$
is stable and for all $j < k \leq n$ gate $k$ is unstable. We say that
configuration $\cdual$ is on \emph{position} $j$. We denote by $\Pos(\cdual)$
the position of configuration $\cdual$. The position is unlocked if $j=n$ or
gate $j+1$ is unlocked, the position is locked otherwise.
\end{definition}

\begin{fact}[Active vertices]
\label{fact:biased:actice}
Consider a correct configuration $\cdual$ on position $j$. The active vertices of~$\cdual$ are:
\begin{itemize}
\item Vertex $b_j$ if $j \neq 0$.
\item Vertex $b_{j+1}$ if $j \neq n$ and gate $W_{j+1}$ is unlocked.
\item Vertex $f_i$ if $j < i \leq n$.
\item Vertex $b_{n+1}$ if $j = n$.
\end{itemize}
\end{fact}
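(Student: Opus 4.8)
The plan is to verify Fact~\ref{fact:biased:actice} by a purely local computation, exploiting that the dual rule $\ddual$ is local and that the only edges joining two distinct widgets are the bridge edges $b_ib_{i+1}$. Recall that in the dual dynamics a vertex $v$ is active exactly when at least half of its neighbors (excluding $v$ itself) carry a state different from $v$, i.e. when $\hat{v}_v \ge \frac{|\neigh_v|-1}{2}$. Since a biased tree is assembled by linking consecutive bridges, every vertex that is \emph{not} a bridge has all of its neighbors inside a single widget; hence its activity depends only on the internal configuration of that widget.

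First I would treat the non-bridge vertices. For each of the five relevant internal configurations --- a head, a tail, and a gate in state $\cdual_s$, $\cdual_l$ or $\cdual_u$ --- I read off the states of its vertices from Figure~\ref{fig:exponential} and, for each non-bridge vertex, count the differing neighbors and compare with half its degree. This finite tabulation shows that no non-bridge vertex of a head, a tail, or a stable gate is active, whereas in a locked or unlocked gate the unique active non-bridge vertex is $f_i$. Combined with the correct-configuration hypothesis (gates $1,\dots,j$ stable and gates $j+1,\dots,n$ unstable, head black, tail white), this immediately yields the third item ($f_i$ active for $j<i\le n$) and rules out every non-bridge vertex outside the unstable gates.

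It then remains to classify the bridge vertices $b_0,\dots,b_{n+1}$. For a bridge $b_i$ I must add to its internal neighbors the two adjacent bridges $b_{i-1}$ and $b_{i+1}$ (one of them absent at the ends). Using the states forced by the correct configuration, I determine the states of $b_i$ and of all its neighbors and again test the threshold $\hat{v}_{b_i}\ge\frac{|\neigh_{b_i}|-1}{2}$. Away from the interface the picture is uniform: for $i<j$ (inside the stable block) and for $i>j+1$ (well inside the unstable block) the bridge sits in a configuration in which a strict majority of neighbors agrees with it, so it is inactive; the same holds for $b_0$ in the head. The only delicate checks are at the stable/unstable interface, namely $b_j$ and $b_{j+1}$, and at the right end $b_{n+1}$.

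The main obstacle, and the crux of the argument, is exactly these interface computations. At $b_j$ the mixture of the stable widget to its left and the first unstable widget to its right tips the count up to the active threshold, giving the first item; at $b_{j+1}$ the count reaches the threshold if and only if gate $j+1$ is unlocked rather than locked, giving the second item; and $b_{n+1}$ (the tail bridge) becomes active precisely when there is no unstable gate to its left, i.e. when $j=n$, giving the fourth item. Each of these is a single explicit inequality, but one has to track the degrees carefully (bridges may have degree up to $4$) and to handle the equality case of the rule, where a tie counts as active and triggers a flip; this tie-breaking is what separates the locked from the unlocked behaviour at $b_{j+1}$.
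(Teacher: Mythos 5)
Your proposal is correct and follows essentially the same route as the paper's proof: a finite local case analysis that first rules out all non-bridge widget vertices except the $f_i$ of unstable gates (using the widget configurations from Figure~\ref{fig:exponential}), and then checks the bridges, with the only nontrivial cases being the interface vertices $b_j$, $b_{j+1}$ (where the locked/unlocked tie-breaking matters) and $b_{n+1}$. The paper is merely terser, e.g.\ it summarizes the bridge analysis by the observation that $b_i$ is inactive whenever $\cdual(b_{i-1})=\cdual(b_i)=\cdual(b_{i+1})$, but the underlying verification is the same as yours.
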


\begin{proof}
  Consider a correct configuration $\cdual$ on position $j$. The only vertices
  which may be active in~$\cdual$ are vertices $b_i$ and $f_i$ for $ 1\leq i \leq
  n$ and vertex $b_{n+1}$. Vertex $b_{n+1}$ is active if and only if~$\cdual(b_{n})
  = 1$ that is to say gate $W_{n}$ is stable. For all $1\leq i \leq n $, vertex
  $f_i$ is active if and only if the gate $W_i$ is unstable that is to
  say~$j<i\leq n$.  For all $1\leq i \leq n$, vertex $b_i$ is inactive
  if~$\cdual(b_{i-1})= \cdual(b_{i}) = \cdual(b_{i+1})$. Thus among
  vertices~$(b_i)_{1 \leq i \leq n}$, only vertices $b_j$ and $b_{j+1}$ may be active:
  vertex $b_j$ is active and vertex $b_{j+1}$ is active if gate $W_{j+1}$ is
  unlocked.
\qed
\end{proof}

\begin{definition}[Final configuration]
\label{def:biased:final:configuration}
The \emph{final configuration} $\cdual_f$ is the configurations where vertices of
the head are black, the tail is black and every gate is stable. We say that
$\cdual_f$ is on position $n+1$, $\Pos(\cdual_f) = n+1$.
\end{definition}

\begin{lemma}
Configuration $\cdual_f$ is stable.
\end{lemma}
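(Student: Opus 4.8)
The plan is to show directly that no vertex of $\cdual_f$ is active, which by the definition of the dual potential amounts to verifying $\hat{v}_i < \frac{|\neigh_i|-1}{2}$ for every vertex~$i$. The key structural observation I would exploit is that, under the dual dynamics, whether a vertex is active depends only on its own state and the states of its neighbours. This lets me compare $\cdual_f$ with the correct configuration on position~$n$ (Definition~\ref{def:correct}): both have every gate stable, and they agree everywhere except on the tail widget~$W_{n+1}$, whose vertices are black in $\cdual_f$ and white in the correct configuration. Since the only edge joining the tail to the rest of the tree is $b_nb_{n+1}$, the only vertices whose neighbourhood states differ between the two configurations are the vertices of the tail (including $b_{n+1}$) and the bridge $b_n$. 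Every other vertex therefore has the same activity status in $\cdual_f$ as in the correct configuration on position~$n$.

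By Fact~\ref{fact:biased:actice} applied with $j=n$, the active vertices of the correct configuration on position~$n$ are exactly $b_n$ and $b_{n+1}$, so all remaining vertices are inactive there and hence inactive in $\cdual_f$ as well. It then remains to recheck only $b_n$, $b_{n+1}$ and the interior tail vertices. Here I would use that in $\cdual_f$ every bridge is black: $b_0$ belongs to the black head, each $b_i$ for $1\le i\le n$ is the bridge of a stable gate (black, as in any correct configuration), and $b_{n+1}$ belongs to the black tail. Consequently $\cdual(b_{n-1})=\cdual(b_n)=\cdual(b_{n+1})$, so the bridge-inactivity criterion established inside the proof of Fact~\ref{fact:biased:actice} makes $b_n$ inactive. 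For the interior tail vertices and for $b_{n+1}$, all of their neighbours are black (the tail is monochromatic and $b_n$ is black), hence their dual potential is $0$, which is strictly below the threshold $\frac{|\neigh_i|-1}{2}$ for any vertex of positive degree; they are therefore inactive. This exhausts the recheck set and shows $\cdual_f$ is stable.

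The mathematical content here is slim: the argument is essentially local bookkeeping, and the one genuine point is the observation that flipping the tail from white to black is exactly the move that deactivates the pair $b_n, b_{n+1}$ that was active at position~$n$. The step requiring most care is justifying that the difference between the two configurations is confined to the tail and to $b_n$, and reading off from the head and tail widgets (Figure~\ref{fig:exponential}.a) that a monochromatic black head and a monochromatic black tail contain no active vertex; once the all-black local pictures are in hand this is immediate, since every such vertex has dual potential~$0$. I do not expect any genuine obstacle beyond this verification.
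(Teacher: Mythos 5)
Your proof is correct and follows essentially the same route as the paper: the paper also compares $\cdual_f$ with the correct configuration on position~$n$, invokes Fact~\ref{fact:biased:actice} to see that only $b_n$ and $b_{n+1}$ are active there, and observes that the move to $\cdual_f$ (phrased in the paper as firing $b_{n+1}$) deactivates both without activating any other vertex. Your version merely makes explicit the locality bookkeeping (only the tail and $b_n$ need rechecking) that the paper leaves implicit.
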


\begin{proof}
  Consider the correct configuration $\cdual$ on position $n$. According to
  fact~\ref {fact:biased:actice}, only vertices $b_{n}$ and $b_{n+1}$ are
  active. If vertex $b_{n+1}$ fires, these two vertices become inactive and no other
  vertex becomes active. Firing vertex $b_{n+1}$ leads to configuration
  $\cdual_f$. Thus~$\cdual_f$ is stable.
\qed
\end{proof}

\begin{corollary}
\label{cor:update}
Consider a correct configuration $\cdual$, then configuration $\cdual' =
\ddual(\cdual)$ is either correct or $\cdual_f$. Moreover
${|\Pos(\cdual')-\Pos(\cdual)|\leq 1}$.
\end{corollary}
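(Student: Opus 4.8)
The plan is to prove Corollary~\ref{cor:update} by a direct case analysis on the active vertices of a correct configuration, using Fact~\ref{fact:biased:actice} to enumerate exactly which vertices can fire. Since the dynamics $\ddual$ fires a single vertex, and firing an inactive vertex leaves the configuration unchanged, I only need to examine the effect of firing each of the (at most four) active vertices listed in Fact~\ref{fact:biased:actice} for a correct configuration $\cdual$ on position $j$. For each such firing I must check two things: that the resulting configuration $\cdual'$ is again correct (or equals $\cdual_f$), and that $|\Pos(\cdual')-\Pos(\cdual)|\le 1$.

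\textbf{First} I would dispose of the special cases at the ends of the line. If $j=n$, the only relevant firings are $b_n$ and $b_{n+1}$; firing $b_{n+1}$ yields $\cdual_f$ (this is exactly the content of the preceding lemma, with $\Pos(\cdual_f)=n+1$, so the position increases by~$1$), while firing $b_n$ must be shown to toggle gate $W_n$ between locked and unlocked, keeping the position at $n$. The symmetric boundary $j=0$ (head) is handled by firing $b_1$ and $f_i$'s. \textbf{Then} for a generic interior position $0<j<n$, I would treat each active vertex in turn. Firing $f_i$ for $j<i\le n$ is a purely local event inside an unstable gate: it toggles that gate between its locked and unlocked restrictions $\cdual_l,\cdual_u$ without affecting any bridge vertex, so the set of stable gates $\{1,\dots,j\}$ is unchanged and $\Pos$ stays equal to~$j$. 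Firing the bridge $b_j$ similarly toggles the locked/unlocked status of the boundary gate while preserving the stable/unstable split, so again $\Pos$ is unchanged. The two genuinely position-changing events are: firing $b_{j+1}$ when gate $W_{j+1}$ is unlocked, which stabilises that gate and moves the boundary one step right ($\Pos$ increases to $j+1$); and firing $b_j$ in the configuration where this destabilises gate $W_j$, moving the boundary one step left ($\Pos$ decreases to $j-1$). In every case the three invariants defining correctness (head black, tail white, prefix of stable gates followed by a suffix of unstable gates) are re-established, and the position changes by at most one.

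\textbf{The key step} is verifying that after firing a bridge vertex the configuration is still \emph{correct} in the strict sense of Definition~\ref{def:correct}, namely that the partition into a stable prefix and an unstable suffix is maintained and no ``gap'' is created. This requires checking that toggling one gate's restriction to $\cdual_s$, $\cdual_l$, or $\cdual_u$ does not render an adjacent gate's vertices active in a way that would break the prefix/suffix structure on the \emph{next} step; but since Fact~\ref{fact:biased:actice} already pins down all active vertices of any correct configuration, it suffices to confirm that $\cdual'$ is one of the enumerated correct configurations and then re-invoke the fact. \textbf{The main obstacle} I anticipate is the bookkeeping at the rightmost boundary gate: I must be careful that the majority/equality rule of $\ddual$ at $b_j$ and $b_{j+1}$ produces exactly the intended restrictions $\cdual_s$, $\cdual_l$, $\cdual_u$ and not some unintended pattern, which forces me to appeal to the explicit states shown in Figure~\ref{fig:exponential}.b for the widget configurations rather than reasoning abstractly. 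Once the firing outcome of each active vertex is pinned to a named configuration, correctness and the bound $|\Pos(\cdual')-\Pos(\cdual)|\le1$ follow immediately, completing the proof.
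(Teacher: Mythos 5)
Your overall strategy---enumerate the active vertices via Fact~\ref{fact:biased:actice} and check, for each possible firing, that the result is again correct (or equals $\cdual_f$) and that the position moves by at most one---is exactly the paper's proof. However, your case analysis of the bridge vertex $b_j$ contains a genuine error and in fact contradicts itself. In a correct configuration on position $j$, gate $W_j$ is by definition \emph{stable} ($\cdual_{W_j}=\cdual_s$), so firing $b_j$ cannot ``toggle the locked/unlocked status of the boundary gate while preserving the stable/unstable split'': it changes the state of a vertex of the stable gate $W_j$, turning that gate into an unlocked one, and the position therefore always drops to $j-1$. There is no firing of $b_j$ that leaves $\Pos$ unchanged; the only position-preserving firings are those of the $f_i$ with $j<i\le n$, which toggle unstable gates between $\cdual_l$ and $\cdual_u$. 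You later state the correct effect (``firing $b_j$ \ldots\ destabilises gate $W_j$, moving the boundary one step left''), but assigning two incompatible outcomes to the same vertex shows the transition was never actually pinned down---and pinning down exactly one outcome per active vertex is the entire content of this proof.

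The same confusion corrupts your boundary case $j=n$: there all gates are stable, so firing $b_n$ unlocks the stable gate $W_n$ and yields a correct configuration on position $n-1$; it does not ``toggle gate $W_n$ between locked and unlocked, keeping the position at $n$'' (indeed, when $\Pos(\cdual)=n$ there is no unstable gate to toggle). Once these cases are corrected---firing $b_j$ (for $j\neq 0$) always gives position $j-1$; firing $b_{j+1}$ (active only when $W_{j+1}$ is unlocked) stabilises that gate and gives position $j+1$; firing $f_i$ keeps position $j$; and firing $b_{n+1}$ (when $j=n$) gives $\cdual_f$---your argument coincides with the paper's.
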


\begin{proof}
  Consider a correct configuration $\cdual$ on position $j$ and the
  configuration $\cdual' = \ddual(\cdual)$. If an inactive vertex fires then
  $\cdual' = \cdual$. Now consider that an active vertex fires (see fact
  \ref{fact:biased:actice}):
\begin{itemize}
\item if $j\neq 0$ and vertex $b_j$ fires: then gate $W_j$ becomes unlocked and
  $\cdual'$ is a correct configuration on unlocked position $j-1$.
\item if $j\neq n$ and vertex $b_{j+1}$ fires: then gate $W_{j+1}$ becomes stable
  and $\cdual'$ is a correct configuration on position $j+1$.
\item if vertex $f_{i}$ fires with $j < i \leq n$: then gate $W_{i}$ becomes
  unlocked (resp. locked) in $\cdual'$ if it is locked (resp. unlocked) in
  $\cdual$. Configuration $\cdual'$ stays correct and on position $j$.
\item if $j=n$ and vertex $b_{n+1}$ fires: then $\cdual' =\cdual_f$.
\qed
\end{itemize}
\end{proof}

\begin{theorem}
\label{the:biased:tree}
On biased trees of size~$n$ (i.e. $N=8n+4$ vertices), starting from a correct
configuration, Stochastic Minority converges almost surely to $c^f$. Moreover
the hitting time~$T$ of the limit set satisfies ${\Theta(1.5^n)\leq
  \Esperance{T} \leq \Theta(n4^n)}$.
\end{theorem}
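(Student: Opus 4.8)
The plan is to reduce the dual dynamics $\ddual$ on correct configurations to a birth--death process on the position $\Pos(\cdual^t)\in\{0,\dots,n,n+1\}$, and then to sandwich this process between two biased random walks so that the previous theorem on $\theta_k$ applies. Corollary~\ref{cor:update} already guarantees that $\Pos$ moves only by $\pm1$ and that position $n+1$ (i.e. $\cdual_f$, whose original is $c^f$) is absorbing, while Fact~\ref{fact:biased:actice} describes exactly which firings change it. So the whole difficulty is to (i) quantify the bias of this walk and (ii) translate a number of \emph{position changes} into a number of \emph{Minority firings}.

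First I would compute the transitions of the embedded walk on positions. Fix an interior position $j$. By Fact~\ref{fact:biased:actice}, the only firings that change either $\Pos$ or the locked/unlocked state of the boundary gate $W_{j+1}$ are those of $b_j$, of $b_{j+1}$, and of $f_{j+1}$, each occurring with probability $1/N$; firing any other vertex leaves both unchanged. Writing $p_U$ (resp. $p_L$) for the probability that the next change of $\Pos$ is an increase, given that $W_{j+1}$ is currently unlocked (resp. locked), conditioning on the first of these three (resp. two) relevant firings yields $p_U=\tfrac13+\tfrac13 p_L$ and $p_L=\tfrac12 p_U$, hence $p_U=\tfrac25$ and $p_L=\tfrac15$. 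Both are $<\tfrac12$: the walk is biased \emph{away} from the absorbing barrier, which is what forces exponential time. Moreover a leftward move (firing $b_j$) always lands on a position whose boundary gate is unlocked, whereas a rightward move (firing $b_{j+1}$) leaves the new boundary gate in an a priori unknown state; consequently the conditional probability of the next rightward move, given the whole past, always lies in $[\tfrac15,\tfrac25]$.

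Second, I would couple $\Pos(\cdual^t)$, observed at its successive jump times, with two biased random walks of Definition~\ref{def:biased:walk}: one with rightward probability $\tfrac25$ kept always $\ge\Pos$, and one with rightward probability $\tfrac15$ kept always $\le\Pos$. Since the conditional rightward probability lies in $[\tfrac15,\tfrac25]$ at every jump, such couplings exist, and the number $S$ of position changes before absorption satisfies $\theta_k(\tfrac25)\le\Esperance{S}\le\theta_k(\tfrac15)$ with $k=\Pos(\cdual^0)$. Using $\frac{1-p}{p}=\tfrac32$ at $p=\tfrac25$ and $\frac{1-p}{p}=4$ at $p=\tfrac15$, together with $\theta_k(p)=\Theta((\frac{1-p}{p})^n)$ for every $k\le n-1$, this gives $\Theta(1.5^n)\le\Esperance{S}\le\Theta(4^n)$. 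To pass from $S$ to the true hitting time $T$ (counted in firings), note that at any position a position-changing vertex is fired with probability between $1/N$ and $2/N$, so the expected number of firings per position change lies in $[N/2,N]$. Each position change costs at least one firing, whence $\Esperance{T}\ge\Esperance{S}=\Omega(1.5^n)$; and by a Wald / optional-stopping argument $\Esperance{T}\le N\,\Esperance{S}=O(N4^n)=O(n4^n)$, since $N=8n+4$. Almost-sure convergence to $c^f$ follows because the rightward probability stays $\ge\tfrac15>0$ and the finite position chain has $n+1$ as its unique absorbing class.

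The step I expect to be the real obstacle is making this comparison rigorous even though the process is \emph{not} Markovian in $\Pos$ alone: after a rightward move the state of the new boundary gate depends on the past, and the far gates keep toggling during the waits. The way around it is precisely the observation above that, conditioned on any history, the next rightward probability is one of the two values $\tfrac15,\tfrac25$ and hence confined to $[\tfrac15,\tfrac25]$, which is exactly what a monotone coupling requires. This indeterminacy of the arrival state is also what opens the gap between the bases $1.5$ and $4$ in the statement, while the $\Theta(N)=\Theta(n)$ firings per position change account for the extra factor $n$ in the upper bound.
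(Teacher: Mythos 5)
Your proposal is correct and follows essentially the same route as the paper: the paper likewise passes to the embedded walk on positions via the stopping times $t_i$, obtains exactly the rightward probabilities $2/5$ (unlocked) and $1/5$ (locked), sandwiches this walk between the biased random walks of Definition~\ref{def:biased:walk} to get between $\Theta(1.5^n)$ and $\Theta(4^n)$ position changes, and converts position changes to firings at an expected cost between $1$ and $\Theta(n)$ each, yielding $\Theta(1.5^n)\leq\Esperance{T}\leq\Theta(n4^n)$. If anything, your history-conditioned monotone coupling is more careful than the paper's one-line assertion that $(X_i)$ behaves as in Definition~\ref{def:biased:walk} (the embedded process is indeed not Markov in the position alone); conversely, the paper is slightly more explicit about the final move from position $n$ to $c^f$, where the rightward probability is $1/2$ rather than in $[1/5,2/5]$, and about the locked position $0$ where no single firing changes the position --- two edge cases your argument glosses over but which do not affect the bounds.
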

\begin{proof}
  Consider a biased tree of size $n$, an initial correct configuration
  $\cdual^0$ on position $0$ and the sequence $(\cdual^t)_{t \geq 0}$. Dynamics
  $\ddual$ converges almost surely from initial configuration $\cdual^0$
  and~$c^{T}=c_f$. We define the sequence of random variable~${(t_i)_{i\geq 0}}$
  as~$t_0 =0$ and
  $t_{i+1}=\min\set{t>t_{i}}{\Pos(\cdual^{t_{i+1}}) \neq \Pos(\cdual^{t_i})\text{ or }\Pos(\cdual^{t_{i+1}}) = n+1}$.
  Consider the sequence
  of random variable $(X_i)_{i \geq 0}$ such that $X_i =
  \Pos(\cdual^{t_i})$. According to corollary \ref{cor:update}, $|X_{i-1}-X_i| =
  1$. 

  Consider a configuration $\cdual^t$ on locked position $n > j > 0$ then firing
  vertex $b_j$ leads to a configuration on position $j-1$ and firing vertex $f_j$
  leads to a configuration on unlocked position $j$. Firing other vertices does not
  affect the position of the configuration. Consider a configuration $\cdual^t$
  on unlocked position $n > j > 0$ then firing vertex $b_j$ leads to a
  configuration on position $j-1$, firing vertex $f_j$ leads to a configuration on
  locked position $j$ and firing vertex $b_{j+1}$ leads to a configuration on
  position $j+1$. Firing other vertices does not affect the position of the
  configuration. A vertex has a probability $1/N$ to fire where $N=4+8n$. Thus,
  the evolution of a configuration on position $0 < j < n$ can be summarized as:

\centerline{\includegraphics[width=\linewidth]{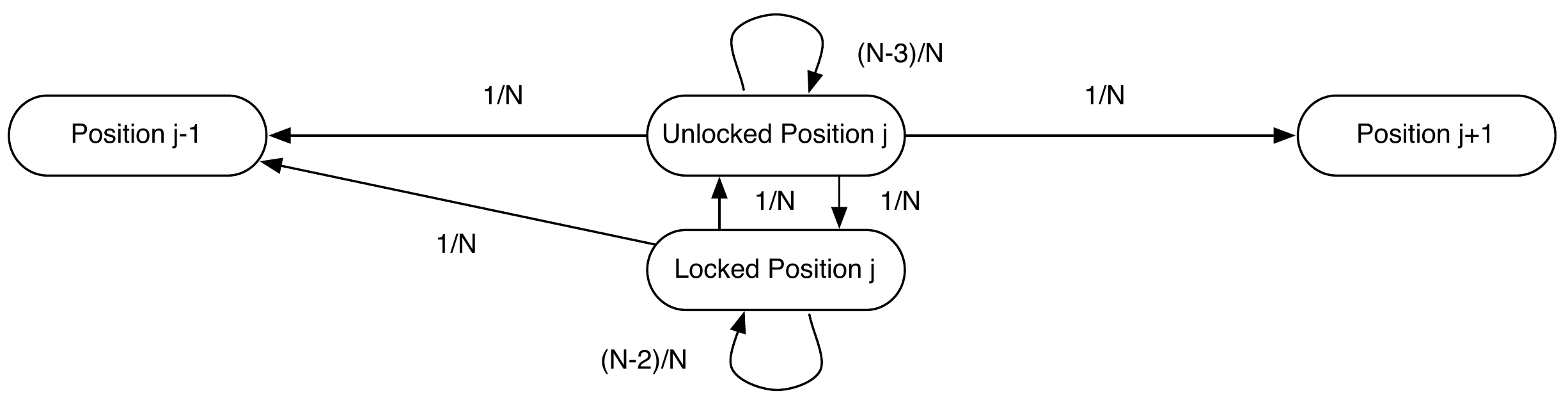}}

A basic analysis yields that:

\begin{itemize}
\item if $1 \leq x  \leq n$ then $\Psachant{X_{i+1}=x+1}{X_i=x} = 1-\Psachant{X_{i+1}=x-1}{X_i=x}$
  and $1/5 \leq \Psachant{x_{i+1}=x+1}{X_i=x} \leq 2/5$.
\item $\Psachant{X_{i+1}=1}{X_i=0}=1$.
\item $\Psachant{X_{i+1}=n+1}{X_i=n+1}=1$.
\end{itemize}

Thus the behavior of $(X_i)_{i\geq0}$ is as described in definition
\ref{def:biased:walk}.  We define the random variable $T'=\min\set{i\ge0}{X_i=n}$ which
corresponds to the first time when all gates are stable, then
$\Theta(\frac{3}{2}^n) \leq \Esperance{T'} \leq \Theta(4^n)$ 
(see
Def. \ref{def:biased:walk}). We call $c^{f-1}$ the correct configuration on
position $n$ (\emph{i.e.} all gates are stable). Then $c^T=c^f$, $c^{T-1} =
c^{f-1}$ and $\Psachant{c^{t+1}=c^f}{c^t=c^{f-1}} =1/2$. Thus, $\Esperance{T} = \Theta
(\Esperance{t_{T'}}$). By definition, $t_{T'}= \sum_{i=1}^{T'}(t_{i}-t_{i-1}) =
\sum_{i=1}^{\infty}[(t_{i}-t_{i-1})1_{t_i<T'}]$. Since there are at most $2$
vertices which may modify the position of a correct configuration, we have $1 \leq
\Esperance{t_{i+1}-t_{i-1}} \leq \Theta (n)$. Thus $\sum_{i=1}^{\infty}(1_{t_i<T'}) \leq
\Esperance{t_{T'}} \leq \Theta (\sum_{i=1}^{\infty}(n1_{t_i<T'}))$. We conclude that
${\Theta(\left(\frac{3}{2}\right)^n)\leq \Esperance{T} \leq \Theta(n4^n)}$.
\qed
\end{proof}

\subsubsection{Subcase: Binary Trees Converge Rapidly}
In this section, we note that
on binary trees, i.e. trees where the degrees are at most~3, the dynamics ends by fixing the states
of the vertices of degree~1 and~3 (see Remark~\ref{rem:odd-degree})
and some isolated particles may remain and oscillate on disjoint paths.

\begin{definition}[Path]
  \parpic[r]{
    \scalebox{0.6}{
\begin{tikzpicture}[grow cyclic,shape=circle]
\tikzstyle{level 1}=[level distance=10mm,sibling angle=120]
\tikzstyle{level 2}=[level distance=10mm,sibling angle=120]
\tikzstyle{level 3}=[level distance=10mm,sibling angle=120]
\tikzstyle{level 4}=[level distance=10mm,sibling angle=120]
\tikzstyle{every node}=[fill=black!20!white,draw=black,shape=circle,cap=round]
\coordinate[rotate=-90] 
node[fill=black] {}
child {node {}}
child {
  node[fill=black] {}
  child {
    node[fill=black]{}
    child{
    node[fill=black]{}
    child{
    node[fill=black]{}
}}}}
child {node {}}
;

\end{tikzpicture}
}
}
In this subcase, we call \emph{path} a connected subgraph
where all nodes but the end nodes have degree 2
(end nodes must thus have degree 1 or 3). An example is composed of the black nodes on this figure:
\end{definition}

\begin{theorem}
\label{thm:tree-deg3}
Stochastic Minority on trees with degrees at most~3 hits the limit set
in~$O(N^4)$ steps on expectation. The attractors of a tree~$\Trees$ are in
bijection with the matchings of the reduced tree~$\Trees'$ where each path
of~$\Trees$ has been replaced by an edge, then each leaf has been removed.
\end{theorem}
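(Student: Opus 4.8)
The plan is to combine the particle description with the energy argument to pin down the attractors, and then to bound the convergence time by iterating the single-path analysis of Theorem~\ref{prop:cycle-is-random-walk}. First I would record which vertices can ever move in an attractor. By Remark~\ref{rem:odd-degree} the vertices of degree~$1$ and~$3$ are inactive in the limit set; since an inactive vertex does not change state when fired, these vertices keep a fixed state throughout any attractor, so the only oscillating vertices are the degree-$2$ interiors of the \emph{paths}. I then translate inactivity into the particle language: an inactive vertex of degree~$d$ carries at most $\lfloor (d{-}1)/2\rfloor$ incident particles, so a leaf carries none and a degree-$3$ junction carries at most one. Because a configuration on a tree is determined, up to the global black/white symmetry, by its particle set, describing the limit set amounts to describing which edges may hold particles: on each path the endpoint states are frozen, hence the parity of its particle number is fixed, and annihilating random walks (as in Theorem~\ref{prop:cycle-is-random-walk}) reduce it to $0$ or $1$.

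The crucial structural point is the interaction at a degree-$3$ junction $v$. If two of the paths incident to~$v$ each carried a free particle, one could walk both particles onto the two edges at~$v$, making $v$ active, and then fire~$v$ to annihilate; by Corollary~\ref{prop:limit-set-iff-energy-cannot-decrease} such a configuration is not in the limit set. Thus at most one path at each junction carries a particle, which is exactly the condition that the set of particle-carrying paths be a matching of~$\Trees'$. A particle on a path ending at a leaf can always be pushed to that leaf and absorbed (an energy drop), so leaf-paths carry no particle in the limit set, which is precisely why leaves are deleted when forming $\Trees'$. Conversely, for any matching I would exhibit a limit-set configuration (one particle per matched path, none elsewhere) and verify that firing inactive junctions or leaves is impossible while the single particle on each matched path sweeps the whole path through reversible, energy-preserving firings; these configurations then form one strongly connected attractor, realising the three cases of Fact~\ref{prop:attractor-decomposition} (frozen white, frozen black, oscillating degree-$2$ vertices). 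Distinct matchings give disjoint attractors because a particle cannot cross a junction without lowering the energy, so the particle-carrying path set is an invariant of the attractor; this yields the claimed bijection (up to the global symmetry, which doubles the raw count).

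For the $O(N^4)$ bound I would argue by phases counted by the energy. The energy is at most $2|\Edges| = 2(N{-}1)$ and drops by at least~$2$ at each annihilation, so at most $N{-}1$ energy-decreasing firings occur before the limit set is reached. It therefore suffices to show that, from any configuration not in the limit set, the expected time until the next energy decrease is $O(N^3)$. Between two annihilations the dynamics is a family of annihilating/absorbing random walks of particles along paths and across junctions; the worst case is a pair of particles (or a lone particle heading for a leaf) that must meet, and I would bound the corresponding meeting/hitting time on the tree by $O(N^3)$, the same bound obtained for a single path in Theorem~\ref{prop:cycle-is-random-walk} (the extra factor $N$ over the diffusive $N^2$ coming from the full-asynchronous $1/N$ firing probability). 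Multiplying the two estimates gives $\Esperance{T}=O(N^4)$.

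The hard part will be the uniform $O(N^3)$ bound on the time to the next annihilation on an arbitrary tree: one must control meetings of several walks that branch at junctions and argue that branching only helps. I expect to adapt the single-path submartingale of Theorem~\ref{prop:cycle-is-random-walk} (the quantity $X_t^2-3t/N$) by projecting onto the path joining the two closest mergeable particles, so that the tree structure can be dominated by a one-dimensional walk. Making this reduction rigorous, rather than the essentially combinatorial matching characterization, is where the real work lies.
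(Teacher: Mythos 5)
Your limit-set characterization is the paper's own argument almost line for line: at most one particle per path, no particles on two paths sharing a degree-3 extremity (walk them to the junction and fire it), none on a path ending at a leaf, and conversely such configurations can never lose energy, whence the correspondence with matchings of~$\Trees'$. Your parenthetical about the global black/white symmetry is in fact a point the paper glosses over (the odd-degree vertices are frozen, so each particle set yields two complementary attractors; already for a single edge there are two stable configurations but only the empty matching), so being explicit there is a small improvement. The skeleton of your time bound -- at most $O(N)$ annihilations, each preceded by an expected wait of $O(N^3)$ -- is also exactly the paper's decomposition.

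The genuine gap is precisely the step you defer to ``the real work'': the $O(N^3)$ bound when the two nearest mergeable particles sit on two \emph{different} paths meeting at a degree-3 junction~$i$, and your proposed one-dimensional projection fails there as stated. Write $x\in\{1,\dots,n\}$ and $y\in\{1,\dots,m\}$ for the distances of the two particles to~$i$; the separation along the joining path is $d=x+y$, and annihilation requires not merely $d=2$ but that $i$ fires while $(x,y)=(1,1)$. The trouble is the drift at the boundary: when $x=1$ and $y>1$, firing $i$ with a single incident particle moves nothing (one particle is fewer than $\deg(i)/2=3/2$), so the particle at distance~$1$ can only move \emph{away} from~$i$, giving $\Esachant{\Delta d}{x=1,\,y>1}=+1/N>0$. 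Thus $d$ has strictly positive drift exactly near the target, it is not a supermartingale, and the device of Theorem~\ref{prop:cycle-is-random-walk} (the submartingale $X_t^2-3t/N$ plus optional stopping) does not transfer; the junction acts as an entropic barrier that no monotone domination by an unbiased 1D walk captures. This is why the paper abandons the one-dimensional picture at this point and analyzes the genuinely two-dimensional chain $(x,y)$ on $\{1,\dots,n\}\times\{1,\dots,m\}$ with reflecting boundaries, bounding the expected hitting time of the corner by $O(\max(n,m)^3)$ via electrical-network tools -- the commute time identity and Thomson's principle with an explicit unit flow (Section~\ref{hitting-time-rectangle}). To complete your proof you would need either such a 2D hitting-time bound or a potential function on $(x,y)$ that remains a bona fide sub/supermartingale up to the corner; the naive distance projection is provably not one.
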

\begin{proof}
  We study here the movements of particles in the initial tree~$\Trees$.
  One can divide~$\Trees$ into its induced subgraphs which are paths.
  Those paths link the vertices of odd degree (1 or
  3). The reduced tree~$\Trees'$ is obtained by replacing each path 
  by an edge, then removing the leaves.

  Consider a configuration~$c$ on~$\Trees$ which belongs to an
  attractor. There cannot be two particles on the same path
  , otherwise a sequence of updates could lead to the collision of these two
  particles and thus to an energy decrease. In the same way, there cannot be two
  particles on two
  paths 
  which share a common extremity $i$.
  This extremity would necessarily be a vertex of degree~3, then a
  sequence of updates could position the two particles on the edges incident to~$i$.
  Firing $i$ at that time would decrease the number of
  particles by at least~1 and lead to an energy decrease.
  Finally, there cannot be a particle on a path which has an end of degree one,
  since the particle could disappear at this end.

  \parpic[r]{\scalebox{0.6}{\begin{tikzpicture}[grow cyclic,shape=circle]
\tikzstyle{level 1}=[level distance=15mm,sibling angle=120]
\tikzstyle{level 2}=[level distance=12mm,sibling angle=120]
\tikzstyle{level 3}=[level distance=10mm,sibling angle=120]
\tikzstyle{every node}=[fill=black!20!white,draw=black,shape=circle,cap=round]
\coordinate[]
node{}
child foreach \x in {1,2,3} {
  node {}
  child foreach \y in {1,2} {
    node{}
    child foreach \z in {1,2} {
    node{}
  }
  \particule12
  \particule21
}
\ifnum\x=3
  edge from parent[]
  node[color=red,left=2mm,pos=0.3]{}
  \fi
}
;
\end{tikzpicture}
}}
  Reciprocally, for configurations
  where there are no particles on the same path nor on adjacent paths
  nor on a path with an end of degree 1,
  the number of particles on each path is constant.
  This proves that the energy cannot decrease, and that
  such configurations belong to the limit set.
  This also establishes the bijection between attractors
  and matchings of~$\Trees'$ (the matching indicates where the isolated particles are located in~$\Trees$).
  An illustration of $\Trees'$ is on the opposite figure, particles are next to the edges.

  \medskip
  To prove the bound on the expected hitting time of the limit set, we find a
  bound on the time until at least one particle disappears. Consider a
  configuration where there exist two particles on a same path 
  of length~$n$. One can suppose that these two particles follow a random walk
  on this path with reflecting barriers at each extremity, unless they collide
  with another particle (leading to the loss of two particles in the tree) or
  unless one of the two particles leaves the path (leading to the loss of at
  least one particle in the tree since leaving necessarily involves a vertex of
  degree~3 fired with two incident particles). Thus under the condition that
  they have not disappeared before, a bound on the expected time elapsed until
  they collide can be derived from classical studies of random walks with
  reflecting barriers~\cite{GrimmettStirzaker2001}: this expected time is
  bounded by~$O(n^3)$, and consequently it is also a bound on the time until at
  least one particle disappear.

  \parpic[r]{\scalebox{0.6}{
\begin{tikzpicture}[grow cyclic,shape=circle]
\tikzstyle{level 1}=[level distance=10mm,sibling angle=120]
\tikzstyle{level 2}=[level distance=10mm,sibling angle=120]
\tikzstyle{level 3}=[level distance=10mm,sibling angle=120]
\tikzstyle{level 4}=[level distance=10mm,sibling angle=120]
\tikzstyle{every node}=[fill=black!20!white,draw=black,shape=circle,cap=round]
\coordinate[] 
node (i) {}
child foreach \x in {1,3,2} {
  node {}
  child {
    node{}
    child{
    node{}
    child foreach \y in {1,2} {
      node{}
    }}
  \ifnum\x<3
  edge from parent[]
  node[color=red,left=2mm,pos=0.2]{}
  \fi
}
}
;
\node[fill=none, draw=none, above right of=i, node distance=4mm] {$i$};

\end{tikzpicture}
}}
  Consider a configuration where there exist two particles on two
  paths 
  of respective lengths~$n$ and~$m$, sharing a
  common extremity: the vertex~$i$.
  An illustration is on the opposite figure.
 With the same reasoning, assuming that the two
  particles have not led to the removal of another particle means that they
  follow random walks on their respective paths with reflecting barriers at the
  extremities. Then they can only disappear by being both incident to vertex~$i$
  when vertex~$i$ is fired. By analyzing the two-dimensional random walk
  corresponding to the evolution of the respective distances to vertex~$i$, this
  event occurs after at most~$O(\max(n,m)^3)$ steps on expectation, as proved in
  Section~\ref{hitting-time-rectangle} (or in~\cite{minority-on-trees_techreport})
  using standard tools for multi-dimensional random walks.

  Finally, from any configuration which does not belong to the limit set, at
  least one particle disappear within $O(N^3)$ steps on expectation. Since the
  number of particles in any initial configuration on a tree is bounded by~$N$,
  the expected time to hit the limit set is bounded by~$O(N^4)$.
\qed
\end{proof}

\subsection{Phase transition}
\label{sec/phase-transition}

In this part, we consider the infinite graph 
where 
the set of vertices is $\N$
and vertex $i$ has two neighbors: $i-1$ and $i+1$. We consider the initial configuration $c^{init}$ where the state of vertex~$i$ is $1$ if $i=0$ and $i\!\!\mod 2$ otherwise.
This configuration possesses only three vertices which are not in their minority state: $-1$, $0$ and $1$. Updating vertex $0$ leads to a stable configuration.

We consider the fully asynchronous dynamics: at each time step, only one vertex is updated and this vertex is selected uniformly at random among the active vertices. Note that the set of active vertices is always finite, so this random selection is feasible.
The limit set of an execution of Minority starting from the initial configuration $c^{init}$ is composed of the stable configuration where the state of vertex $i$ is $i \mod 2$.

We denote by $\ProbaFrom{\alpha}{c^{init}}$ the probability that the dynamics reaches a stable configuration from the initial configuration $c^{init}$. If $\ProbaFrom{\alpha}{c^{init}}<1$ then the expected time to reach the limit set is infinite. Experimentally, there is a phase transition on $\ProbaFrom{\alpha}{c^{init}}$ depending on $\alpha$ with a critical value $\alpha_c \approx 0.5$ such that:
\begin{itemize}
\item if $\alpha<\alpha_c$ then $\ProbaFrom{\alpha}{c^{init}}=1$;
\item if $\alpha>\alpha_c$ then $\ProbaFrom{\alpha}{c^{init}}<1$
\end{itemize} 
Our result link the critical value $\alpha_c$ to the critical value $0.6298<p_c<2/3$ of directed percolation.

\begin{theorem}
\label{th/percolation}
If $\alpha \geq \sqrt[3] {1-(1-p_c)^4}$ then $\ProbaFrom{\alpha}{c^{init}} < 1$. The quantity $\sqrt[3] {1-(1-p_c)^4}$ is in $[0.993;0.996]$; thus if $\alpha \geq 0.996$ then $\ProbaFrom{\alpha}{c^{init}} < 1$.
\end{theorem}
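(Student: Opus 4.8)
The plan is to show that, for $\alpha$ large enough, the defect carried by $c^{init}$ survives forever with positive probability; since ``reaching a stable configuration'' is exactly the event that all activity eventually dies out, exhibiting survival with positive probability immediately gives $\ProbaFrom{\alpha}{c^{init}}<1$. The convenient language is the particle point of view: $c^{init}$ carries two adjacent particles around vertex~$0$, and the configuration can stabilise only if all particles annihilate. Firing a single interior vertex either shifts one incident particle or annihilates a pair, so in the fully asynchronous limit the particles perform annihilating random walks and the defect always heals. The whole difficulty of the $\alpha$-asynchronous regime is that simultaneous firings of several active vertices can instead \emph{create} particles (for instance, firing the three active vertices $-1,0,1$ together produces four particles). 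The question is whether this proliferation can outrun annihilation, which is precisely the survival-versus-extinction dichotomy governed by directed percolation.

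First I would set up a coupling, following~\cite{R08perco}, between the firings of the $\alpha$-asynchronous dynamics in a space-time neighbourhood of the defect and the open/closed bonds of an oriented percolation on a rescaled lattice. The design requirement is that open oriented paths of the percolation force particles of the Minority dynamics to be transmitted along them, so that survival of the activity follows from percolation of the lattice. Carrying this out amounts to choosing a macroscopic block, designating which local firing patterns count as an open bond, and verifying (i)~that these designated events dominate a genuine oriented percolation from below and (ii)~that an open bond really does guarantee transmission. The bookkeeping of the block — in particular the three firings needed to push a particle across one block, which is the source of the exponent~$3$, and the way four bonds are recombined in the rescaling, which is the source of the fourth power — yields an effective bond-open probability $p(\alpha)$ tied to $\alpha$ by $1-\alpha^3=(1-p(\alpha))^4$, that is $p(\alpha)=1-(1-\alpha^3)^{1/4}$. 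I expect this coupling and the verification of~(ii) to be the main obstacle; it is, however, exactly the technical core already established in~\cite{R08perco} for a structurally similar automaton, so the work here is to transcribe that construction to the present particle dynamics and to recompute the exponents producing $p(\alpha)$.

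Once the domination is in place, the conclusion is immediate. By the classical survival theorem for supercritical oriented percolation, whenever $p(\alpha)>p_c$ there is a positive probability that an infinite open oriented path emanates from the origin; through the coupling this gives a positive probability that the defect never heals, hence $\ProbaFrom{\alpha}{c^{init}}<1$. It then remains only to solve the threshold inequality: since $p(\alpha)$ is increasing in $\alpha$, the condition $p(\alpha)\ge p_c$ is equivalent, by the relation above, to $1-\alpha^3\le(1-p_c)^4$, i.e.\ to $\alpha\ge\sqrt[3]{1-(1-p_c)^4}$, which is the stated hypothesis. Finally, substituting the known bounds $0.6298<p_c<2/3$ into $\sqrt[3]{1-(1-p_c)^4}$ gives a value in $[0.993;0.996]$, and because the expression is increasing in $p_c$ its value stays below $0.996$; thus $\alpha\ge0.996$ already forces $p(\alpha)>p_c$ and hence $\ProbaFrom{\alpha}{c^{init}}<1$, completing the statement.
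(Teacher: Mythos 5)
Your overall strategy---couple the $\alpha$-asynchronous dynamics with a supercritical oriented percolation so that an infinite open path forces the activity to survive forever, then translate $p(\alpha)\ge p_c$ into the stated bound on $\alpha$---is indeed the strategy underlying this theorem, and your threshold algebra and the numerical check at the end are correct. But the proposal has a genuine gap at exactly the point you yourself flag as ``the main obstacle'': the block construction, the stochastic domination of an oriented percolation, the verification that open bonds transmit particles, and above all the derivation of the relation $1-\alpha^3=(1-p(\alpha))^4$ are never carried out. That relation carries the entire quantitative content of the theorem, and your account of where the exponents come from (three firings to push a particle across a block, four bonds recombined in the rescaling) is reverse-engineered from the target expression rather than derived from an actual construction for Minority's particle dynamics; nothing in the proposal guarantees that redoing the construction of~\cite{R08perco} on this rule would produce these exponents rather than others.

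The idea you are missing---and which turns the theorem into a few lines instead of a paper's worth of work---is duality. The paper observes that Minority on this line, started from $c^{init}$, is \emph{exactly} the dual (in the sense of Definition~\ref{def:dual_conf}: flip the states of the odd vertices) of the rule \textsc{flip-if-not-equal} started from the configuration $c'^{init}$ which is $0$ everywhere except at the origin, and this is precisely the rule and initial condition for which \cite{R08perco} proves this very statement, constant $\sqrt[3]{1-(1-p_c)^4}$ included. Checking the duality requires only two observations: a vertex is active for Minority if and only if it has a neighbor in the same state, which after flipping odd positions becomes a neighbor in a different state, i.e.\ the vertex is active for \textsc{flip-if-not-equal}; and the dual of $c^{init}$ is $c'^{init}$. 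With this recoding the two processes are stochastically coupled step by step, so the probability of reaching a stable configuration is the same for both, and the theorem transfers verbatim---no transcription or recomputation of the percolation machinery is needed. To close your gap you should either make this reduction, or genuinely carry out the coupling you sketch; the former takes a few lines, the latter amounts to reproving the main result of~\cite{R08perco}.
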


\begin{proof}
In~\cite{R08perco}, this theorem is proved for this dynamics with the rule \textsc{flip-if-not-equal} (an updated vertex switches its state if at least one of its neighbor is in a different state than itself) from the initial configuration $c'^{init}$ where vertex $i$ is in state $0$ if $i\neq 0$ and $1$ otherwise. We simply prove here that \textsc{flip-if-not-equal} is the dual dynamics of Minority obtained by switching state of vertices $i$ if $i\mod 2=1$:
\begin{itemize}
\item if vertex $i$ is active for Minority then at least one if its neighbors is in the same state as itself. Thus in the dual configuration, vertex $i$ has at least one of its neighbors in a different state as itself.
\item if vertex $i$ is inactive for Minority then both its neighbors are in a different state than itself. Thus, in the dual configuration, vertex $i$ has both its neighbors in the same state as itself.
\end{itemize} 
Also, $c'^{init}$ is the dual configuration of $c^{init}$.
\end{proof}

\section{Conclusion}
The table below sums up the different worst case average hitting times of 
a limit set for different topologies and compares the fully
asynchronous dynamics to the synchronous one. In the case of
the torus under fully asynchronous dynamics, it is conjectured
that this average ``convergence'' time admits a polynomial bound
in the number of cells.  

\centerline{\small \begin{tabular}{|c|c|c|c|}
\hline 
& Fully Asynchronous & Synchronous
\\ \hline
Path or Cycle & Poly & Exp~\cite{GM90}
\\ \hline
Tree, max degree $\leq 3$ & Poly & Exp~\cite{GM90}
\\ \hline
Tree, max degree $\geq 4$ & Exp & Exp~\cite{GM90}
\\ \hline
Torus, von Neumann neighborhood & Poly ?~\cite{RST-TCS2009} & Exp~\cite{GM90}
\\ \hline
Torus, Moore neighborhood & Poly ?~\cite{RST-DMTCS2010} & Exp~\cite{GM90}
\\ \hline
Clique & Poly & Poly~\cite{GM90}
\\ \hline
\end{tabular}}

The Minority rule admits a rich range of behaviors under full asynchronism.
The case of trees has shown that the average hitting time of limit sets
is not necessarily polynomial under full asynchronism (there is a threshold on the maximum degree). For now, it is an open
problem to predict from the graph topology whether the dynamics will converge
fast or slowly. Following this work, a challenge is to identify the graph 
parameters that guide this convergence speed, as well as extending such results
to other updating rules.

\subsubsection*{Acknowledgements.}
We are very grateful to the referees for their comments and suggestions which
have improved the paper.

\bibliographystyle{splncs}
\bibliography{biblio}

\appendix

\section{Characterization of stable configurations on trees}
\label{sec/char-stable-conf}
Let us now study more precisely the structure of the limit set.
This characterization allows to 
count the number of attractors on a given tree.

\bigskip

We first pick an arbitrary vertex of degree 1 and set it as root~$r$ in the tree
(this introduces ``parent'' and ``children'' relations).
The term ``degree'' refers to the original graph:
a vertex of degree three has two children.

We assign a label to each
vertex to count the number of attractors and the size of the limit set.
The number of attractors will be shown to be the number of acceptable labelings.
The set of labels we use is $\{(\fixed,0), (\fixed,1), (\oscil,0), (\oscil,1)\}$.
``$\oscil$'' intuitively means ``oscillating''
while
``$\fixed$'' means ``fixed''
(like the recorder symbols play/stop).
The second component is called the ``preferred'' state of the vertex.

\begin{definition}[acceptable labeling]
  A labeling is acceptable if and only if, for each vertex $i$, if the vertex has label
  \begin{enumerate}
  \item \label{def:acceptable-label:enum:fixed}
    $(\fixed,\alpha)$ then it has strictly more than $\deg(i)/2$ neighbors with label $(\fixed,\alpha)$;
  \item \label{def:acceptable-label:enum:oscil}
    $(\oscil,\alpha)$ then
    \begin{enumerate}
    \item \label{def:acceptable-label:enum:oscil:root}
      if the parent has a label of the form $(\fixed,\beta)$,
      then $\alpha=0$ and
       $i$ has one more child labeled $(\cdot,1-\beta)$ than children labeled $(\cdot,\beta)$;
    \item \label{def:acceptable-label:enum:oscil:nonroot}
      otherwise, i.e. the parent has a label of the form $(\oscil,\cdot)$,
      $i$ has one more child labeled $(\cdot,\alpha)$ than children labeled $(\cdot,1-\alpha)$.
    \end{enumerate}
  \end{enumerate}
\end{definition}
\noindent
Note that only vertices of even degree can have a label of the form $(\oscil,\cdot)$.
The apparent asymmetry in case~\ref{def:acceptable-label:enum:oscil:root}
(imposing $\alpha=0$) is there only to avoid double counts in
Theorem~\ref{nb-attr=nb-label}, one could as well have defined acceptable
labelings with $\alpha=1$.

Theorem~\ref{nb-attr=nb-label} shows that a labeling corresponds to an
attractor, and Theorem~\ref{attractor-structure} details the meaning of a
labeling, thus the structure of an attractor.

For a labeling $L$, we define a few special configurations:
$\snd(L)$ is the projection of the second component: $L_i=(\cdot,\alpha) \iff (\snd(L))_i=\alpha$,
and $\paint(L,\alpha)$ sets all vertices labeled $(\oscil,\cdot)$ to $\alpha$:
$$(\paint(L,\alpha))_i:=
\begin{cases}
  \beta&\text{if } L_i=(\fixed,\beta)\\
  \alpha&\text{if } L_i=(\oscil,\cdot)
\end{cases}
$$
For a configuration $c$ in the limit set, we denote $\attr(c)$ the attractor containing~$c$.

\begin{lemma}
  \label{prop:snd-and-paint-in-limit-set}
 If $L$ is an acceptable labeling, 
 $\snd(L)$ and $\paint(L,\alpha)$ are in the limit set.
\end{lemma}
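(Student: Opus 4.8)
The plan is to work throughout in the dual dynamics $\ddual$, which is legitimate because duality is a dynamics-preserving bijection and the dual energy equals the energy, so $\snd(L)$ and $\paint(L,\alpha)$ lie in the limit set exactly when, read as dual configurations, they do. By Corollary~\ref{prop:limit-set-iff-energy-cannot-decrease} it suffices to show that no configuration reachable from them ever admits a firing that strictly lowers the energy. Recalling the energy-change computation (Lemma ``Energy is non-increasing''), firing a vertex $i$ strictly decreases the energy iff $\hat{v}_i > \deg(i)/2$; call such a vertex \emph{strictly active}. The boundary case $\hat{v}_i=\deg(i)/2$ (possible only for even degree) is reversible and energy-preserving, while $\hat{v}_i<\deg(i)/2$ is inactive. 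Since the energy is non-increasing, it is enough to exhibit an invariant, true at $\snd(L)$ and $\paint(L,\alpha)$ and preserved by the only energy-preserving moves (the boundary flips), that forbids strict activity; then no energy-decreasing move is ever available, which is precisely the criterion of Corollary~\ref{prop:limit-set-iff-energy-cannot-decrease}.

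First I would pin the fixed vertices, i.e. those with a label $(\fixed,\cdot)$. By a minimal-counterexample argument, no fixed vertex ever flips: at the first time one would flip, all fixed vertices still carry their labelled states, so condition~\ref{def:acceptable-label:enum:fixed} gives the flipping vertex strictly more than $\deg(i)/2$ same-state (fixed) neighbours, making it strictly inactive, a contradiction. Hence every fixed vertex stays at its labelled state and keeps a strict same-state majority forever, so it is never strictly active and, crucially, its $\hat{v}_i$ retains slack at least~$1$.

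The heart of the argument concerns the oscillating vertices, those labelled $(\oscil,\cdot)$, which all have even degree. The invariant I would propagate is: (I1) every fixed vertex is at its labelled state; (I2) no vertex is strictly active; (I3) any two adjacent vertices that both sit exactly at the boundary $\hat{v}=\deg/2$ are in \emph{different} states. I would check (I1)--(I3) at both starting configurations using the balance conditions~\ref{def:acceptable-label:enum:oscil}: a direct count shows that in $\snd(L)$ each oscillating vertex has at most $\deg/2$ opposite-state neighbours (it sits at, or strictly below, the boundary) and that a child at the boundary necessarily disagrees with its oscillating parent, which gives~(I3); for $\paint(L,\alpha)$ the same conditions force each oscillating vertex to have at most $\deg/2$ fixed neighbours of either colour, and show that two adjacent oscillating vertices cannot both reach the boundary in the same state. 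For preservation, a boundary flip of $j$ toggles every incident edge between a \emph{wall} (opposite states) and a \emph{bond} (equal states): $j$ keeps exactly $\deg(j)/2$ walls; its boundary neighbours are reached through walls by~(I3), so they lose a wall and drop strictly inside; and its bond neighbours, which by~(I3) are strictly inside, gain at most one wall and stay at most at the boundary. Thus (I2) is maintained, and one re-establishes (I3) by inspecting the newly created boundary vertices.

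The main obstacle is precisely this last bookkeeping: verifying that a boundary flip re-establishes (I3) globally in a tree of arbitrary even degrees, i.e. that it cannot produce two same-state adjacent boundary vertices elsewhere. This is where the balance built into condition~\ref{def:acceptable-label:enum:oscil} does the real work, and it is the arbitrary-degree generalisation of the ``isolated particles performing reversible, non-colliding random walks on disjoint paths'' phenomenon already exploited for degrees at most~$3$ in Theorem~\ref{thm:tree-deg3}. Once (I1)--(I3) are established as invariant, (I2) together with the non-increasing energy and Corollary~\ref{prop:limit-set-iff-energy-cannot-decrease} yields that $\snd(L)$ and $\paint(L,\alpha)$ belong to the limit set.
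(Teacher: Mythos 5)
Your global framework is legitimate and genuinely different from the paper's: the paper sets up no invariant at all, but instead exhibits explicit update sequences $\paint(L,0)\to\snd(L)\to\paint(L,1)\to\paint(L,0)$ (a bottom-up sweep, then a top-down sweep of the tree) and concludes by a monotonicity argument that every configuration reachable from these can be driven back to $\paint(L,1)$, so the whole cycle lies in one attractor. In your proposal, the reduction via Corollary~\ref{prop:limit-set-iff-energy-cannot-decrease} and the first-flip argument pinning the $(\fixed,\cdot)$ vertices are both correct, but the step you yourself flag as ``the main obstacle'' --- that a boundary flip re-establishes (I3) --- is precisely the entire content of the lemma, and it is never proved; you only assert that ``the balance built into condition~\ref{def:acceptable-label:enum:oscil} does the real work.'' Concretely, after firing a boundary vertex $j$, a same-state neighbour $k$ of $j$ with $\hat v_k=\deg(k)/2-1$ rises to the boundary, and you give no argument preventing $k$ from having another neighbour $k'\neq j$ already at the boundary and in the same state as $k$.

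Moreover this is not a repairable omission: the invariant (I1)--(I3) is simply not preserved by energy-conserving moves, so no local bookkeeping can close your induction. Take the rooted tree $r$--$p$--$j$--$k$--$k'$--$q$, where $p$ has a second child $x$ (a leaf), $q$ has three children $d_1,d_2,e$, each $d_i$ has one child $f_i$, and each of $e,f_1,f_2$ has two leaf children. Label $r,p,x,e$ and the leaves below $e$ with $(\fixed,0)$; label $f_1,f_2$ and their leaves with $(\fixed,1)$; label $j$ with $(\oscil,0)$ and $k,k',q,d_1,d_2$ with $(\oscil,1)$; this labeling is acceptable. Consider the configuration $c$ in which every fixed vertex is in its labelled state, $j=k=k'=1$ and $q=d_1=d_2=0$. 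Then (I1)--(I3) hold: the boundary vertices are exactly $j,k',d_1,d_2$, which are pairwise non-adjacent, while $\hat v_k=0$, $\hat v_q=1<2$, and every fixed vertex is strictly inside. Yet firing the boundary vertex $j$ turns $k$ and $k'$ into two \emph{adjacent} boundary vertices in the \emph{same} state, violating (I3); after that, firing $k$ makes $k'$ strictly active, and firing $k'$ strictly decreases the energy. So $c$ satisfies your invariant but is not even in the limit set; of course $c$ is not reachable from $\snd(L)$ (its energy is too large), but your invariant cannot see this. To make an invariant argument work one would have to carry along much more of the structure of reachable configurations --- essentially the parent/child constraints of Theorem~\ref{attractor-structure}, whose proof in the paper itself relies on the present lemma --- which is exactly what the paper's constructive proof avoids.
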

\begin{proof}
  Consider the configuration $\paint(L,0)$.
  We consider the vertices in a bottom-up order (a vertex is considered after all its children),
  and update each vertex which is not in its preferred state.
  Updating a vertex makes it goes to its preferred state thanks to the definition of acceptable labelings.
  The current configuration is $\snd(L)$.

  Then, we consider the vertices in a top-down order and update those that are in state $0$.
  When a vertex is considered, its children are in their preferred state,
  and if the parent of a vertex is labeled $(\oscil,\cdot)$ then this parent is in state $1$.
  Again, thanks to the definition of acceptable labelings, updating a vertex makes it go to state~$1$.
  We get the configuration $\paint(L,1)$.

  By symmetry, there is a sequence of updates from $\paint(L,1)$ to $\paint(L,0)$.
  Moreover, there is no sequence of updates leading from one of this configurations
  to change of state of a vertex labeled $(\fixed,\alpha)$:
  the first change of state of such a vertex $i$ would contradict the fact that it has more
  than $\deg(i)/2$ neighbors with label $(\fixed,\alpha)$ and thus in the state~$\alpha$.
  Which implies that, whatever the configuration reached from $\paint(L,0)$,
  there is by monotonicity a sequence of updates yielding the configuration $\paint(L,1)$.
  The full cycle is in the limit set.
\qed
\end{proof}

\begin{theorem}
\label{nb-attr=nb-label}
Given a tree, there is a bijection between attractors and acceptable labelings.
\end{theorem}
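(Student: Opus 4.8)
The plan is to establish the bijection by constructing two maps that are mutual inverses: a map $\Phi$ sending each acceptable labeling $L$ to the attractor $\attr(\paint(L,0))$, and a map $\Psi$ recovering a labeling from an attractor via the three-way decomposition of Fact~\ref{prop:attractor-decomposition}. Lemma~\ref{prop:snd-and-paint-in-limit-set} already guarantees that $\paint(L,0)$ lies in the limit set, so $\Phi$ is well-defined. For $\Psi$, I would use that an attractor $A$ partitions the vertices into those always~$0$, those always~$1$, and those that oscillate; I set $L_i=(\fixed,0)$, $L_i=(\fixed,1)$, or $L_i=(\oscil,\cdot)$ accordingly, and fix the ``preferred'' second component of each vertex to match the configuration $\paint(L,0)$ (equivalently, I read off the preferred state from the global minimum-energy representative of the cycle, which is forced by the asymmetry convention imposing $\alpha=0$ at an $\oscil$ vertex under a $\fixed$ parent).

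First I would prove that $\Psi(A)$ is genuinely an acceptable labeling. This is the heart of the argument. For a vertex $i$ that is fixed at state $\alpha$ throughout $A$, I must show it has strictly more than $\deg(i)/2$ neighbors also fixed at $\alpha$: if not, one could steer the configuration (using monotonicity, exactly as in the proof of Proposition~\ref{prop:algorithm-limit-set}) so that $i$ has at least half its neighbors in state $1-\alpha$, making $i$ active and hence able to leave state $\alpha$, contradicting that $i$ is fixed. For a vertex $i$ labeled $(\oscil,\cdot)$, I would analyze the subtree rooted at $i$ and show that inactivity of $i$ in the relevant extremal configurations forces the exact child-count balance of Case~\ref{def:acceptable-label:enum:oscil}: the oscillating vertex behaves like a single particle on a path, so the number of children pulling it toward state $\alpha$ must exceed those pulling toward $1-\alpha$ by exactly one, which is precisely the dual/majority tie condition of the dual rule. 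The tree structure is essential here, since it lets me reason child-subtree by child-subtree without cyclic dependencies.

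Next I would verify $\Phi\circ\Psi=\mathrm{id}$ and $\Psi\circ\Phi=\mathrm{id}$. For $\Psi\circ\Phi=\mathrm{id}$, starting from an acceptable $L$, Lemma~\ref{prop:snd-and-paint-in-limit-set} shows the full cycle $\paint(L,0)\rightsquigarrow\snd(L)\rightsquigarrow\paint(L,1)\rightsquigarrow\paint(L,0)$ is in the limit set, so the $(\oscil,\cdot)$ vertices genuinely oscillate and the $(\fixed,\alpha)$ vertices stay fixed (the monotonicity argument in that lemma forbids a fixed vertex from ever changing); reading the decomposition back off this attractor returns $L$, using the $\alpha=0$ normalization to pin down the second component uniquely. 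For $\Phi\circ\Psi=\mathrm{id}$, I must check that two acceptable labelings producing attractors with the same vertex-decomposition and same preferred states are identical, which is immediate from the definition of $\Psi$, and that $\attr(\paint(\Psi(A),0))=A$, which follows because $\paint(\Psi(A),0)\in A$ (both are minimum-energy representatives with the prescribed fixed states) and an attractor is determined by any one of its configurations.

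The main obstacle I expect is the $(\oscil,\cdot)$ case of acceptability: translating the dynamical statement ``vertex $i$ oscillates but stays inactive at the extremes of its oscillation'' into the exact combinatorial child-count condition, and doing so consistently with the parent's label (the split into Cases~\ref{def:acceptable-label:enum:oscil:root} and~\ref{def:acceptable-label:enum:oscil:nonroot}). This requires a careful induction on the tree together with the dual-potential characterization of activity, and care to avoid the double-counting that the $\alpha=0$ convention is designed to prevent. I would handle it by working in the dual (majority) dynamics, where an $\oscil$ vertex is exactly one at a tie, and showing that in the attractor each oscillating vertex together with its oscillating descendants forms an independent ``degree of freedom'' whose local tie condition is forced by inactivity of the surrounding fixed vertices.
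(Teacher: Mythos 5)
Your overall architecture mirrors the paper's proof: a map from attractors to labelings built on the three-way decomposition of Fact~\ref{prop:attractor-decomposition}, an inverse of the form $L \mapsto \attr(\cdot)$ applied to a configuration produced from $L$ (the paper uses $\snd(L)$, you use $\paint(L,0)$; these lie in the same attractor by Lemma~\ref{prop:snd-and-paint-in-limit-set}, so that difference is immaterial), with the bulk of the work being acceptability of the labeling read off an attractor. However, there is a genuine gap in your definition of $\Psi$: for oscillating vertices you ``fix the preferred second component to match the configuration $\paint(L,0)$'', but $L$ is precisely the labeling being defined --- this is circular. The alternative you offer, reading preferred states off ``the global minimum-energy representative of the cycle'', is ill-defined: energy is non-increasing and an attractor is a strongly connected component of the transition graph, so \emph{every} configuration of an attractor has the same energy, and no minimum-energy representative is singled out. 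Moreover, the $\alpha=0$ convention pins down the second component only for oscillating vertices whose parent is fixed; for an oscillating vertex whose parent also oscillates, nothing in your proposal determines its preferred state. The paper fills exactly this hole with an explicit bottom-up recursion: such a vertex is labeled $(\oscil,\alpha)$ where $\alpha$ is the majority preferred state among its (odd number of) children, and $(\oscil,0)$ when the parent is fixed. Without this (or an equivalent) rule, $\Psi$ is not a function, and neither composition identity can even be stated.

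A second, smaller defect is your justification that $\paint(\Psi(A),0)\in A$ (``both are minimum-energy representatives with the prescribed fixed states''), which fails for the same reason --- equal energy selects nothing. What is actually needed, and what the paper uses, is the monotonicity argument of Proposition~\ref{prop:algorithm-limit-set}: from any configuration of $A$ there is a sequence of updates turning all oscillating vertices white simultaneously, hence that particular configuration belongs to $A$. You do invoke this monotonicity argument correctly for the $(\fixed,\alpha)$ acceptability condition, and your plan for the $(\oscil,\cdot)$ conditions (exhibit a configuration of the attractor in which the subtree below $i$ sits at preferred states, then derive the exact one-child surplus from the fact that updating $i$ cannot decrease the energy) is the paper's argument in substance; but both verifications hinge on first having a well-defined recursive assignment of preferred states, which is the missing piece.
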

\begin{proof}
  We define a function $f$ that maps an attractor $A$ to an acceptable labeling $L$.
  We then shows that for every attractor $A$, $\attr \circ \snd(f(A))=A$ and
  for every acceptable labeling $L$, $f(\attr\circ\snd(L))=L$.
  This imply that $f$ is a bijection.

\medskip
  To define $f$,
  let $A$ be an attractor, we construct $L=f(A)$.
  For all vertices that have the same state $\alpha$ in any configuration of $A$, define $L_i:=(\fixed,\alpha)$.
  All leaves are now labeled. We define the labeling of the remaining vertices inductively (in a bottom-up order).
  Each remaining vertex is oscillating, thus has an even degree and an odd number of children.
  Considering a vertex $i$ having all its children labeled:
  \begin{itemize}
  \item if the parent is already labeled (thus with a label of the form $(\fixed,\cdot)$),
    define $L_i:=(\oscil,0)$;
  \item otherwise, let $\alpha$ be the majority state among the preferred states of the children
    and define $L_i:=(\oscil,\alpha)$.
    (Since $i$ has an odd number of children, the majority state is well defined).
  \end{itemize}

\bigskip
Let us show that the labeling we have just defined is acceptable.
With the same argument as the proof of proposition~\ref{prop:algorithm-limit-set},
the configuration where all oscillating vertices are in state $\alpha$ is in $A$.
This configuration is $\paint(L,\alpha)$.

\begin{enumerate}
\item
  Consider a vertex $i$ labeled $(\fixed,0)$ and the configuration $\paint(L,1)$.
  All the neighbors of $i$ not labeled $(\fixed,0)$ are in the state $1$.
  So, there are necessarily more than $\deg(i)/2$ neighbors labeled $(\fixed,0)$ (and thus in the state $0$),
  otherwise, updating $i$ would make it change its state, contradicting the definition of $L_i=(\fixed,\cdot)$.
  By symmetry, point~\ref{def:acceptable-label:enum:fixed} of the definition of acceptable labelings is fulfilled.
\item We show point~\ref{def:acceptable-label:enum:oscil} inductively in a bottom-up order.
  Consider a vertex $i$ labeled $(\oscil,\alpha)$.

  Let us first show that there is a configuration $c\in A$ where all the vertices of the subtree having $i$ as root
  are in their preferred state, except $i$ (which is in state $1-\alpha$).
  Indeed, consider $\paint(L,1-\alpha)$ and update the vertices not in their preferred state in a bottom-up order.
  Thanks to the induction hypothesis, the label of those vertices is acceptable,
  thus updating them makes them change to their preferred state.
  This is a sequence of updates from a configuration in $A$ leading to $c$, so $c$ belongs to $A$.

  We can now show that the label of $i$ is acceptable:
  \begin{enumerate}
  \item If the parent has a label of the form $(\fixed,\cdot)$ then $L_i$ has been defined as $(\oscil,0)$.
    Moreover, in $c$, the parent of $i$ is in state $\beta$.
    In this configuration, updating $i$ cannot make the energy decrease,
    so $i$ must have one more child labeled $(\cdot,1-\beta)$ than children labeled $(\cdot,\beta)$.
    So, point~\ref{def:acceptable-label:enum:oscil:root} is fulfilled.
  \item Otherwise, the parent is labeled $(\cdot,\beta)$.
    From the definition of $L$, in configuration $c$, $i$ has a majority of children in state $\alpha$.
    Updating it thus makes it change its state.
    So $i$ must have as many neighbors in state $\alpha$ than in state $1-\alpha$.
    Which means that $\beta=1-\alpha$ and $i$ has exactly one more child labeled $(\cdot,\alpha)$ than children labeled $(\cdot,1-\alpha)$.
    Point~\ref{def:acceptable-label:enum:oscil:nonroot} is fulfilled.
  \end{enumerate}
\end{enumerate}

\bigskip
Let us now show that for any acceptable labeling $L'$, $f(\attr\circ\snd(L'))=L'$.
Since $\paint(L',0)$ and $\paint(L',1)$ are in the attractor $\attr\circ\snd(L')$
(cf. Lemma~\ref{prop:snd-and-paint-in-limit-set}),
the vertices labeled $(\oscil,\cdot)$ in $L'$ are oscillating in this attractor.
From the definition of $f$, these vertices are labeled $(\oscil,\cdot)$ in $f(\attr\circ\snd(L'))$.

Recall that there is no sequence of updates leading from $\snd(L')$
to change the state of a vertex labeled $(\fixed,\alpha)$:
the first change of state of such a vertex $i$ would contradict the fact that it has more
than $\deg(i)/2$ neighbors with label $(\fixed,\alpha)$ and thus in the state~$\alpha$.
This allows us to conclude that the labelings $L'$ and $f(\attr\circ\snd(L'))$
have the same cells labeled $(\fixed,0)$ and $(\fixed,1)$, thus they are equal.
(Indeed,
in the definition of $f$, the value of $\alpha$ for vertices labeled $(\oscil,\alpha)$
is entirely determined by the labeling of vertices labeled $(\fixed,\cdot)$.)

\bigskip
Finally, let $A'$ be an attractor, we show that $\attr\circ\snd(f(A'))=A'$.
$f(A')$ is an acceptable labeling so (Lemma~\ref{prop:snd-and-paint-in-limit-set}),
$\paint(f(A'),1)$ is in the attractor $\attr\circ\snd(f(A'))$.
Moreover, we have already noted that
the configuration where all the oscillating vertices of $A'$ are in state $1$
belongs to $A'$ (same argument as the proof of Proposition~\ref{prop:algorithm-limit-set}).
From the definition pf $f$, this configuration is $\paint(f(A'),1)$.
The attractors $\attr\circ\snd(f(A'))$ and $A'$
have the element $\paint(f(A'),1)$ in common, so they are the same attractor.

This concludes the proof by implying that $f$ is a bijection.
\qed
\end{proof}

\begin{theorem}[Structure of an attractor]
  \label{attractor-structure}
  Let $L$ be an acceptable labeling.
  Then for every configuration $c$ reachable by a sequence of updates from $\snd(L)$, for every vertex $i$:
  \begin{enumerate}
  \item \label{prop:structure:enum:fixed}
    If $L_i=(\fixed,\alpha)$ then $c_i=\alpha$ (this is why ``$\fixed$'' intuitively means ``fixed'').
  \item If $L_i=(\oscil,\alpha)$ (in this case $\deg(i)$ is even: $i$ is not the root, which has degree 1) then
    \begin{enumerate}
    \item\label{prop:structure:enum:oscil:root}
      if the parent has a label of the form $(\fixed,\cdot)$,
      $i$ is in the state appearing in majority among its neighbors (no constraint in case of equality);
    \item \label{prop:structure:enum:oscil:nonroot}
      otherwise
      \begin{itemize}
      \item if $i$ is in its preferred state $\alpha$,
        its children labeled $(\cdot,\alpha)$ are in their preferred state $\alpha$
      \item otherwise, all its children not in their preferred state are in the same state as $i$ (the state $1-\alpha$).
      \end{itemize}
    \end{enumerate}
  \end{enumerate}
\end{theorem}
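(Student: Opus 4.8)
The plan is to exploit the fact that every configuration reachable from $\snd(L)$ lies in the limit set, which forces a strong local rigidity. Indeed, $\snd(L)$ is in the limit set by Lemma~\ref{prop:snd-and-paint-in-limit-set}, and an attractor has no arc leaving it, so the set of configurations reachable from $\snd(L)$ is exactly the attractor $\attr(\snd(L))$; in particular every such configuration is in the limit set. I would first record the key rigidity lemma: \emph{in any configuration of the limit set, every active vertex sits at a tie}. This is immediate from the energy-change computation, which says that firing an active vertex $i$ changes the energy by $2|\neigh_i|-4v_i+2$; for an active vertex this is negative unless $v_i=(|\neigh_i|+1)/2$, i.e. unless exactly half of $i$'s neighbours disagree with it. Since no update may decrease the energy inside the limit set (Corollary~\ref{prop:limit-set-iff-energy-cannot-decrease}), every active vertex must be at such a tie. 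Equivalently, \emph{no vertex of a reachable configuration is ever in the strict minority of its neighbourhood}.

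Part~\ref{prop:structure:enum:fixed} then follows by a ``first change'' argument. Suppose some vertex labelled $(\fixed,\alpha)$ changes state along an update sequence starting from $\snd(L)$, and consider the first such change, say of vertex $j$. At that moment all fixed neighbours of $j$ are still in their label states, and by item~\ref{def:acceptable-label:enum:fixed} of the definition of acceptable labelings $j$ has strictly more than $\deg(j)/2$ neighbours labelled $(\fixed,\alpha)$, hence strictly more than half of its neighbours in state $\alpha$. Thus $j$ is in the strict majority, so it is inactive and cannot change when fired -- a contradiction. Case~\ref{prop:structure:enum:oscil:root} is now immediate from the rigidity lemma: ``no strict minority'' says precisely that $i$ is in the majority state whenever a strict majority exists, and imposes nothing in case of a tie, which is exactly the claim.

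For case~\ref{prop:structure:enum:oscil:nonroot} I would argue by induction on the number of updates, carrying the theorem as an invariant. The base case $\snd(L)$ holds trivially, every vertex being in its preferred state. For the inductive step I fire a vertex $v$; if $v$ is inactive nothing changes, so assume $v$ flips, hence (rigidity lemma) $v$ is at a tie. Only two constraints can be affected by this single flip: the one attached to $v$ itself and the one attached to $v$'s parent, since the constraint attached to any oscillating vertex refers only to its own state and its children's states, leaving the children's constraints untouched. The mechanism is that the tie, together with the inductive form of case~\ref{prop:structure:enum:oscil:nonroot} for $v$ and the counting built into item~\ref{def:acceptable-label:enum:oscil:nonroot} (one more child labelled $(\cdot,\alpha)$ than $(\cdot,1-\alpha)$), pins down the states of all of $v$'s children and the state of $v$'s parent, so that the flip merely toggles $v$ between the two locally consistent patterns of case~\ref{prop:structure:enum:oscil:nonroot}. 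Intuitively a ``defect'' (a vertex not in its preferred state) is only ever created by propagation from an already-defective parent to an equal-preference child. This works cleanly when $v$ flips \emph{away} from its preferred state, where the tie and $v$'s own constraint already force the parent into the matching state.

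The main obstacle is the reverse direction, when $v$ flips \emph{towards} its preferred state: there the tie alone leaves two possibilities for the parent's state, and only one of them keeps case~\ref{prop:structure:enum:oscil:nonroot} true after the flip. Ruling out the bad possibility by invoking ``the parent's own constraint'' works when the parent is an interior oscillating vertex, but fails when the parent is oscillating yet adjacent to a fixed vertex (the situation of case~\ref{prop:structure:enum:oscil:root}, where case~\ref{prop:structure:enum:oscil:nonroot} does \emph{not} apply to it). I therefore expect that the bare statement of the theorem is not directly inductive and must be strengthened into an explicit parent--child coupling -- a defect-propagation invariant asserting that whenever an oscillating vertex is not in its preferred state its parent agrees with it and the defect lies on a descending chain of equal-preference vertices. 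Proving that this strengthened invariant is preserved by every firing, and that it entails case~\ref{prop:structure:enum:oscil:nonroot}, is the technical core of the argument.
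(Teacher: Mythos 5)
Your handling of point~\ref{prop:structure:enum:fixed} (first--change argument) and of case~\ref{prop:structure:enum:oscil:root} (limit-set membership forces every active vertex to a tie) is correct and coincides with the paper's. One slip worth flagging: the energy formula $2|\neigh_i|-4v_i+2$ and the notion of ``active'' you invoke are those of the primal Minority rule, under which an \emph{inactive} vertex is in the \emph{strict minority} of its neighbourhood, so your ``equivalently, no vertex is ever in the strict minority'' is backwards in that reading; the statement you need is the one for the dual dynamics $\ddual$ (majority with flip on tie), which is the rule this appendix's labelings refer to, and there it is correct (inactive means strict majority agrees, active in the limit set means tie). The genuine problem, however, is case~\ref{prop:structure:enum:oscil:nonroot}: your proposal stops exactly where the proof has to be done. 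You set up the same induction as the paper, settle the flip-away-from-preferred direction, and then declare that the flip-towards-preferred direction requires a strengthened parent--child invariant whose preservation is ``the technical core of the argument'' --- and you never prove it. As submitted, this is not a proof of the theorem.

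That said, your diagnosis of the obstacle is exactly right, and it is in fact a gap in the paper's own write-up. The paper asserts that (tie for $i$) $+$ (induction hypothesis at $i$) $+$ (the count in item~\ref{def:acceptable-label:enum:oscil:nonroot}) imply that all children of $i$ are in their preferred state; the count actually gives $x+[\,\text{parent at }1-\alpha\,]=1$, where $x$ is the number of $(\cdot,\alpha)$-children of $i$ in state $1-\alpha$, so the case $x=1$ with the parent at $\alpha$ must still be excluded. When the parent $w$ satisfies case~\ref{prop:structure:enum:oscil:nonroot} itself, the induction hypothesis at $w$ excludes it, but when $w$ falls under case~\ref{prop:structure:enum:oscil:root} (fixed grandparent) nothing in the invariant does: one can exhibit a configuration satisfying the whole assertion, with every active vertex at a tie, in which firing $i$ violates case~\ref{prop:structure:enum:oscil:nonroot}. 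The way to close the gap is not a new invariant carried through the induction but the \emph{strong} limit-set property, Corollary~\ref{prop:limit-set-iff-energy-cannot-decrease}: suppose $i$ (with oscillating parent) is in state $1-\alpha$ while its parent is in state $\alpha$. By the induction hypothesis applied below $i$, every ``defective'' descendant of $i$ (oscillating, not in its preferred state) has preference $\alpha$ and state $1-\alpha$, and these vertices form a subtree rooted at $i$. Firing them children-before-parent is a sequence of tie-firings (energy constant), after which $i$ has $\deg(i)/2+1$ disagreeing neighbours, so one further firing strictly decreases the energy --- contradicting the fact that a reachable configuration is in the limit set. This establishes precisely your proposed coupling (a defective vertex whose parent oscillates has an agreeing parent; note the restriction is essential, since a vertex under case~\ref{prop:structure:enum:oscil:root} can perfectly well disagree with its fixed parent), and with it the induction goes through. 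Without this cascade argument, or an equivalent one, neither your proposal nor the paper's proof is complete.
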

\begin{proof}
  Recall that there is no sequence of updates from $\snd(L)$ to the firing of a vertex labeled $(\fixed,\alpha)$.
  This ensures point~\ref{prop:structure:enum:fixed}.

  Since $\snd(L)$, thus $c$, are in the limit set, the energy cannot decrease.
  It follows that any vertex $i$ has always at least $\deg(i)/2$ neighbors in the same state as $i$.
  This proves point~\ref{prop:structure:enum:oscil:root}.

  We prove the last point (\ref{prop:structure:enum:oscil:nonroot}) by recursion.
  Note that the assertion of the theorem is clearly true for $\snd(L)$.
  Let $c$ be a configuration verifying the assertion,
  $c'$ a configuration reached from $c$ by updating a vertex $i$,
  it is sufficient to prove that the assertion is true for $c'$.
  Precisely, we prove that it is true for $i$ and its neighbors.

  \noindent
  For $i$:
  \begin{itemize}
  \item If $i$ is in its preferred state $\alpha$ in $c$,
    then its children labeled $(\cdot,\alpha)$
    (there are $\deg(i)/2$ such children thanks to the definition of acceptable labelings)
    are then in state $\alpha$ in $c$ and thus in $c'$.
    Which means that $i$ can change its state only if all other children are in state $1-\alpha$.
    Point~\ref{prop:structure:enum:oscil:nonroot} stays true for $i$ in $c'$.
  \item Otherwise,
    it is either inactive and point~\ref{prop:structure:enum:oscil:nonroot} stays true in $c'$, or active.
    In the latter case, $i$ has as many neighbors in each state (because $c$ is in the limit state),
    all children of $i$ not in their preferred state are in the same state as $i$ (by recursion hypothesis)
    and $i$ has one more child labeled $(\cdot,\alpha)$ than children labeled $(\cdot,1-\alpha)$ (definition of acceptable labelings).
    These three conditions imply that all children of $i$ are actually in their preferred state.
    So, $c'$ fulfills point~\ref{prop:structure:enum:oscil:nonroot}.
  \end{itemize}

  For a neighbor $w$ of $i$:
  \begin{itemize}
  \item If $w$ is a child of $i$, nothing has changed for the sons of $w$:
    point~\ref{prop:structure:enum:oscil:nonroot} stays true.
  \item If $w$ is the parent of $i$, it is sufficient to consider the case where $i$ is in the same state $\alpha$ as $w$
    (there is no condition to check in the other case).
    \begin{itemize}
    \item If $\alpha$ is the preferred state of $i$ 
      then $i$ has at least $\deg(i)/2+1$ neighbors in state $\alpha$: its children labeled $(\cdot,\alpha)$ and its parent $w$. So  $i$ is inactive.
    \item Otherwise, $i$ and $w$ are not in their preferred state,
      and point~\ref{prop:structure:enum:oscil:nonroot} stays true, whether or not $i$ change its state.
      \qed
    \end{itemize}
  \end{itemize}
\end{proof}

\section{Omitted proofs: Bound on the Hitting Time of a 2D Finite Markov Chain}
\label{hitting-time-rectangle}
\subsection{Background on Markov chain theory}
\label{markov-chain-theory}
We recall only the necessary background on Markov chains to get a bound on the hitting time of a 2D finite Markov chain.
For a gentle introduction and proofs, we refer for instance to Chapter~10 of~\cite{LPW2006}.

Let $\famille Xt{\N}$ be a Markov chain.
We note $\tau_b$ the \concept{hitting time} of $b$, i.e. the first time the Markov chain is in state $b$:
$$\tau_b:=min\set {t\ge 0}{X_t=b}\ .$$
If $(X,P)$ is a Markov chain reversible with respect to the probability $\pi$,
the \concept{conductance} of an unoriented edge $(x,y)$ is
$$g(x,y):=\pi(x)P(x,y)=\pi(y)P(y,x)\ .$$
Let $a$ and $b$ be two distinguished vertices, representing source and sink.
We use the following potential, or \concept{voltage}, of a vertex:
$$V(x):=\Psachant{\tau_a<\tau_b}{X_0=x}\ .$$
Clearly $V(a)=1$ and $V(b)=0$.
Now, define the \concept{current flow} on oriented edges as
$$I(x,y):=g(x,y)\setp{V(x)-V(y)}\qquad \text{and} \qquad \norme{I}:=\sum_xI(a,x)\ .$$
The \concept{effective resistance} between $a$ and $b$ is
$${\cal R}(a,b):=\frac{V(a)-V(b)}{\norme I}\ .$$

\begin{theorem}[Commute time identity]
$$\Esachant{\tau_b}{X_0=a}+\Esachant{\tau_a}{X_0=b} = g{\cal R}(a,b)$$
\end{theorem}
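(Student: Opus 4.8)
The plan is to exhibit the commute time $\Esachant{\tau_b}{X_0=a}+\Esachant{\tau_a}{X_0=b}$ as a voltage drop across the network under a current flow of total conductance, so that the identity reduces to Ohm's law and the definition of effective resistance. First I would introduce the two expected‑hitting‑time functions $u(x):=\Esachant{\tau_b}{X_0=x}$ and $v(x):=\Esachant{\tau_a}{X_0=x}$, so that the quantity to compute is $u(a)+v(b)$ (recall $u(b)=v(a)=0$). Each solves a discrete Poisson equation: for $x\neq b$ one has $u(x)=1+\sum_y P(x,y)u(y)$, which after multiplying by $\pi(x)$ reads $\sum_y g(x,y)\setp{u(x)-u(y)}=\pi(x)$. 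In electrical terms, the flow $I_u(x,y):=g(x,y)\setp{u(x)-u(y)}$ has a source of strength $\pi(x)$ at every node $x\neq b$, all of it collected at the single sink $b$. The same holds for $v$ with $a$ as the sink.

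The key step is to superimpose these two flows. Set $w:=u-v$. At every interior vertex $x\neq a,b$ the two sources cancel, so $\sum_y g(x,y)\setp{w(x)-w(y)}=\pi(x)-\pi(x)=0$ and $w$ is harmonic off $\{a,b\}$. A short bookkeeping of the source and sink strengths at the two special vertices shows that $w$ carries a net current $g:=\sum_{x,y}g(x,y)$ out of $a$ and the same amount into $b$: at $a$ the source $\pi(a)$ produced by $u$ combines with the $g-\pi(a)$ that $v$ absorbs there, for a total of $g$, and the situation at $b$ is symmetric.

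Now I would invoke uniqueness of harmonic functions on a finite connected graph. The voltage $V(x)=\Psachant{\tau_a<\tau_b}{X_0=x}$ of the excerpt is the unique function harmonic off $\{a,b\}$ with $V(a)=1$ and $V(b)=0$, so any function harmonic off $\{a,b\}$ — in particular $w$ — has the form $w=w(b)+\setp{w(a)-w(b)}V$. Computing the current leaving $a$ from this expression gives $\sum_y g(a,y)\setp{w(a)-w(y)}=\setp{w(a)-w(b)}\norme I$. Comparing with the value $g$ obtained in the previous step yields $w(a)-w(b)=g/\norme I=g\,{\cal R}(a,b)$, the last equality being the definition ${\cal R}(a,b)=\setp{V(a)-V(b)}/\norme I$ together with $V(a)-V(b)=1$.

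Finally, I would compute the same drop directly: $w(a)-w(b)=\setp{u(a)-v(a)}-\setp{u(b)-v(b)}=u(a)+v(b)=\Esachant{\tau_b}{X_0=a}+\Esachant{\tau_a}{X_0=b}$, using $v(a)=u(b)=0$. Equating the two expressions for $w(a)-w(b)$ gives the identity. The main obstacle is the current bookkeeping of the second paragraph — getting the source and sink strengths of $w$ at $a$ and $b$ exactly right, and thereby identifying the constant as the full conductance $g$ rather than some partial sum — together with a clean justification, via the maximum principle, that a function harmonic off $\{a,b\}$ is determined by its two boundary values and hence lies in the span of the constants and $V$.
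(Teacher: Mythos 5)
Your proof is correct. Note that the paper itself does not prove this statement: it is quoted as background, with a pointer to Chapter~10 of the cited textbook, so there is no in-paper argument to match. The textbook route referenced there goes through Green's functions: one shows that $x\mapsto G_{\tau_b}(a,x)/\pi(x)$ (expected visits to $x$ before hitting $b$, suitably normalized) is harmonic off $\{a,b\}$ and proportional to the voltage, deduces $\Esachant{\tau_b}{X_0=a}=\sum_x G_{\tau_b}(a,x)$, and sums; the constant $g$ emerges as the total conductance. Your argument is a genuinely different, equally classical route — superposition of the two hitting-time potentials $u,v$, each satisfying the discrete Poisson equation $\sum_y g(x,y)\setp{u(x)-u(y)}=\pi(x)$ off its sink, so that $w=u-v$ is harmonic off $\{a,b\}$ and carries net current $g=\sum_{x,y}g(x,y)$ from $a$ to $b$; uniqueness of the Dirichlet problem then pins $w$ to the span of the constants and $V$, and comparing the current out of $a$ gives $w(a)-w(b)=g/\norme{I}=g\,{\cal R}(a,b)$, while $w(a)-w(b)=u(a)+v(b)$ is the commute time. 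Your bookkeeping at the two special vertices is exact (the sink strength of $I_v$ at $a$ is $g-\pi(a)$ precisely because $\sum_x\pi(x)=\sum_{x,y}g(x,y)=g$), and you correctly identified the otherwise-undefined symbol $g$ in the theorem as the total conductance, consistent with the paper's later remark that $g=1$ in its application. What your approach buys is self-containedness: it uses only the objects the paper actually defines (voltage, current, $\norme{I}$, effective resistance) plus the maximum principle, with no need to introduce Green's functions or the escape-probability characterization of ${\cal R}(a,b)$.
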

\noindent 
In our case, $g=1$, so ${\cal R}(a,b)$ is an upper bound for the average hitting time $\Esachant{\tau_b}{X_0=a}$.
Here is how one can bound ${\cal R}(a,b)$.
A \concept{flow} from $a$ to $b$ is a function on oriented edges which is
antisymmetric: $\theta(x,y) = -\theta(y,x)$ and which obeys Kirchhoff's vertex
law:
$$\forall v\notin\{a,b\}\qquad \sum_x\theta(x,v)=0\ .$$
This is just the requirement ``flow in equals flow out''.
The \concept{strength} of a flow is
$$\norme{\theta}:=\sum_x\theta(a,x)\ .$$

\begin{theorem}[Thomson's Principle]
  For any finite connected graph,
  $${\cal R}(a,b) = \inf\set{{E}(\theta)}{\theta\text{ a unit flow from $a$ to $b$}}$$
  where $\displaystyle{E}(\theta):=\sum_{x,y}\frac{(\theta(x,y))^2}{g(x,y)}$.
\end{theorem}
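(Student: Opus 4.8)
The plan is to run the classical electrical-network argument: produce one specific unit flow whose energy is exactly ${\cal R}(a,b)$, and then prove that every competing unit flow has energy at least this large. The distinguished flow is the \emph{unit current flow} $j:=I/\norme I$, obtained by normalizing the current flow $I(x,y)=g(x,y)\setp{V(x)-V(y)}$ that is already built from the voltage $V$. Since the infimum in the statement will turn out to be attained at $j$, the theorem reduces to two computations: $E(j)={\cal R}(a,b)$, and $E(\theta)\ge E(j)$ for every unit flow $\theta$.

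First I would check that $j$ is a genuine unit flow. Antisymmetry is immediate from the defining formula for $I$. For Kirchhoff's node law I would use that $V$ is harmonic at every interior vertex, i.e. $V(v)=\Psachant{\tau_a<\tau_b}{X_0=v}=\sum_y P(v,y)V(y)$ for $v\notin\{a,b\}$ (condition on the first step and apply the Markov property). Then
$$\sum_y I(v,y)=\pi(v)\sum_y P(v,y)\setp{V(v)-V(y)}=\pi(v)\setp{V(v)-\sum_y P(v,y)V(y)}=0,$$
using $\sum_y P(v,y)=1$ and harmonicity; dividing by $\norme I$ leaves the net out-flow untouched at interior vertices and rescales the source strength to $1$.

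Next I would identify $E(j)$ with the effective resistance. Writing $j(x,y)/g(x,y)=\setp{V(x)-V(y)}/\norme I$ gives
$$E(j)=\frac1{\norme I^{\,2}}\sum_{x,y}\frac{I(x,y)^2}{g(x,y)}=\frac1{\norme I^{\,2}}\sum_{x,y}I(x,y)\setp{V(x)-V(y)},$$
and a discrete summation by parts rewrites $\sum_{x,y}I(x,y)\setp{V(x)-V(y)}$ as a sum of $V(x)$ weighted by the net current out of $x$. By the previous step this net current vanishes except at $a$ and $b$, where it equals $\pm\norme I$, so the expression collapses to $\norme I\setp{V(a)-V(b)}$ and hence $E(j)=\setp{V(a)-V(b)}/\norme I={\cal R}(a,b)$.

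The core step, and the one I expect to be the main obstacle, is optimality. For an arbitrary unit flow $\theta$ I would set $d:=\theta-j$, which is a flow of \emph{strength zero}, and expand $E(\theta)=E(j)+2\langle j,d\rangle+E(d)$ for the bilinear form $\langle j,d\rangle=\sum_{x,y}j(x,y)d(x,y)/g(x,y)$. The crux is that the cross term vanishes: because $j/g$ is the gradient $\setp{V(x)-V(y)}/\norme I$, the same summation by parts turns $\langle j,d\rangle$ into a sum of $V(x)$ times the net flow of $d$ out of $x$, which is $0$ at every vertex (interior vertices by Kirchhoff, and $a,b$ because $d$ has strength $0$). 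Therefore $E(\theta)=E(j)+E(d)\ge E(j)$, with equality precisely when $d\equiv0$, i.e. $\theta=j$. This shows the infimum is attained and equals $E(j)={\cal R}(a,b)$, which is the claim. The only delicate point is keeping the orientation and edge–double-counting convention in $\sum_{x,y}$ consistent across the definitions of $\norme{\cdot}$, ${\cal R}$ and $E$, so that the multiplicative constants line up and the clean identity $E(j)={\cal R}(a,b)$ holds rather than an off-by-two variant.
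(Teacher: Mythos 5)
Your proof is correct, and it is exactly the classical argument for Thomson's Principle: normalize the current flow to a unit flow $j=I/\norme{I}$ (checking Kirchhoff's law via harmonicity of $V$ at interior vertices), identify $E(j)={\cal R}(a,b)$ by summation by parts, and prove optimality through the decomposition $\theta=j+d$ with the cross term vanishing because $d$ has zero net flow at every vertex. Note that the paper itself gives no proof of this statement --- it is stated as background and deferred to the cited reference (\cite{LPW2006}), where the proof is precisely the one you wrote; your closing caveat is also the right one to flag, since the clean identity $E(j)={\cal R}(a,b)$ requires reading $\sum_{x,y}$ as a sum over unordered edges (each edge counted once), the ordered-pair reading introducing a factor of $2$ throughout.
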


\subsection{Application to our case}

If a neighbor does not exist (because the vertex is on the border), the edge points to the vertex itself.
Each edge has the same probability $\frac14$.

An invariant probability is the uniform probability $\pi:x\mapsto\frac1{nm}$.
Our Markov chain is reversible with respect to this probability.
So we can use the definitions of section~\ref{markov-chain-theory}:
$$\forall x,y\qquad g(x,y)=\frac1{4nm}\ .$$
Thanks to Thomson's principle, it is sufficient to construct a flow from $a$ to
$b$ to get an upper bound on ${\cal R}(a,b)$.
If $a=(i,j)$ and $b=(n,m)$, we consider the trivial (and far from optimal) flow
$$\begin{cases}
  \theta((k,j),\;(k+1,j)):=1 & \text{if } i\le k<n\\
  \theta((n,k),\;(n,k+1)):=1 & \text{if } j\le k<m\\
  \theta(x,y):=0 & \text{elsewhere.}\\
\end{cases}$$
That is, a flow on a single path from $a$ to $b$.

${E}(\theta)\le (n+m)4nm$. We conclude with the commute time identity that the average hitting time is $O(n^3)$,
assuming w.l.o.g. that $n\ge m$.

\end{document}